\providecommand{\tabularnewline}{\\}
\numberwithin{equation}{section}
\numberwithin{figure}{section}
\theoremstyle{plain}
\newtheorem{assumption}{\protect\assumptionname}
\theoremstyle{plain}
\newtheorem{lem}{\protect\lemmaname}
\theoremstyle{plain}
\newtheorem{thm}{\protect\theoremname}
\theoremstyle{remark}
\newtheorem{rem}{\protect\remarkname}
\providecommand{\assumptionname}{Assumption}
\providecommand{\lemmaname}{Lemma}
\providecommand{\remarkname}{Remark}
\providecommand{\theoremname}{Theorem}
\begin{document}
\title{\vspace{-8ex}Identification of Time-Varying Transformation Models
with Fixed Effects, with an Application to Unobserved Heterogeneity
in Resource Shares}
\author{Irene Botosaru, Chris Muris, and Krishna Pendakur\vspace{-4ex}\thanks{This paper partly supersedes and replaces an earlier working paper
by Botosaru and Muris entitled \textquotedblleft Binarization for
panel models with fixed effects\textquotedblright . Email: botosari@mcmaster.ca,
muerisc@mcmaster.ca, pendakur@sfu.ca. We would like to thank Stéphane
Bonhomme, P.A. Chiappori, Ian Crawford, Iván Fernández-Val, Dalia
Ghanem, Bryan Graham, Hiro Kasahara, Shakeeb Khan, Valerie Lechene,
Arthur Lewbel, Corrine Low, Maurizio Mazzocco, Jack Porter and Elie
Tamer for useful discussions and suggestions; conference participants
at the ASSA 2018, Berkeley-Stanford Jamboree 2017, IWEEE 2018, Seattle-Vancouver
Econometrics Conference 2016 and 2017, and seminar participants at
Cornell, CREST, Harvard, Oxford, Penn State, Queen's University, Rotterdam,
Tilburg, University of Bristol, TSE, UBC, UCL, UC Davis, UCLA, UPenn
and Vanderbilt for comments and suggestions. We thank Shirleen Manzur
for her research assistance on this project, and gratefully acknowledge
financial support from the Social Sciences and Humanities Research
Council of Canada under grants IDG 430-2015-00073, IG 435-2020-0026,
and IG 435-2021-0778, and from the Institute for Advanced Study at
the University of Bristol. This research was undertaken, in part,
thanks to funding from the Canada Research Chairs Program.\protect \\
\textbf{JEL codes}: C14; C23; C41. \textbf{Keywords}: panel data,
fixed effects, incidental parameter, time-varying transformation,
collective household, full commitment, resource shares, gender inequality}\vspace{-8ex}}
\maketitle
\begin{abstract}
We provide new results showing identification of a large class of
fixed-$T$ panel models, where the response variable is an unknown,
weakly monotone, time-varying transformation of a latent linear index
of fixed effects, regressors, and an error term drawn from an unknown
stationary distribution. Our results identify the transformation,
the coefficient on regressors, and features of the distribution of
the fixed effects. We then develop a full-commitment intertemporal
collective household model, where the implied quantity demand equations
are time-varying functions of a linear index. The fixed effects in
this index equal logged \textit{resource shares}, defined as the fractions
of household expenditure enjoyed by each household member. Using Bangladeshi
data, we show that women's resource shares decline with household
budgets and that half of the variation in women's resource shares
is due to unobserved household-level heterogeneity. 
\end{abstract}

\section{Introduction\label{sec:introduction}}

{} 

We provide sufficient conditions for point-identification in a general
class of fixed-$T$ time-varying nonlinear panel models. This class
has a response variable equal to a time-varying weakly monotonic transformation
of a linear index of regressors, fixed effects, and error terms. In
contrast, almost all existing results for this class of models require
time-invariance of the transformation. Our theorems imply novel identification
results for time-varying versions of some commonly used models. 

Specifically, we consider models of the type:
\begin{equation}
Y_{it}=h_{t}(\alpha_{i}+X_{it}\beta-U_{it}),i=1,\dots,n,\,t=1,\dots,T\geq2,\label{eq:FELT}
\end{equation}
where $h_{t}$ is an unknown weakly monotonic transformation, $\alpha_{i}$
are unobserved individual-specific effects, $X_{it}$ is a vector
of strictly exogenous observed explanatory variables with coefficients
$\beta$, and $U_{it}$ is an error term drawn from a stationary distribution. 

Our setting has the following four features:
\begin{enumerate}
\item fixed-$T$ ---in fact $T=2$ is sufficient for our results;
\item $h_{t}$ may be \textbf{time-varying} and \textbf{weakly} monotonic;
\item $h_{t}$ and the distribution of $U_{it}$ may be nonparametric;
\item and, the fixed effects $\alpha_{i}$ are \textbf{unrestricted.}
\end{enumerate}
The panel model literature is very large, with many papers showing
identification of $\beta$, and sometimes of $h_{t}$, in models with
two or three of the four features above (see, e.g., \citet{ArellanoBonhomme2012}
for an overview). Ours is the first paper to study identification
of models with these four features, all of which are demanded by our
microeconomic model and empirical application. We refer to (\ref{eq:FELT})
together with the four features above as the \emph{fixed-effects linear
transformation} (FELT) model, following the terminology in \citet{Abrevaya1999}.%

Our contribution is at least three-fold. First, we provide sufficient
conditions for the identification of $h_{t}$ and $\beta$ for models
in the FELT class. Second, for the case where $h_{t}$ is strictly
monotonic, we provide results on the identification of some features
of the distribution of fixed effects. Third, we provide a full-commitment
intertemporal collective household model whose implied quantity demand
equations lie in the FELT class, and estimate the model using Bangladeshi
panel data.

We provide sufficient conditions for the point-identification of $h_{t}$
and $\beta$ in (\ref{eq:FELT}) for two non-nested cases: one where
$U_{it}$ is drawn from an arbitrary but stationary distribution,
and one where it is drawn from the logistic distribution. Our approach
provides a systematic way of analyzing models nested in the FELT class.
An immediate implication of our work is that extensions to time-varying
and/or nonparametric counterparts of well-known models can now be
shown to be identified. For example, the following models are all
nested in the FELT class and are now identified: ordered choice with
time-varying thresholds; censored regression with time-varying censoring
points; the multiple-spell generalized accelerated failure-time (GAFT)
duration model; and the Box-Cox panel model with time-varying parameters.%

For the case where $h_{t}$ is strictly monotonic we provide additional
identification results for the conditional mean (up to location) and
the conditional variance of the distribution of fixed effects. To
the best of our knowledge, there are no results in the fixed-$T$,
fixed-effects, nonlinear panel literature that cover this aspect of
model identification. 

Our theoretical work builds on established results from \citet{DoksumGasko1990}
and \citet{Chen2002} who show that cross-sectional transformation
models can in general be \emph{binarized} into a set of related binary
choice models. In this paper, we show that we can binarize in a panel
setting when the transformation $h_{t}$ varies in an arbitrary way
over time. 

The key innovation underlying our theoretical work is \emph{time-varying
}binarization. For an arbitrary threshold $y_{t}$, we define the
following binary random variable:
\begin{align}
D_{t}\left(y_{t}\right) & \equiv1\left\{ Y_{t}\geq y_{t}\right\} \label{eq:defineDt-1}\\
 & =1\left\{ U_{t}\leq\alpha+X_{t}\beta-h_{t}^{-}\left(y_{t}\right)\right\} ,\nonumber 
\end{align}
where $h_{t}^{-}$ is the generalized inverse of $h_{t}$ and where
the equality follows from specification \eqref{eq:FELT} and weak
monotonicity of $h_{t}$. Varying the threshold $y_{t}$ in (\ref{eq:defineDt-1})
across time periods converts a FELT model into a collection of binary
choice models. This conversion is what we call time-varying binarization.\footnote{\citet{Muris} uses time-varying binarization in a panel ordered logit
model where the transformation is time-invariant and parametric.} However, to the best of our knowledge, previous papers that used
binarization in a panel setting, e.g., \citet{Chen2010}, \citet{ChernozhukovWP2018},
restrict the thresholds to be equal across time periods, i.e. $y_{t}=y$
for all $t$. Essentially, it is the relaxation of this restriction
that enables us to show identification of time-varying transformations
$h_{t}$.

Once a FELT model has been converted into a collection of binary choice
models via time-varying binarization, we invoke \citet{Chamberlain1980}
and \citet{Manski1987} to show identification of the resulting binary
choice models. We then re-assemble the identified models to obtain
identification of $h_{t}$ and $\beta$ in the FELT model. Omitting
the fact that any FELT model can be transformed into \textit{many}
binary choice models obtains identification of $\beta$ only. 

We provide a full-commitment intertemporal collective household model
that implies time-varying quantity demand functions of the form (\ref{eq:FELT}).
The nonlinear quantity demand functions in our model are time-varying
because quantity demands depend on prices, and prices are unobserved
but vary across the waves of our panel. The fixed effects in our economic
model have a clear interpretation: they are logged \textit{resource
shares}, defined as the fractions of total household expenditure consumed
by each of its members. Resource shares are not directly observable,
but are important because unequal resource shares across household
members signal within-household inequality. Our econometric results
then imply that the quantity demand equations, the mean (up to location),
and the variance of the resource shares are all identified. This is
useful because it allows us to characterize the variation of, or inequality
in, resource shares. 

Our intertemporal collective household model ---along with the identification
results above--- permits the use of short panel data to study resource
shares within households. Previous cross-sectional methods to measure
resource shares have imposed the identifying restriction that resource
shares do not vary with household budgets, e.g., \citet{dlp13}, and
have relied on a random-effects model for unobserved heterogeneity
in resource shares, e.g., \citet{Dunbarlp19}. Our framework relaxes
both these restrictions, and shows how panel data can enrich the study
of resource shares. Ours are the first empirical estimates of a full-commitment
collective household model in a short panel, and we demonstrate the
importance of accounting for both observed and unobserved heterogeneity
in women\textquoteright s resource shares.

Using a two-period Bangladeshi panel dataset on household expenditures,
we show that less than half of the variation in women's resource shares
can be explained by observed covariates. This means that there is
much more inequality within households than would be suggested by
variation in observed factors. We also find that women's resource
shares are negatively correlated with household budgets. This means
that women in poorer households have larger resource shares, and are
therefore less poor than their household budgets would suggest. Further,
that resource shares are found to covary with household budgets suggests
caution in using the cross-sectional identifying restriction of independence
suggested by \citet{dlp13}.

Section \ref{sec:lit-review} provides a review of the related literature.
In Sections \ref{sec:fixed-effects-linear-transformation-model} and
\ref{sec:FE}, we provide our main identification results. Section
\ref{sec:MicroModel} introduces our collective household model, which
uses data described in Section \ref{sec:Data}. We present estimates
of women's resource shares in Section \ref{sec:ResourceShares}. All
proofs, descriptive statistics for the data, additional robustness
results and estimation details are in the Appendix.

\section{Existing Literature\label{sec:lit-review}}

We describe in detail how we connect to, first, the econometrics literature,
and, second, to the literature on collective household models.

\subsection{Fixed-$T$ Nonlinear Panel Models with Fixed Effects}

The literature on panel models is vast. Despite the vastness of this
literature, we are not aware of any paper that delivers all four features
discussed above. Below, we highlight the key differences between our
approach and approaches in the literature that \textit{lack} one or
more of our key features.

\textbf{Feature 1: We show identification in fixed-$T$ panel models}.
The incidental parameter problem occurs in fixed-effect panel models
with a finite number of time periods, see \citet{NeymanScott1948}.%
{} Ours is a fixed-$T$ approach, with $n\to\infty$ and works even
if $T=2$. A large literature analyzes the behavior of fixed effects
procedures under the alternative assumption that the number of time
periods goes to infinity, e.g., \citet{HahnNewey}, \citet{ArellanoHahn2007},
\citet{ArellanoBonhomme2009}, \citet{FernandezVal2009}, \citet{FernandezValWeidner2016},
and \citet{ChernozhukovWP2018}. In this setting, it is generally
possible to identify each fixed effect, and consequently, the distribution
of fixed effects. In our model, we show identification of specific
moments of this distribution even though the number of time periods
is fixed.

\textbf{Feature 2: We allow weakly monotonic time-varying transformation}
$h_{t}$. \citet{Abrevaya1999} provides a consistent estimator of
$\beta$ (the ``leapfrog'' estimator) in the FELT model under the
restriction that the transformations are \emph{strictly} monotonic.\footnote{\citet{SChen2010b} in Remark 6 discusses a version of \citet{Abrevaya1999}
that allows for some weak monotonicity due to censoring. He focuses
on $\beta$ and does not discuss identification of $h_{t}$, although
his Remark 1 sketches an approach for estimation of $h_{t}=h$ for
all $t$.} %
{} %
{} \citet{Abrevaya2000} considers a model that allows for \emph{weak}
monotonicity but restricts the transformations to be time-invariant
(and allows for nonseparable errors). He provides a consistent estimator
for $\beta$ only.\footnote{\citet{ChernozhukovWP2018} uses a distribution regression technique
that is closely related to our binarization approach, and consequently
accommodates weakly monotonic transformations. However, theirs is
a large-$T$ setting.} A literature on duration models also considers time-invariant transformations
that are weakly monotonic due to censoring, e.g., \citet{Lee2008},
\citet{Khan2007}, \citet{Chen2010,Chen2010a,Chen2012,Chen2012ET},
and \citet{Chen2012}; we review this below.

A more recent literature has focused on identification issues in a
class of panel models with potentially \emph{non-monotonic} but time-invariant
structural functions (or strong assumptions on how those functions
vary over time), e.g., \citet{HoderleinWhite2012}, \citet{ChernValHahnNewey2013},
\citet{ChernValHoderleinHolzNewey2015}. These papers focus on (partial)
identification of partial effects, and the approaches employed there
preclude identification of the structural function(s) or of the distribution
of fixed effects.

\textbf{Feature 3: We allow nonparametric transformations and nonparametric
errors}. \citet{Bonhomme2012} proposes a general-purpose likelihood-based
approach to obtain identification for models with parametric $h_{t}$
and parametric $U_{it}$, even allowing for dynamics. %
{} Our model requires strictly exogenous regressors, precluding many
dynamic structures. But, our Theorems 1 and 2 apply even when $h_{t}$
is nonparametric, $U_{it}$ is nonparametric, or both are nonparametric. 

The setting with parametric transformations and parametric errors
covers many models previously shown to be identified, including the
\emph{time-invariant} fixed-effect panel versions of: binary choice
(e.g., \citet{Rasch1960}, \citet{Chamberlain1980}, \citet{Magnac2004},
and \citet{Chamberlain2010}); the linear regression model with normal
errors; and the ordered logit model (e.g., \citet{das_panel_1999,Baetschmann2015,Muris}).
Application of our results immediately shows identification of the
time-varying versions of these models. This result is novel for the
ordered logit model, where our results imply identification of time-varying
thresholds.

Parametric transformation models with nonparametric errors are widely
studied, starting with \citet{Manski1987} for the binary choice fixed
effects model. (\citet{Aristodemou2020} provides partial identification
results for ordered choice with nonparametric errors.) Parametric
panel data censored regression models also fit into our framework,
and were studied intensively starting with \citet{Honore1992}, see
also, e.g., \citet{Charlier2000}, \citet{HonoreKyriazidou2000Ecta},
\citet{Chen2012ET}. These papers show identification of $\beta$
for the linear model with time-invariant censoring and nonparametric
errors. In this context, our results show identification of models
that were not previously known to be identified. In particular, the
model is identified even if the transformation is nonparametric (as
opposed to linear or Box-Cox) and time-varying and/or where the censoring
cutoff is time-varying.\footnote{Many papers in the literature on censored regression have focused
on endogeneity. For example, \citet{HonoreHu2004} allow for endogenous
covariates, and \citet{KhanPonomarevaTamer2016} study the case of
endogenous censoring cutoffs. Our results do not cover the case of
endogenous regressors or cutoffs. \citet{HorowitzLee2004} and \citet{Lee2008}
consider dependent censoring, where the censoring cutoff depends on
observed covariates and the error term follows a parametric distribution.
We do not consider dependent censoring.}

Duration models can be recast as transformation models with nonparametric
transformations (see \citet{Ridder1990}). Consequently, the large
literature on identification of duration models is related to our
work. 

Consider the multiple-spell mixed proportional hazards (MPH) model
with spell-specific baseline hazard, analyzed in \citet{Honore1993}.\footnote{\citet{HorowitzLee2004} show identification of this model under the
restriction that the baseline hazard is the same for all spells, analogous
to time-invariant $h_{t}$. \citet{Chen2010a} considers the same
model, but relaxes the restriction that errors are type 1 EV, but
shows identification of only the common parameter vector $\beta$.} This model can be obtained from FELT by letting (i) $h_{t}^{-1}(v)=\log\left\{ \int_{0}^{v}\lambda_{0t}\left(u\right)du\right\} $,
where $\lambda_{0t}$ is the baseline hazard for spell $t$, (ii)
$\alpha_{i}$ and $U_{it}$ are independent across $t$, and (iii)
$U_{it}$ is independent of $X_{i}$ and distributed as EV1. \citet{Honore1993}
derives sufficient conditions for the identification of this model
(\citet{Lee2008} provides consistent estimators under other parametric
error distributions). Our theorems immediately provide the novel result
that this model is identified when the error terms are drawn from
a nonparametric distribution.

Consider the single-spell generalized accelerated failure time (GAFT)
model introduced by \citet{Ridder1990} (see also \citet{vandenBerg2001})
that has non EV1 errors, and is consistent with a duration model.
Just like the MPH model, it can be extended to a multiple-spell setting,
see, e.g., \citet{Evdokimov2011}. \citet{Abrevaya1999} shows that
$\beta$ in the multiple-spell GAFT model is consistently estimated.
However, he does not show identification of the transformation $h_{t}$,
which can be seen as dual to identification of the spell-specific
baseline hazard function.\footnote{\citet{Khan2007} establish consistency of an estimator of the regression
coefficient in GAFT under the restriction that the baseline hazard
is the same for all spells, analogous to time-invariant $h_{t}$. } \citet{Evdokimov2011} considers identification of a related version
of the multiple-spell GAFT with spell-specific baseline hazard, but
requires continuity of $\alpha_{i}$ and at least 3 spells ($T\geq3)$.
Our results show identification of both $\beta$ and $h_{t}$ in the
multiple-spell GAFT model, imposing no restrictions on $\alpha_{i}$
and requiring just 2 spells ($T=2$).

\textbf{Feature 4: We allow for unrestricted fixed effects.} A related
literature considers restrictions on the joint distribution of $\left(\alpha_{i},X_{i1},...,X_{iT}\right)$.
For example, \citet{AltonjiMatzkin2005} impose exchangeability on
this joint distribution, and \citet{BesterHansen2009} restricts the
dependence of $\alpha_{i}$ on $\left(X_{i1},...,X_{iT}\right)$ to
be finite-dimensional. In our model this joint distribution is unrestricted.

A further group of papers establishes identification of panel models,
including the distribution of $\alpha_{i}$, by using techniques from
the measurement error literature that: (i) impose various assumptions
on $\alpha_{i}$, such as full support and/or continuous distribution;
(ii) assume serial independence of $U_{it}$; and (iii) restrict the
conditional distribution of $\left(\alpha_{i},U_{i1},\dots,U_{iT}\right)$
conditional on $\left(X_{i1},...,X_{iT}\right)$, see, \citet{Evdokimov2010},
\citet{Evdokimov2011}, \citet{Wilhelm2015}, and \citet{Freyberger18}.
In contrast, our results on the identification of the conditional
variance of $\alpha_{i}$ do not require (i). All our other results,
including identification of the dependence of $\alpha_{i}$ on observed
covariates, are free of assumptions like (i), (ii) and (iii).

Finally, special regressor approaches (see the review in \citet{Lewbel2014})
have identifying power in transformation models with fixed effects.
They require the availability of a continuous variable that is independent
of the fixed effects. With such a variable, one can show identification
of transformation models in the cross-sectional case (\citet{ChiapporiKomunjerKristensen2015})
and in the panel data case, e.g., \citet{HonoreLewbel2002}, \citet{AiGan2010},
\citet{lewbelyang2016}, \citet{chen_exclusion_2019}. Our results
do not invoke a special regressor. Further, we are not aware of any
special regressor-based papers that identify time-varying transformations
or the distribution of fixed effects.\footnote{We conjecture that the existence of a special regressor would be sufficient
to identify time-varying nonparametric transformations, and, with
strict monotonicity, the distribution of fixed effects. However, we
think that a setting with completely unrestricted fixed effects is
useful in a variety of empirical applications, including our own.} 

\textbf{We also show identification of some aspects of the distribution
of fixed effects.} Correlated random effects models identify the distribution
of individual effects, but at the cost of restricting their distribution.
To our knowledge, we are the first to show identification of moments
of this distribution in a nonlinear panel model, when that distribution
is unrestricted. 

We show the practical importance of these innovations in our empirical
work below. Identification of the conditional mean and variance of
the distribution of fixed effects in a context with time-varying transformations
is essential to our investigation of women's access to household resources
in rural Bangladeshi.

\subsection{Microeconomic Models of Collective Households}

Dating back at least to \citet{becker62}, collective household models
are those in which the household is characterized as a collection
of individuals, each of whom has a well-defined objective function,
and who interact to generate household level decisions such as consumption
expenditures. \emph{Efficient collective household} models are those
in which the individuals in the household are assumed to reach the
(household) Pareto frontier. \citet{Chiappori88,Chiappori92} showed
that, like in earlier results in general equilibrium theory, the assumption
of Pareto efficiency is very strong. Essentially, it implies that
the household can be seen as maximizing a weighted sum of individual
utilities, where the weights are called \emph{Pareto weights}. This
in turn implies that the household-level allocation problem is observationally
equivalent to a decentralized, person-level, allocation problem.

In this decentralized allocation, each household member is assigned
a \emph{shadow budget}. They then demand a vector of consumption quantities
given their preferences and their personal shadow budget, and the
household purchases the sum of these demanded quantities (adjusted
for shareability/economies of scale and for public goods within the
household). For the special case of an assignable good, which is demanded
by a single known household member, the household purchases exactly
what that person demands given their shadow budget.

\emph{Resource shares}, defined as the ratio of each person's shadow
budget to the overall household budget, are useful measures of individual
consumption expenditures. If there is intra-household inequality,
these resource shares would be unequal. Consequently, standard per-capita
calculations (assigning equal resource shares to all household members)
would yield invalid measures of individual consumption and poverty
(see, e.g., \citet{dlp13}). In this paper, we show identification
of the conditional mean (up to location) and conditional variance
of the distribution of resource shares in a panel data context.%

There are many ways to identify resource shares with cross-sectional
data. A common identifying assumption (used by, e.g., \citet{dlp13})
is that resource shares are independent of household budgets in a
cross-sectional sense. This identifying restriction has been used
to estimate resource shares, within-household inequality and individual-level
poverty in many countries (\citet{dlp13} and \citet{Dunbarlp19}
in Malawi; \citet{Bargain2014} in Cote d'Ivoire; \citet{Calvi} in
India; \citet{DeVreyer2016} in Senegal; \citet{Bargain2018} in Bangladesh).
In our model, we show identification of the response of the conditional
mean of resource shares to observed covariates, even if resource shares
are correlated with (lifetime) household budgets. Consequently, we
can test this identifying restriction. 

\citet{dlp13} does not accommodate unobserved heterogeneity in resource
shares. Two newer papers, \citet{cKim17} and \citet{DunbarLewbelPendakur2019}
consider identification in cross-sectional data with unobserved household-level
heterogeneity in resource shares. Like \citet{cKim17} and Theorem
1 in \citet{DunbarLewbelPendakur2019}, our work investigates identification
of the distribution of resource shares up to an unknown normalization.
However, the results in those papers are of the random effects type.
That is, the authors impose the restriction that the conditional distribution
of resource shares is independent of the household budget. In this
paper, we consider a panel data setting with household-level unobserved
heterogeneity in resource shares, without any restriction on the distribution
of resource shares. Further, we show sufficient conditions for identification
of the conditional variance of (logged) resource shares.

The literature cited above considered one-period micro-economic models.
But, many interesting questions about households, and the distribution
of resources within households, are dynamic in nature. For example:
how do household members share risk?; how do household investments
relate to individual consumption?; how can we use information from
multiple time periods to estimate resource shares when there is unobserved
household-level heterogeneity?

\citet{ChiapporiMazzocco2017} review the literature on collective
household models in an intertemporal setting. These models generally
come in two flavours--limited commitment or full commitment--depending
on whether or not the household can commit to a \emph{permanent} Pareto
weight at the moment of household formation. Full-commitment models
answer \textquotedblleft yes\textquotedblright , and limited-commitment
model answer \textquotedblleft no\textquotedblright . Limited commitment
models have commanded the most theoretical attention. Much effort
has gone into testing the full-commitment model against a limited-commitment
alternative, e.g., \citet{Ligon98,Mazzocco07,MazzocoRuizYamaguchi14,Voena15}.

Fewer papers study the identification of Pareto weights or resource
shares in an intertemporal context. \citet{LiseYamada19} use a long
panel of Japanese household consumption data to estimate how Pareto
weights (which are dual to resource shares) depend on observed covariates
and on unanticipated shocks. Their model does not allow for correlated
unobserved heterogeneity, and they require many observations for each
household (so as to see when Pareto weights change). They find evidence
that Pareto weights do change, so that the full-commitment model does
not hold in Japan. 

Our model is one of full-commitment, allows for correlated unobserved
household-level heterogeneity, and is identified in a short (e.g.,
2 period) panel. So we provide a complement to the approach of \citet{LiseYamada19}
for cases where the data are not rich enough to estimate a limited-commitment
model. The key cases here are where the panel is too short to see
changes in Pareto weights, or where the observed covariates leave
too much room for unobserved heterogeneity. The cost of covering these
cases is the assumption of full commitment.

Full commitment models are more restrictive, but may be useful nonetheless.
\citet{ChiapporiMazzocco2017} write \textquotedblleft In more traditional
environments (such as rural societies in many developing countries),
renegotiation may be less frequent since the cost of divorce is relatively
high, threats of ending a marriage are therefore less credible, and
noncooperation is less appealing since households members are bound
to spend a lifetime together.\textquotedblright{} We use a full commitment
setting to estimate resource shares for rural Bangladeshi households.

In this paper, we adapt the general full-commitment framework of \citet{ChiapporiMazzocco2017}
to the scale economy and sharing model of \citet{BrowningChiapporiLewbel13}.
Then, like \citet{dlp13} do in their cross-sectional analysis, we
identify resource shares on the basis of household-level demand functions
for assignable goods%
. In our general model, observed household-level quantity demand functions
for assignable goods depend on resource shares, and resource shares
depend a time-invariant factor (a fixed effect) representing the initial
(and permanent) Pareto weights of household members.

We then provide a parametric form for utility functions that results
in demand equations for assignable goods that are nonlinear in shadow
budgets, and have logged shadow budgets that are linear in logged
household budgets and a fixed effect. Further, demand equations are
time-varying because prices vary over time. Such demand equations
fall into the FELT class, and are therefore identified in our short-panel
setting. The parametric model also gives meaning to the fixed effect:
it equals a logged resource share, so its distribution is an economically
interesting object. So, our micro-economic theory demands an econometric
model that allows for time-varying transformations and that can identify
moments of the conditional distribution of fixed effects.

\section{Identification\label{sec:fixed-effects-linear-transformation-model}}

Dropping the $i$ subscript, let $Y=\left(Y_{1},...,Y_{T}\right)^{'}$
and $X=\left(X_{1}',...,X_{T}'\right)^{'}$. We then rewrite FELT
as a latent variable model:
\begin{equation}
\begin{aligned}Y_{t} & =h_{t}(Y_{t}^{*})=h_{t}\left(\alpha+X_{t}\beta-U_{t}\right),\\
U_{t} & |\alpha,X\sim F_{t}(u|\alpha,X),
\end{aligned}
\label{eq:model}
\end{equation}
and denote the supports of $Y_{t},\,Y_{t}^{*},\,X_{t}$ by $\mathcal{Y}\subseteq\mathbb{R}$,
$\mathcal{\mathcal{\mathcal{Y}}^{\textnormal{*}}=\mathbb{R}},$ and
$\mathcal{X}\subseteq\mathbb{R}^{K}$, respectively.\footnote{The supports may be indexed by $t$.}

We provide sufficient conditions for identification of $\left(\beta,h_{t}\right)$.\footnote{The results in this section were previously circulated in the working
paper \citet{BotosaruMuris}. That paper also introduces four estimators,
depending on whether the outcome variable is discrete or continuous,
and on whether the stationary distribution of the error term is nonparametric
or logistic. In this paper, we use a GMM estimator instead.} We consider two non-nested cases. The first case does not impose
parametric restrictions on the distribution of $U_{t}$, requiring
only that it is conditionally stationary. In this case, the idiosyncratic
errors may be serially dependent and heteroskedastic. The second case
assumes that $U_{t},\,t=1,\cdots,T$, are serially independent, standard
logistic, and strictly exogenous. %
{} It may appear that the second case is a special case of the first.
However, the second case requires weaker assumptions on the distribution
of the regressors (c.f. Assumption \ref{A3} below) while imposing
stronger assumptions on the error distribution. For both cases, we
maintain the assumption below:
\begin{assumption}
\label{A1}{[}Weak monotonicity{]} For each $t$, the transformation
$h_{t}:\mathcal{\mathcal{Y^{\mbox{*}}}}\rightarrow\mathcal{Y}$ is
unknown, non-decreasing, right continuous, and non-constant.
\end{assumption}
This assumption allows us to work with the generalized inverse $h_{t}^{-}:\mathcal{Y}\rightarrow\mathcal{Y}^{*}$,
defined as:
\[
h_{t}^{-}\left(y\right)\equiv\inf\left\{ y^{*}\in\mathcal{\mathcal{Y}^{\textnormal{*}}}:\mbox{ }y\leq h_{t}\left(y^{*}\right)\right\} ,
\]
with the convention that $\text{inf}\left(\emptyset\right)=\text{inf}\left(\mathcal{Y}\right)$. 

\subsection{Identification strategy: time-varying binarization}

\label{subsec:TVB}

It is well-known (see e.g. \citet{DoksumGasko1990} and \citet{Chen2002})
that cross-sectional transformation models can be \emph{binarized}
into a set of binary choice models. Binarization has been used previously
in panel settings (see e.g. \citet{Chen2010}, \citet{ChernozhukovWP2018}),
but those approaches have restricted the threshold to be equal across
time periods. An exception is \citet{Muris}, who uses time-varying
thresholds in a panel ordered logit model with a time-invariant and
parametric transformation.

We now describe \emph{time-varying binarization}. For an arbitrary
threshold $y_{t}\in\underline{\mathcal{Y}}\equiv\mathcal{Y}\backslash\inf\mathcal{Y}$,\footnote{We use $\underline{\mathcal{Y}}$ instead of $\mathcal{Y}$ because
$D_{t}\left(\inf\mathcal{Y}\right)=1$ almost surely for all $t$.} we define the following binary random variable:
\begin{align}
D_{t}\left(y_{t}\right) & \equiv1\left\{ Y_{t}\geq y_{t}\right\} \label{eq:defineDt-1-1}\\
 & =1\left\{ U_{t}\leq\alpha+X_{t}\beta-h_{t}^{-}\left(y_{t}\right)\right\} ,\nonumber 
\end{align}
where the equality follows from specification \eqref{eq:FELT} and
weak monotonicity of $h_{t}$. Varying the threshold $y_{t}$ in (\ref{eq:defineDt-1-1})
across time periods converts any FELT model into a collection of binary
choice models.

Two time periods are sufficient for our identification results, so
we let $T=2$ in what follows. Consider the vector of binary variables
\[
D\left(y_{1},y_{2}\right)\equiv\left(D_{1}\left(y_{1}\right),D_{2}\left(y_{2}\right)\right),
\]
for any two points $(y_{1},y_{2})\in\mathcal{\underline{\mathcal{Y}}}^{2}$.
Our identification strategy for $\left(\beta,h_{1},h_{2}\right)$
is based on the observation that $D\left(y_{1},y_{2}\right)$ follows
a fixed effects binary choice model for \textit{any} $(y_{1},y_{2})\in\mathcal{\underline{\mathcal{Y}}}^{2}$.
This result is summarized in Lemma \ref{lem:Manskimedsign} below.

The proof of identification proceeds in three steps. First, we show
identification of $\beta$ and of $h_{2}^{-}\left(y_{2}\right)-h_{1}^{-}\left(y_{1}\right)$
for arbitrary $\left(y_{1},y_{2}\right)\in\mathcal{\underline{\mathcal{Y}}}^{2}$.
In the resulting binary choice model relating $D_{t}\left(y\right)$
to $X$, the difference $h_{2}^{-}\left(y_{2}\right)-h_{1}^{-}\left(y_{1}\right)$
is the coefficient on the differenced time dummy, and $\beta$ is
the regression coefficient on $X_{2}-X_{1}$. For a given binary choice
model, identification of $\beta$ and of $h_{2}^{-}\left(y_{2}\right)-h_{1}^{-}\left(y_{1}\right)$
follows \citet{Manski1987} for the nonparametric version of FELT,
and \citet{Chamberlain2010} for the logistic version. This result
is summarized in Theorem \ref{thm:ID1} below.

Second, we show that varying the pair $\left(y_{1},y_{2}\right)$
over $\underline{\mathcal{Y}}^{2}$ obtains identification of 
\[
\left\{ h_{2}^{-}\left(y_{2}\right)-h_{1}^{-}\left(y_{1}\right),\,\left(y_{1},y_{2}\right)\in\underline{\mathcal{Y}}^{2}\right\} .
\]

Third, we show that identification of this set of differences obtains
identification of $h_{1}$ and $h_{2}$ under a normalization assumption
on $h_{1}^{-}$.%
{} This result is presented in Theorem \ref{thm:identify-h}.

Figures 3.1 and 3.2 illustrate the intuition behind our identification
strategy for two arbitrary functions, $h_{1}$ and $h_{2}$, both
accommodated by FELT. The line with kinks and a flat part represents
an arbitrary function $h_{1}$, while the solid curve represents an
arbitrary function $h_{2}$. Consider Figure 3.1. Pick a $y_{1}\in\mathcal{\underline{\mathcal{Y}}}$
on the vertical axis. For all $y\leq y_{1}$, $h_{1}\left(y\right)$
gets mapped to zero, while for all $y>y_{1}$, it gets mapped to one.
Now pick a $y_{2}\in\underline{\mathcal{Y}}$. For all $y\leq y_{2}$,
$h_{2}\left(y\right)$ gets mapped to zero, while for all $y>y_{2}$,
it gets mapped to one. This gives rise to a fixed effects binary choice
model for $\left(D_{1}\left(y_{1}\right),D_{2}\left(y_{2}\right)\right)$,
also plotted in the figure as the grey solid lines. Our first result
in Theorem \ref{thm:ID1} identifies the difference $h_{1}^{-}\left(y_{1}\right)-h_{2}^{-}\left(y_{2}\right)$
at arbitrary points $\left(y_{1},y_{2}\right)$, as well as the coefficient
$\beta$. It is clear that normalizing $h_{1}^{-}\left(.\right)$
at an arbitrary point identifies the function $h_{2}^{-}\left(y_{2}\right)$
at an arbitrary point $y_{2}$. This is captured in Figure 3.2. There,
for an arbitrary $y_{0}\text{, }h_{1}^{-}\left(y_{0}\right)=0.$ Then,
as $y_{2}$ is arbitrary, Figure 3.2 shows that moving $y_{2}$ on
its support traces out the generalized inverse $h_{2}^{-}$ on its
domain. Theorem \ref{thm:identify-h} wraps up this argument by showing
that $h_{1}$ and $h_{2}$ are identified from their generalized inverses.%

\begin{figure}
\begin{minipage}[c][1\totalheight][t]{0.45\textwidth}%
\includegraphics[scale=0.6]{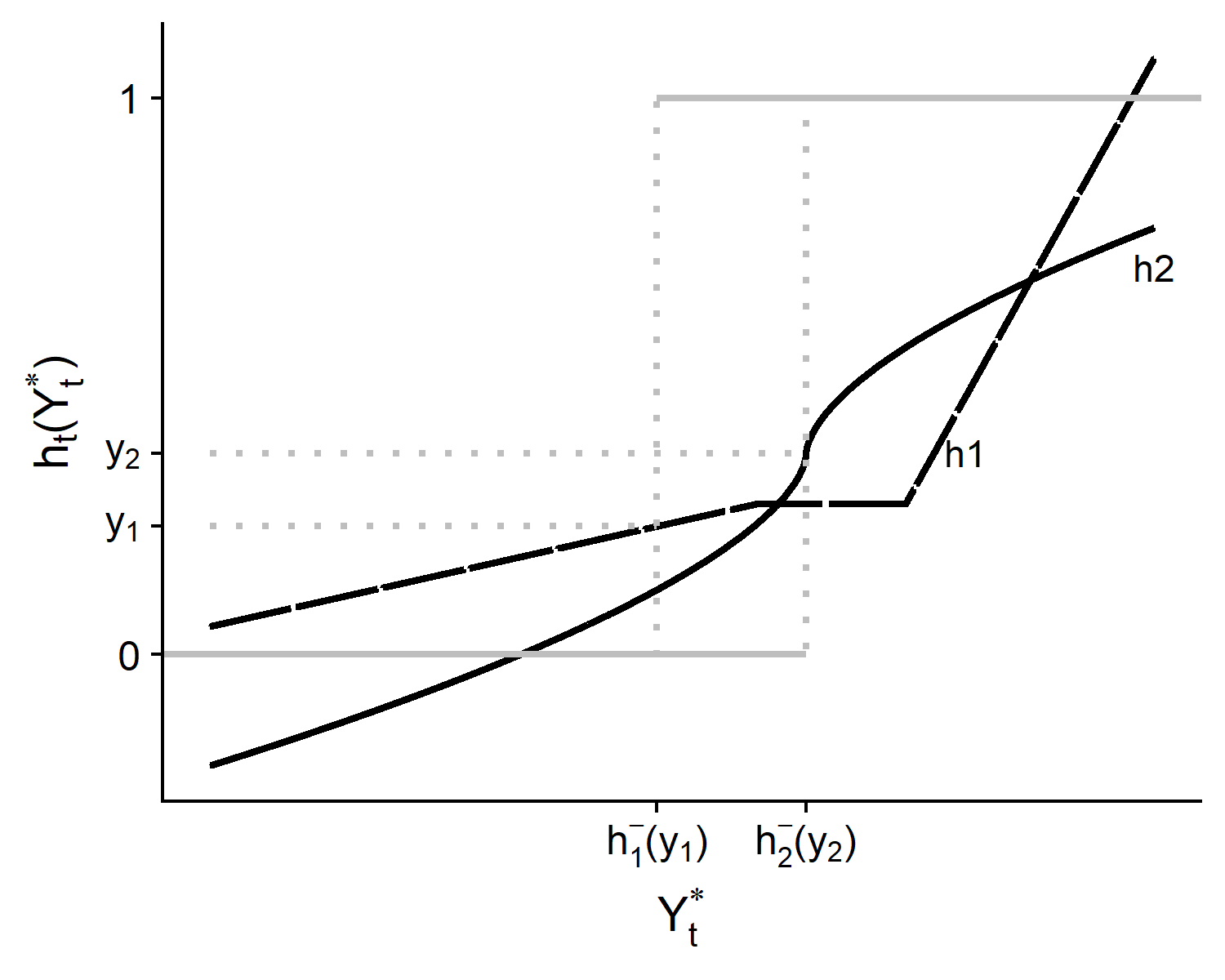}

\caption{FELT functions $h_{1}$ and $h_{2}$.}
\end{minipage}\hfill{}%
\begin{minipage}[c][1\totalheight][t]{0.45\textwidth}%
\includegraphics[scale=0.6]{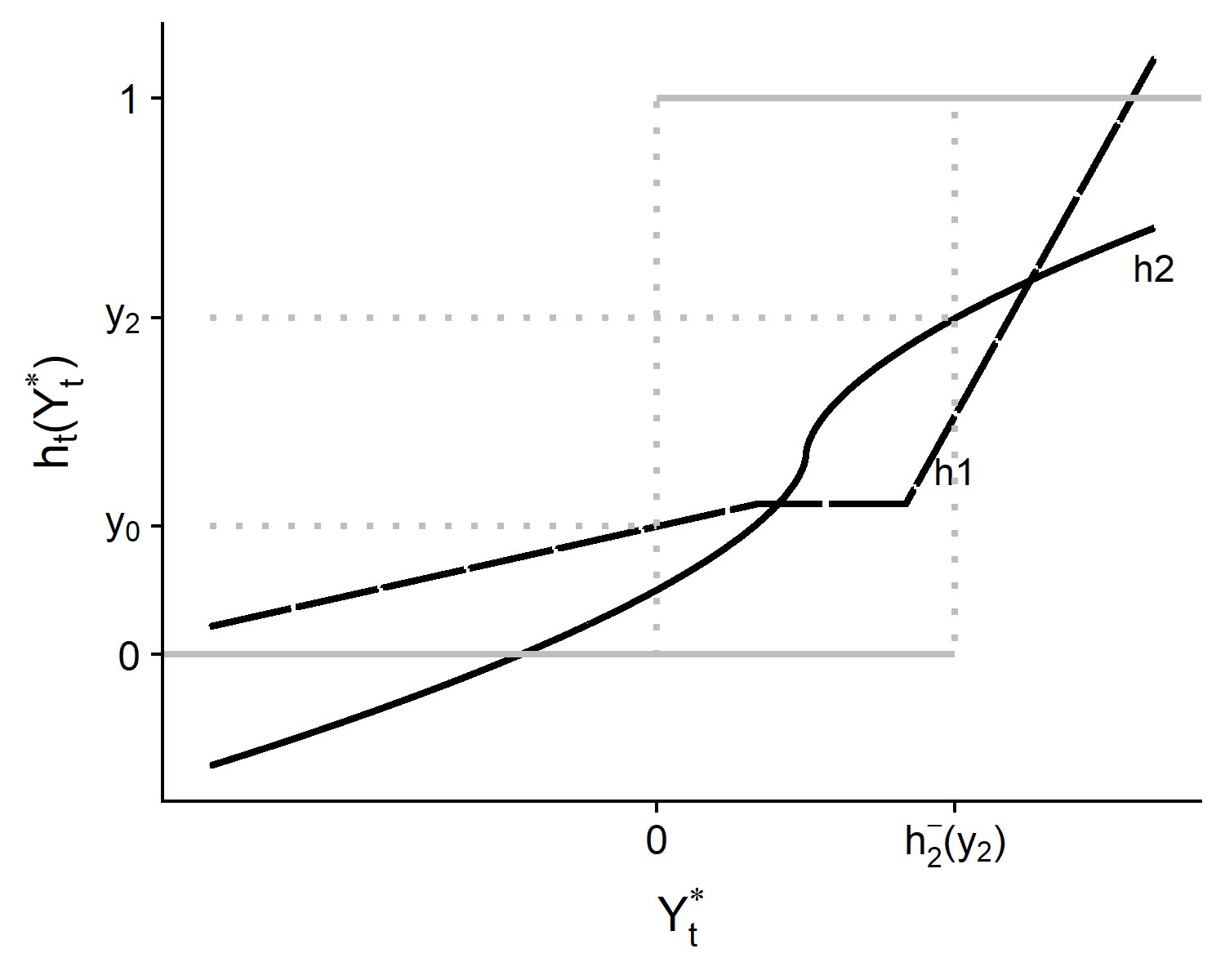}

\caption{Normalization and tracing.}
\end{minipage}
\end{figure}

\subsection{Nonparametric errors\label{subsec:id-Manski}}

In this section, we provide nonparametric identification results for
$\left(\beta,h_{1},h_{2}\right)$. Parts of our identification proof
build on \citet{Manski1987}, who in turn builds on \citet{manski_maximum_1975,Manski1985}.
\begin{assumption}
\label{A2} {[}Error terms{]}

(i) $F_{1}(u|\alpha,X)=F_{2}(u|\alpha,X)\equiv F(u|\alpha,X)$ for
all $(\alpha,X)$;

(ii) The support of $F(\left.u\right|\alpha,X)$ is $\mathbb{R}$
for all $\left(\alpha,X\right)$.
\end{assumption}
Assumption \ref{A2} places no parametric distributional restrictions
on the distribution of $U_{it}$ and allows the stochastic errors
$U_{it}$ to be correlated across time. The first part of the assumption,
\ref{A2}(i), is a stationarity assumption, requiring time-invariance
of the distribution of the error terms conditional on the trajectory
of the observed regressors and on the unobserved heterogeneity. This
assumption excludes lagged dependent variables as covariates. Additionally,
as noted by, e.g., \citet{Chamberlain2010}, although it allows for
heteroskedasticity, it restricts the relationship between the observed
regressors and $U_{it}$ by requiring that even when $x_{1}\not=x_{2}$,
$U_{1}$ and $U_{2}$ have equal skedasticities. This type of stationarity
assumption is common in linear and nonlinear panel models, e.g., \citet{ChernValHahnNewey2013}
and references therein.

Assumption \ref{A2}(ii) requires full support of the error terms.
It guarantees that, for any pair $\left(y_{1},y_{2}\right)\in\underline{\mathcal{Y}}^{2}$,
the probability of being a \textit{switcher }is positive\textit{.}
In our context, being a switcher refers to the event $D_{1}\left(y_{1}\right)+D_{2}\left(y_{2}\right)=1$,
so that Assumption \ref{A2} guarantees that $P\left(D_{1}\left(y_{1}\right)+D_{2}\left(y_{2}\right)=1\right)>0$.
This assumption is similar to Assumption 1 in \citet{Manski1987}.

Let $\Delta X\equiv X_{2}-X_{1}\,$ and for an arbitrary pair $(y_{1},y_{2})\in\mathcal{\underline{Y}}^{2},$
define 
\begin{align}
\gamma\left(y_{1},y_{2}\right) & \equiv h_{2}^{-}\left(y_{2}\right)-h_{1}^{-}\left(y_{1}\right).\label{eq:definegamma}
\end{align}

\begin{lem}
\label{lem:Manskimedsign}Suppose that $\left(Y,X\right)$ follows
the model in \eqref{eq:model}. Let Assumptions \ref{A1} and \ref{A2}
hold. Then for all $\left(y_{1},y_{2}\right)\in\underline{\mathcal{Y}}^{2},$
\begin{equation}
\operatorname{med}\left(D_{2}\left(y_{2}\right)-D_{1}\left(y_{1}\right)|X,\,D_{1}\left(y_{1}\right)+D_{2}\left(y_{2}\right)=1\right)=\operatorname{sgn}\left(\Delta X\beta-\gamma\left(y_{1},y_{2}\right)\right).\label{eq:med_sgn}
\end{equation}
\end{lem}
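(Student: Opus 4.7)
The plan is to reduce the claim to a one-dimensional comparison of two CDF values at points that differ by exactly $\Delta X\beta - \gamma(y_1,y_2)$, and then to convert that comparison into a statement about the conditional median of the binary variable $D_2(y_2)-D_1(y_1)$ on the switcher event. I will write $a_t \equiv \alpha + X_t\beta - h_t^{-}(y_t)$ for $t=1,2$, so that by the binarized representation in \eqref{eq:defineDt-1-1} we have $D_t(y_t)=1\{U_t \leq a_t\}$.

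First, I would condition on $(\alpha,X)$ and write out the probabilities of the two switcher types:
\begin{align*}
P\bigl(D_2(y_2)-D_1(y_1)=1 \mid \alpha,X\bigr) &= P(U_1>a_1,\,U_2\leq a_2 \mid \alpha,X),\\
P\bigl(D_2(y_2)-D_1(y_1)=-1\mid \alpha,X\bigr) &= P(U_1\leq a_1,\,U_2> a_2\mid \alpha,X).
\end{align*}
Subtracting, the joint probabilities $P(U_1\leq a_1, U_2\leq a_2\mid\alpha,X)$ cancel, leaving
\[
P\bigl(D_2-D_1=1\mid \alpha,X\bigr)-P\bigl(D_2-D_1=-1\mid\alpha,X\bigr) = F_2(a_2\mid\alpha,X)-F_1(a_1\mid\alpha,X).
\]
By the stationarity part of Assumption~\ref{A2}(i), $F_1=F_2\equiv F$ and this equals $F(a_2\mid\alpha,X)-F(a_1\mid\alpha,X)$. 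Since $F(\cdot\mid\alpha,X)$ is a nondecreasing CDF with full support on $\mathbb{R}$ by Assumption~\ref{A2}(ii), the sign of this difference equals the sign of $a_2-a_1$, and
\[
a_2-a_1 \;=\; (X_2-X_1)\beta-\bigl(h_2^{-}(y_2)-h_1^{-}(y_1)\bigr) \;=\; \Delta X\beta-\gamma(y_1,y_2),
\]
which does not depend on $\alpha$.

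Second, I would integrate over $\alpha$ given $X$. Because the sign of the integrand is constant in $\alpha$, the same sign comparison survives:
\[
\operatorname{sgn}\!\bigl(P(D_2-D_1=1\mid X)-P(D_2-D_1=-1\mid X)\bigr)=\operatorname{sgn}\!\bigl(\Delta X\beta-\gamma(y_1,y_2)\bigr).
\]
Since the event $\{D_1+D_2=1\}$ is exactly $\{D_2-D_1\in\{-1,1\}\}$, dividing by $P(D_1+D_2=1\mid X)$ (positive by the full-support assumption, so that both switcher types have positive probability) gives the same sign for $P(D_2-D_1=1\mid X,D_1+D_2=1)-\tfrac12$.

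Third, I would translate this into a median statement. Conditional on $(X,D_1+D_2=1)$, the variable $D_2-D_1$ is supported on $\{-1,1\}$; its conditional median equals $+1$, $-1$, or $0$ precisely when the conditional probability of $+1$ exceeds, is less than, or equals $\tfrac12$, respectively. Combined with the sign equivalence above, this yields exactly \eqref{eq:med_sgn}.

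The main obstacle is not the algebra but the careful use of Assumption~\ref{A2}: stationarity of the conditional distribution of $U_t$ given $(\alpha,X)$ is what makes the joint-probability cancellation collapse to a difference of identical CDFs evaluated at $a_1$ and $a_2$, and full support is what (i) ensures the switcher event has positive probability so the conditioning is well-defined and (ii) converts the CDF inequality into a strict sign statement about $a_2-a_1$. Everything else is bookkeeping with the generalized inverse $h_t^{-}$ and the definition of the conditional median for a two-point distribution.
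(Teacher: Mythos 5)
Your proposal is correct and follows essentially the same route as the paper's proof: both reduce the conditional median on the switcher event to the sign of $P(D_2(y_2)=1\mid\cdot)-P(D_1(y_1)=1\mid\cdot)$ via cancellation of the joint $(1,1)$ probability, invoke stationarity and full support to equate that sign with $\operatorname{sgn}(a_2-a_1)=\operatorname{sgn}(\Delta X\beta-\gamma(y_1,y_2))$, and integrate over $\alpha$ given $X$. The only difference is cosmetic ordering --- you condition on $(\alpha,X)$ from the outset, whereas the paper manipulates the $X$-conditional probabilities first and brings in $\alpha$ only at the final step.
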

\begin{proof}
The proof builds on \citet{Manski1985,Manski1987}, and is presented
in Appendix \ref{subsec:Proof-of-Lemma_mediansign}.
\end{proof}
Let $W\equiv(\Delta X,-1)'$ and $\theta\left(y_{1},y_{2}\right)\equiv\left(\beta,\gamma\left(y_{1},y_{2}\right)\right),$
so that $\left(\ref{eq:med_sgn}\right)$ can be written as
\[
\text{med}\left(D_{2}\left(y_{2}\right)-D_{1}\left(y_{1}\right)|X,D_{1}\left(y_{1}\right)+D_{2}\left(y_{2}\right)=1\right)=\text{sgn}\left(W\theta\left(y_{1},y_{2}\right)\right).
\]
For identification of $\theta\left(y_{1},y_{2}\right)$ we impose
the following additional assumptions.
\begin{assumption}
\label{A3}{[}Covariates{]}

(i) The distribution of $\Delta X$ is such that at least one component
of $\Delta X$ has positive Lebesgue density on $\mathbb{R}$ conditional
on all the other components of $\Delta X$ with probability one. The
corresponding component of $\beta$ is non-zero.

(ii) The support of $W$ is not contained in any proper linear subspace
of $\mathbb{R}^{K+1}$.
\end{assumption}
Assumption \ref{A3}(i) requires that the change in one of the regressors
be continuously distributed conditional on the other components. Assumption
\ref{A3}(ii) is a full rank assumption. These assumptions are standard
in the binary choice literature concerned with point identification
of the parameters.

Assumption \ref{A3} resembles Assumption 2 in \citet{Manski1987},
the difference being that our assumption concerns $W$, which includes
a constant that captures a time trend. The presence of this constant
requires sufficient variation in $X_{t}$ over time. No linear combination
of the components of $X_{t}$ can equal the time trend.
\begin{assumption}
\label{ass:normalization}{[}Normalization-$\beta${]} For any $(y_{1},y_{2})\in\mathcal{\underline{Y}}^{2},$
$\theta\left(y_{1},y_{2}\right)\in\Theta=\mathcal{B}\times\mathbb{R},$
where $\mathcal{B}=\left\{ \beta:\beta\in\mathbb{R}^{K},\|\beta\|=1\right\} $.
\end{assumption}
Assumption \ref{ass:normalization} imposes a normalization on $\beta$,
namely that the norm of the regression coefficient equals 1. Scale
normalizations are standard in the binary choice literature, and are
necessary for point identification when the distribution of the error
terms is not parameterized. Normalizing $\beta$ (instead of $\theta$)
avoids a normalization that would otherwise depend on the choice of
$\left(y_{1},y_{2}\right)$. In this way, the scale of $\beta$ remains
constant across different choices of $\left(y_{1},y_{2}\right)$.
Alternatively, one can normalize the coefficient on the continuous
covariate (cf. Assumption \ref{A3}(i)) to be equal to one. In our
economic model in Section \ref{sec:MicroModel} the latter assumption
holds automatically.\footnote{There are models with sufficient structure on the transformation $h_{t}$
where identification is possible without a normalization on the regression
coefficient. Examples include the linear regression model, the censored
linear regression model in \citet{Honore1992}, and the interval-censored
regression model in \citet{abrevaya_interval_2020}.}
\begin{thm}
\label{thm:ID1}Suppose that $\left(Y,X\right)$ follows the model
in \eqref{eq:model}, and let the distribution of $\left(Y,X\right)$
be observed. Let Assumptions \ref{A1}, \ref{A2}, \ref{A3}, and
\ref{ass:normalization} hold. Then, for an arbitrary pair $\left(y_{1},y_{2}\right)\in\underline{\mathcal{Y}}^{2},$
$\theta\left(y_{1},y_{2}\right)$ is identified.
\end{thm}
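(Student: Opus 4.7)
For a fixed pair $(y_1, y_2) \in \underline{\mathcal{Y}}^2$, Lemma \ref{lem:Manskimedsign} converts the FELT model for the subpopulation of switchers into a binary median-regression of $D_2(y_2) - D_1(y_1)$ on $W = (\Delta X, -1)$ with coefficient $\theta_0 \equiv \theta(y_1, y_2) = (\beta, \gamma(y_1, y_2))$. The plan is therefore to apply a maximum-score identification argument in the spirit of \citet{Manski1987} to this induced binary-choice model. Because the conditional median on the left-hand side of \eqref{eq:med_sgn} is a functional of the observed joint distribution of $(Y, X)$, the map $w \mapsto \operatorname{sgn}(w \theta_0)$ is identified on $\operatorname{supp}(W)$, and the task reduces to showing that $\theta_0$ is the unique element of $\Theta = \mathcal{B} \times \mathbb{R}$ consistent with this sign function.

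Suppose for contradiction that some $\theta^* \in \Theta$ satisfies $\operatorname{sgn}(W \theta^*) = \operatorname{sgn}(W \theta_0)$ almost surely but $\theta^* \neq \theta_0$. Under Assumption \ref{A3}(i), some component (say the $k$-th) of $\Delta X$ has positive Lebesgue density on $\mathbb{R}$ conditional on the remaining components, and $\beta_{0k} \neq 0$. Conditional on those remaining components, $w \mapsto w \theta_0$ is affine and non-constant in $\Delta X_k$, with a single sign-change threshold determined by the other entries of $\theta_0$. For $\operatorname{sgn}(w \theta^*)$ to match this behavior almost surely, $\theta^*$ must likewise have a non-zero $k$-th coordinate and an identical sign-change threshold for almost every realization of the remaining components of $W$. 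Combined with the full-rank condition of Assumption \ref{A3}(ii), this forces $\theta^* = c\, \theta_0$ for some $c \in \mathbb{R}$. The scalar must be positive, since $c < 0$ would reverse every sign of $W \theta_0$, which contradicts $W \theta_0 \neq 0$ on a set of positive measure guaranteed by Assumption \ref{A3}(ii) and $\|\beta_0\| = 1$. Finally, the unit-norm constraint $\|\beta^*\| = \|\beta\| = 1$ from Assumption \ref{ass:normalization} pins down $c = 1$, so $\theta^* = \theta_0$.

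The main obstacle is the transition from an almost-sure equality of sign functions to exact equality of finite-dimensional parameters. This step relies essentially on both pieces of Assumption \ref{A3}: part (i) supplies continuous variation in a coordinate of $\Delta X$ with non-zero coefficient, so that the location of the sign-change of $W \theta$ is observable and pins down the ratios of the other components of $\theta_0$ to $\beta_{0k}$; part (ii) rules out degenerate directions of $W$ along which these ratios could not be read off. The normalization in Assumption \ref{ass:normalization} then removes the residual scalar ambiguity. Since $(y_1, y_2) \in \underline{\mathcal{Y}}^2$ was arbitrary, the argument identifies $\theta(y_1, y_2)$ at every such pair, which both delivers $\beta$ (identified once, as the first block of $\theta$) and traces out the family of intercepts $\gamma(y_1, y_2)$ that Theorem \ref{thm:identify-h} will later exploit to recover $h_1$ and $h_2$.
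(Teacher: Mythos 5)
Your proposal is correct and takes essentially the same route as the paper: both reduce the problem via Lemma \ref{lem:Manskimedsign} to a Manski-type maximum-score identification argument for the binarized model, using the continuous regressor of Assumption \ref{A3}(i) to pin down the sign-change threshold, the full-rank condition \ref{A3}(ii) to conclude that any observationally equivalent $\theta^{*}$ is a scalar multiple of $\theta_{0}$, and the unit-norm normalization to fix the scalar. The only difference is presentational --- you argue the contrapositive (a.s. sign agreement forces $\theta^{*}=\theta_{0}$), whereas the paper directly verifies Manski's (1985) condition $P\left(W\theta<0\leq W\theta_{0}\right)+P\left(W\theta_{0}<0\leq W\theta\right)>0$ via a case analysis on the sign of $\beta_{K}$.
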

\begin{proof}
The proof proceeds by showing that FELT can be converted into a binary
choice model for an arbitrary pair $\left(y_{1},y_{2}\right),$ and
then builds on Theorem 1 in \citet{Manski1987}, which in turn uses
results in \citet{Manski1985}. See Appendix \ref{subsec:Proof-of-Theorem_manski_ID}.
\end{proof}
So far, we have identified the regression coefficient $\beta$ and
the difference in the generalized inverses at arbitrary pairs $\left(y_{1},\,y_{2}\right)$.
We consider now identification of the functions $h_{1}$ and $h_{2}$
on $\underline{\mathcal{Y}}$.
\begin{assumption}
\label{ass:normalizeh}{[}Normalization-$h_{1}${]} For some $y_{0}\in\mathcal{\underline{\mathcal{Y}}},$
$h_{1}^{-}\left(y_{0}\right)=0$.
\end{assumption}
Such a normalization is standard in transformation models, see, e.g.,
\citet{Horowitz1996}. Without this normalization, all identification
results hold up to $h_{1}^{-}\left(y_{0}\right)$. We normalize the
function in the first time period only, imposing no restrictions on
the function in the second period beyond that of weak monotonicity
(cf. Assumption \ref{A1}). In Section \ref{sec:FE}, we show that
this normalization assumption is not necessary for our results on
the identification of the conditional mean or of the conditional variance
of the fixed effects.
\begin{thm}
\label{thm:identify-h}Suppose that $\left(Y,X\right)$ follows the
model in \eqref{eq:model}, and let the distribution of $\left(Y,X\right)$
be observed. Under Assumptions \ref{A1}, \ref{A2}, \ref{A3}, \ref{ass:normalization},
and \ref{ass:normalizeh}, the transformations $h_{1}$ and $h_{2}$
are identified.
\end{thm}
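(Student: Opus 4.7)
The plan is to leverage Theorem \ref{thm:ID1} to identify the generalized inverses $h_1^{-}$ and $h_2^{-}$ pointwise on $\underline{\mathcal{Y}}$, and then to invoke the Galois-type duality between non-decreasing right-continuous functions and their generalized inverses to recover $h_1$ and $h_2$ themselves on all of $\mathcal{Y}^{*}=\mathbb{R}$.

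First, fix the anchor point $y_{0}\in\underline{\mathcal{Y}}$ from Assumption \ref{ass:normalizeh}, which gives $h_{1}^{-}(y_{0})=0$. For any $y_{2}\in\underline{\mathcal{Y}}$, Theorem \ref{thm:ID1} applied to the pair $(y_{0},y_{2})$ identifies
\[
\gamma(y_{0},y_{2})=h_{2}^{-}(y_{2})-h_{1}^{-}(y_{0})=h_{2}^{-}(y_{2}),
\]
so $h_{2}^{-}$ is identified on $\underline{\mathcal{Y}}$. In particular $h_{2}^{-}(y_{0})$ is identified. Next, for any $y_{1}\in\underline{\mathcal{Y}}$, Theorem \ref{thm:ID1} applied to the pair $(y_{1},y_{0})$ identifies $\gamma(y_{1},y_{0})=h_{2}^{-}(y_{0})-h_{1}^{-}(y_{1})$, and solving yields $h_{1}^{-}(y_{1})=h_{2}^{-}(y_{0})-\gamma(y_{1},y_{0})$. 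Hence $h_{1}^{-}$ is identified on $\underline{\mathcal{Y}}$ as well.

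The remaining and main technical step is to pass from the identified generalized inverses to the original transformations. Because each $h_{t}$ is non-decreasing and right-continuous by Assumption \ref{A1}, the standard duality for generalized inverses gives
\[
h_{t}^{-}(y)\leq y^{*}\iff y\leq h_{t}(y^{*})\qquad\text{for every }y\in\mathcal{Y},\,y^{*}\in\mathbb{R}.
\]
Indeed, the right-continuity of $h_{t}$ ensures that the set $\{z:y\leq h_{t}(z)\}$ is closed from the left, so its infimum $h_{t}^{-}(y)$ belongs to it, and monotonicity delivers the biconditional. Reading the equivalence backwards yields the reconstruction formula
\[
h_{t}(y^{*})=\sup\bigl\{y\in\underline{\mathcal{Y}}:h_{t}^{-}(y)\leq y^{*}\bigr\},\qquad y^{*}\in\mathbb{R},
\]
together with the boundary value $h_{t}(y^{*})=\inf\mathcal{Y}$ whenever the set on the right is empty. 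Since the right-hand side depends only on identified quantities, $h_{1}$ and $h_{2}$ are identified on all of $\mathbb{R}$.

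The step I expect to be the most delicate is verifying the duality carefully at the boundary and at points where $h_{t}$ has flats or jumps: flats in $h_{t}$ correspond to jumps in $h_{t}^{-}$, and vice versa, so one must be attentive to whether to take $\sup$ versus $\max$ and to whether the supremum is attained. Right-continuity of $h_{t}$ (Assumption \ref{A1}) is exactly what is needed to pin down the correct version of the inverse relation and hence of the reconstruction, and the restriction of $y$ to $\underline{\mathcal{Y}}$ (removing $\inf\mathcal{Y}$, where $D_{t}(\inf\mathcal{Y})\equiv1$ offers no identifying variation) is harmless because a non-decreasing right-continuous function on $\mathbb{R}$ is uniquely determined by its values on any set dense from above in its range.
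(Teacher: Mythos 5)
Your proposal is correct and follows essentially the same route as the paper: anchor $h_{1}^{-}(y_{0})=0$, use Theorem \ref{thm:ID1} on suitable pairs to trace out $h_{2}^{-}$ and then $h_{1}^{-}$ on $\underline{\mathcal{Y}}$ (your use of the pair $(y_{1},y_{0})$ is algebraically equivalent to the paper's differencing of $\gamma(y_{0},y_{2})$ and $\gamma(y_{1},y_{2})$), and finally recover $h_{t}$ from $h_{t}^{-}$. Your explicit Galois-duality reconstruction $h_{t}(y^{*})=\sup\{y\in\underline{\mathcal{Y}}:h_{t}^{-}(y)\leq y^{*}\}$ simply makes precise the paper's brief claim that knowledge of the pre-image (via $h_{t}^{-}$) is equivalent to knowledge of the function itself.
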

\begin{proof}
The proof proceeds by identifying the generalized inverses of monotone
functions, which obtains identification of the pre-images of $h_{1}$
and $h_{2}$. This obtains identification of the functions themselves.
See Appendix \ref{subsec:Proof-of-Theorem_identificationfunctions}.
\end{proof}

\subsection{Logit errors\label{subsec:Logistic-FELT_model}}

In this section, we show identification of $\left(\beta,h_{1},h_{2}\right)$
when the error terms are assumed to follow the standard logistic distribution.
The logistic case is not nested in the nonparametric case. In particular,
when the errors are logistic, we do not require a continuous regressor.
However, we require conditional serial independence of the error terms.\footnote{See \citet{Chamberlain2010} and \citet{Magnac2004} for more details
about identification under nonparametric versus logistic errors in
the panel data binary choice context.}
\begin{assumption}
\label{ass:logit}{[}Logit{]} (i) $F_{1}(u|\alpha,X)=F_{2}$$\left(u|\alpha,X\right)=\Lambda\left(u\right)=\frac{\exp\left(u\right)}{1+\exp\left(u\right)}$,
and $U_{1}$ and $U_{2}$ are independent; (ii) $E(W'W)$ is invertible.
\end{assumption}
Assumption \ref{ass:logit}(i) strengthens Assumption \ref{A2} by
requiring the errors to follow the standard logistic distribution
and to be serially independent. Note that one consequence of this
assumption, which specifies the variance of the error terms to be
equal to 1, is to eliminate the need to normalize $\beta$. On the
other hand, Assumption \ref{ass:logit}(ii) imposes weaker restrictions
on the observed covariates relative to Assumption \ref{A3}, since
it does not require the existence of a continuous covariate. Sufficient
variation in $\Delta X$ is sufficient to obtain identification of
the vector $\beta$ when the error terms follow the standard logistic
distribution.
\begin{thm}
\label{thm:logitID}Suppose that $\left(Y,X\right)$ follow the model
in \eqref{eq:model}, and let the distribution of $\left(Y,X\right)$
be observed. Let Assumptions \ref{A1} and \ref{ass:logit} hold.
Then, for an arbitrary pair $\left(y_{1},y_{2}\right)\in\mathcal{\underline{Y}}^{2},$
$\theta\left(y_{1},y_{2}\right)$ is identified. Additionally, letting
Assumption \ref{ass:normalizeh} hold, then the transformations $h_{1}\left(\cdot\right)$
and $h_{2}\left(\cdot\right)$ are identified.
\end{thm}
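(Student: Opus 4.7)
The plan is to mirror the three-step structure used for the nonparametric case (Theorems \ref{thm:ID1} and \ref{thm:identify-h}): binarize at an arbitrary pair $(y_1,y_2)$, identify $\theta(y_1,y_2)$ inside the resulting conditional logit, and then trace out the transformations.

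First, I would carry out time-varying binarization exactly as in \eqref{eq:defineDt-1-1}. Under Assumption \ref{A1} and model \eqref{eq:model}, together with Assumption \ref{ass:logit}(i),
\[
P\bigl(D_t(y_t)=1\mid \alpha,X\bigr) = \Lambda\bigl(\alpha + X_t\beta - h_t^{-}(y_t)\bigr),\qquad t=1,2,
\]
and, because $U_1,U_2$ are conditionally independent given $(\alpha,X)$, the pair $(D_1(y_1),D_2(y_2))$ follows a two-period fixed-effects binary logit model with common slope $\beta$, individual effect $\alpha$, and time-specific intercepts $-h_t^{-}(y_t)$.

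Second, I would invoke the Rasch/Chamberlain conditional logit argument. The sum $D_1(y_1)+D_2(y_2)$ is sufficient for $\alpha$ in this binary logit, so conditioning on the switching event eliminates $\alpha$ and yields
\[
P\bigl(D_2(y_2)=1\,\big|\,X,\,D_1(y_1)+D_2(y_2)=1\bigr) \;=\; \Lambda\bigl(\Delta X\beta - \gamma(y_1,y_2)\bigr) \;=\; \Lambda\bigl(W\,\theta(y_1,y_2)\bigr),
\]
with $W=(\Delta X,-1)'$ and $\gamma(y_1,y_2)$ as in \eqref{eq:definegamma}. This is a standard binary logit in the observable $W$ with parameter $\theta(y_1,y_2)$, and identification is then immediate from \citet{Chamberlain2010}: the invertibility of $E(W'W)$ imposed in Assumption \ref{ass:logit}(ii) provides the rank condition needed to identify $\theta(y_1,y_2)$, and the logistic full-support property ensures that the switching event has positive probability, so the conditional probability above is well-defined. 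This establishes identification of $\theta(y_1,y_2)=(\beta,\gamma(y_1,y_2))$ for an arbitrary pair $(y_1,y_2)\in\underline{\mathcal{Y}}^{2}$, which is the first conclusion of the theorem.

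Third, to obtain identification of $h_1$ and $h_2$, I would simply repeat the tracing argument of Theorem \ref{thm:identify-h} verbatim. Having identified $\gamma(y_1,y_2)=h_2^{-}(y_2)-h_1^{-}(y_1)$ for every $(y_1,y_2)\in\underline{\mathcal{Y}}^{2}$, fix $y_1=y_0$ and use Assumption \ref{ass:normalizeh} to conclude that $h_2^{-}$ is identified on $\underline{\mathcal{Y}}$; then fix any $y_2$ to identify $h_1^{-}$ on $\underline{\mathcal{Y}}$; finally recover $h_1,h_2$ from their generalized inverses as in Appendix \ref{subsec:Proof-of-Theorem_identificationfunctions}. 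The main substantive hurdle is the conditional logit step, which relies crucially on the conditional serial independence in Assumption \ref{ass:logit}(i); this is the trade-off for dropping the continuous regressor requirement of Assumption \ref{A3}(i) used in Theorem \ref{thm:ID1}.
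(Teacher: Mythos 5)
Your proposal is correct and follows essentially the same route as the paper's proof: binarize at an arbitrary pair, use conditional serial independence and the logistic form to show the switcher-conditional probability equals $\Lambda\left(W\theta\left(y_{1},y_{2}\right)\right)$ free of $\alpha$, recover $\theta\left(y_{1},y_{2}\right)$ via invertibility of $\Lambda$ and of $E(W'W)$ (the paper writes the explicit formula $\theta(y_{1},y_{2})=[E(W'W)]^{-1}E(W'\Lambda^{-1}(p(X,y_{1},y_{2})))$), and then reuse the tracing argument of Theorem \ref{thm:identify-h}. No substantive differences.
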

\begin{proof}
See Appendix \ref{subsec:Proof-of-Theorem_identificationlogit}.
\end{proof}

\section{Conditional distribution of fixed effects\label{sec:FE}}

If $(h_{1},h_{2})$ are invertible, we can use the previous identification
theorem to identify features of the distribution of the fixed effects
conditional on observed regressors. These features are the change
in the conditional mean function of $\alpha$ and the conditional
variance of $\alpha$ conditional on $X_{1},X_{2}$. These results
are relevant since in our collective household model, the fixed effects
represent the log of resource shares, and both the standard deviation
of these resource shares and the response of their conditional mean
to covariates are key parameters of interest in the empirical literature.
As this is relevant to our application, we note here that a normalization
assumption, such as \ref{ass:normalization}, on the demand function
in the first period is not necessary for these results on the resource
shares because, e.g., we only need their deviation with respect to
the mean of the fixed effects.

In this section, we provide sufficient conditions for the identification
of the change in the conditional mean function of the fixed effects,
defined as:
\begin{equation}
\mu\left(x\right)\equiv E\left[\left.\alpha\right|X=x\right],\text{ for all \ensuremath{x\in\mathcal{X},}}
\end{equation}
as well as for the conditional variance of the fixed effects. For
these results, the normalization assumption \ref{ass:normalizeh}
is not necessary. To provide intuition for this, let 
\[
c_{1}\equiv h_{1}^{-1}\left(y_{0}\right),
\]
at an arbitrary $y_{0}\in\mathcal{\underline{Y}}$ and $g_{t}\left(y\right)\equiv h_{t}^{-1}\left(y\right)-c_{1}$
for all $y\in\mathcal{\underline{Y}}$. Note that Theorem \ref{thm:ID1}
recovers 
\[
\widetilde{U}_{t}\equiv\alpha-U_{t}=h_{t}^{-1}\left(Y_{t}\right)-X_{t}\beta,
\]
up to $c_{1}$, so that the joint distribution of $\left(\widetilde{U}_{1},\widetilde{U}_{2},X_{1},X_{2}\right)$
is identified up to $c_{1}$. By placing restrictions on the distribution
of $\left(\alpha,U_{1},U_{2},X_{1},X_{2}\right)$, we can then recover
our features of interest.
\begin{thm}
\label{thm:conditional-mean-a-ID}(i) Let the assumptions of Theorem
\ref{thm:ID1} hold, and additionally assume that (4a) $\left(h_{1},h_{2}\right)$
are strictly\textbf{ }increasing, and (4b) let $m\in\mathbb{R}$ be
an unknown constant such that $E\left(\left.U_{t}\right|X=x\right)=m$,
for all $x\in\mathcal{X}$. Then, for any $x,x'\in\mathcal{X}$, the
change in the conditional mean function $\mu\left(x\right)-\mu\left(x'\right)$
is identified and given by
\[
\mu\left(x\right)-\mu\left(x^{'}\right)=E\left[\left.g_{t}\left(Y_{t}\right)-X_{t}\beta\right|X=x\right]-E\left[\left.g_{t}\left(Y_{t}\right)-X_{t}\beta\right|X=x^{'}\right].
\]
\end{thm}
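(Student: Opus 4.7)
The plan is to exploit the strict monotonicity assumption (4a) so that the generalized inverses $h_t^{-}$ become ordinary inverses $h_t^{-1}$, and then use Theorem \ref{thm:ID1} (without invoking Assumption \ref{ass:normalizeh}) to identify $\beta$ together with $g_t = h_t^{-1}(\cdot) - c_1$ for $t=1,2$. Recall from Theorem \ref{thm:ID1} that $\gamma(y_1,y_2) = h_2^{-1}(y_2) - h_1^{-1}(y_1)$ is identified for every pair $(y_1,y_2) \in \underline{\mathcal{Y}}^2$. Fixing $y_0 \in \underline{\mathcal{Y}}$ and writing $c_1 \equiv h_1^{-1}(y_0)$, the choice $y_1 = y_0$ identifies $g_2(y_2) = h_2^{-1}(y_2) - c_1 = \gamma(y_0,y_2)$, while the choice $y_2 = y_0$ combined with this identifies $g_1(y_1) = h_1^{-1}(y_1) - c_1 = g_2(y_0) - \gamma(y_1,y_0)$. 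Hence both $g_1$ and $g_2$ are identified as functions on $\underline{\mathcal{Y}}$, even though the absolute location $c_1$ is not.

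Next I would substitute the model into these identified objects. Since $h_t$ is a bijection under (4a), the equation $Y_t = h_t(\alpha + X_t\beta - U_t)$ implies $h_t^{-1}(Y_t) = \alpha + X_t\beta - U_t$. Subtracting $c_1 + X_t\beta$ on both sides yields the pointwise identity
\[
g_t(Y_t) - X_t\beta = \alpha - U_t - c_1,
\]
which holds almost surely. The right-hand side is unobserved, but the left-hand side consists entirely of identified quantities (both $\beta$ and the function $g_t$ are identified, and $(Y_t,X_t)$ is observed), so the distribution of $\alpha - U_t - c_1$ conditional on $X$ is identified.

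I would then take conditional expectations given $X=x$ and invoke Assumption (4b), $E[U_t \mid X=x] = m$, to obtain
\[
E[g_t(Y_t) - X_t\beta \mid X = x] = \mu(x) - m - c_1.
\]
Differencing across two covariate values $x$ and $x'$ causes the unknown constants $m$ and $c_1$ to cancel, delivering
\[
\mu(x) - \mu(x') = E[g_t(Y_t) - X_t\beta \mid X=x] - E[g_t(Y_t) - X_t\beta \mid X=x'],
\]
which is the claimed identification formula since the right-hand side is a functional of identified objects and the distribution of the observables.

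The substantive step is the first one: recognizing that Theorem \ref{thm:ID1} delivers the two inverse functions up to a common additive constant without imposing Assumption \ref{ass:normalizeh}, and that this common constant along with the unknown mean $m$ of the error is exactly what is differenced out when comparing $\mu(x)$ to $\mu(x')$. The remaining steps are purely algebraic. The place one must be careful is that the conclusion concerns the change $\mu(x) - \mu(x')$ rather than $\mu(x)$ itself; any attempt to pin down $\mu(x)$ in levels would require separate normalizations on both $c_1$ and $m$, neither of which is available here.
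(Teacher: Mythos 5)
Your proposal is correct and follows essentially the same route as the paper's own proof: identify $g_t=h_t^{-1}(\cdot)-c_1$ from the pairwise differences $\gamma(y_1,y_2)$ without invoking Assumption \ref{ass:normalizeh}, substitute the model to get $g_t(Y_t)-X_t\beta=\alpha-U_t-c_1$, condition on $X$, and difference across $x$ and $x'$ so that $c_1$ and $m$ cancel. The only (immaterial) difference is your choice of pairs $(y_0,y_2)$ and $(y_1,y_0)$ to recover $g_1$, where the paper uses $(y_0,y_2)$ and $(y_1,y_2)$.
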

\begin{proof}
See Appendix \ref{subsec:Proof-of-Theorem_conditionalmean}.
\end{proof}
\begin{rem}
As opposed to our main identification result in Theorem \ref{thm:identify-h},
Theorem \ref{thm:conditional-mean-a-ID} does not use a normalization
on the functions $\left(h_{1},h_{2}\right)$. If we were to impose
the normalization in Assumption \ref{ass:normalizeh}, the conditional
mean function $\mu\left(x\right)$ would be identified for all $x\in\mathcal{X}$.
This result provides justification for nonparametric regression of
$\alpha$ on observables (up to location).
\end{rem}
\begin{rem}
Under slightly weaker conditions, we can obtain the projection coefficients
of $\alpha$ on $X_{t}$. This is of interest for our empirical application.
Recall that the joint distribution of $\left(\widetilde{U}_{1},\widetilde{U}_{2},X_{1},X_{2}\right)$
is identified up to $c_{1}$. Then, assuming $\text{Cov}\left(U_{s},X_{t}\right)=0$,
we can identify the projection coefficient of $\alpha$ on $X_{t}$
from 
\[
\left[\text{Var}\left(X_{t}\right)\right]^{-1}\text{Cov}\left(\alpha,X_{t}\right)=\left[\text{Var}\left(X_{t}\right)\right]^{-1}\text{Cov}\left(\alpha-U_{s},X_{t}\right)=\left[\text{Var}\left(X_{t}\right)\right]^{-1}\text{Cov}\left(\widetilde{U}_{s},X_{t}\right).
\]
\end{rem}
Second, define the conditional variance of the fixed effects as
\begin{equation}
\sigma_{\alpha}^{2}\left(x\right)\equiv Var\left[\left.\alpha\right|X=x\right],\text{ for all \ensuremath{x\in\mathcal{X}.}}
\end{equation}
For this second result, we strengthen our assumptions to include,
among others, serial independence of the error term. This allows us
to pin the persistence in unit $i$'s time series on $\alpha_{i}$
instead of on serial dependence in the errors.
\begin{thm}
\label{thm:conditional-var-a-ID} Let the assumptions of Theorem \ref{thm:ID1}
hold and assume that (5a) $\left(h_{1},h_{2}\right)$ are strictly\textbf{
}increasing, and (5b) $Cov\left[\left.\alpha,U_{t}\right|X=x\right]=0$
for all $x\in\mathcal{X}$ and $t$, and (5c) $Cov\left[\left.U_{1},U_{2}\right|X=x\right]=0$
for all $x\in\mathcal{X}$. Then for all $x\in\mathcal{X},$ the conditional
variance function $\sigma_{\alpha}^{2}\left(x\right)$ is identified
and given by:
\[
\sigma_{\alpha}^{2}\left(x\right)=Cov\left(\left.g_{2}\left(Y_{2}\right)-X_{2}\beta,g_{1}\left(Y_{1}\right)-X_{1}\beta\right|X=x\right).
\]
\end{thm}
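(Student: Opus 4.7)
The plan is to reduce the conditional variance of $\alpha$ to a covariance of quantities that are identified from the observed distribution of $(Y,X)$, and then close the gap using the zero-covariance assumptions (5b) and (5c). Theorem \ref{thm:ID1} identifies $\beta$ and the differences $h_2^{-}(y_2)-h_1^{-}(y_1)$ over $\underline{\mathcal{Y}}^2$. Under strict monotonicity (5a) each $h_t$ has a proper inverse $h_t^{-1}$, so the function $g_t(y)\equiv h_t^{-1}(y)-c_1$ is identified for each $t$ (the unknown normalizing constant $c_1=h_1^{-1}(y_0)$ drops out of differences, which is why no normalization of $h_1$ is required here). In particular, for any $x\in\mathcal{X}$, the conditional joint distribution of $(g_1(Y_1)-X_1\beta,\, g_2(Y_2)-X_2\beta)$ given $X=x$ is identified from the data.

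The key algebraic step is to substitute the model into $g_t(Y_t)-X_t\beta$. Since $h_t$ is strictly increasing, $h_t^{-1}(Y_t)=\alpha+X_t\beta-U_t$, and therefore
\begin{equation*}
g_t(Y_t)-X_t\beta = h_t^{-1}(Y_t)-c_1-X_t\beta = \alpha - U_t - c_1.
\end{equation*}
Plugging this into the right-hand side of the target identity and using that $c_1$ is a constant,
\begin{equation*}
\operatorname{Cov}\bigl(g_2(Y_2)-X_2\beta,\, g_1(Y_1)-X_1\beta \,\big|\, X=x\bigr)
= \operatorname{Cov}\bigl(\alpha-U_2,\, \alpha-U_1 \,\big|\, X=x\bigr).
\end{equation*}
Expanding the right-hand side bilinearly yields
\begin{equation*}
\operatorname{Var}(\alpha|X=x) - \operatorname{Cov}(\alpha,U_1|X=x) - \operatorname{Cov}(\alpha,U_2|X=x) + \operatorname{Cov}(U_1,U_2|X=x).
\end{equation*}
Assumptions (5b) and (5c) kill the three cross terms, leaving $\sigma_\alpha^2(x)$, which proves the claim.

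There is no real obstacle here beyond the careful bookkeeping of what is identified up to $c_1$. The only point requiring care is that the normalization Assumption \ref{ass:normalizeh} is not needed: the unknown $c_1$ appears additively in each $g_t(Y_t)-X_t\beta$ and is annihilated by the centering inside $\operatorname{Cov}$. Strict monotonicity (5a) is essential, because without it $h_t^{-1}$ would only be a generalized inverse and the substitution $h_t^{-1}(Y_t)=\alpha+X_t\beta-U_t$ would fail; with it, the identity holds almost surely and the conclusion follows. The full argument is deferred to the Appendix.
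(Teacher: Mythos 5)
Your proof is correct and follows essentially the same route as the paper's: identify $g_t = h_t^{-1} - c_1$ from Theorem \ref{thm:ID1} (the constant $c_1$ being irrelevant inside a covariance), substitute the model to write $g_t(Y_t)-X_t\beta = \alpha - U_t - c_1$, expand the covariance bilinearly, and eliminate the three cross terms via (5b) and (5c). No differences worth noting.
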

\begin{proof}
See Appendix \ref{subsec:Proof-of-Theorem_conditionalmean}.
\end{proof}
It may be possible to obtain the entire conditional distribution of
the fixed effects under the assumption that $\left(\alpha,U_{1},U_{2}\right)$
are mutually independent by using arguments similar to those in \citet{AB_distributional_2012}.

\section{Microeconomic model\label{sec:MicroModel}}

In this section, we construct a model of an efficient full-commitment
intertemporal collective (FIC) household. We provide sufficient conditions
implying that observed household-level quantity demand equations are
time-varying functions of fixed effects that lie in the FELT class.

Essentially, %
we %
{} adapt the intertemporal collective household model of \citet{ChiapporiMazzocco2017}
to the model of household scale economies given in the static household
model of \citet{BrowningChiapporiLewbel13}. We assume efficiency,
i.e. that the household members together reach the Pareto frontier.
Consequently, our model does not account for inefficiency due to e.g.
consumption externalities or information asymmetries. %

We use subscripts $i,j,t$. Let $i=1,...,n$ index households and
assume the household has a time-invariant composition, with $N_{ij}$
members of type $j$. Let $j=m,f,c$ for men, women and children.
Let $t=1,2$. Let $z$ be a vector of time-varying household-level
demographic characteristics, and let the numbers of household members
of each type, $N_{im}$, $N_{if}$ and $N_{ic}$, be (time-invariant)
elements of $z_{it}$. Like \citet{ChiapporiMazzocco2017}, this is
a model with uncertainty, so we use the superscript $s=1,2$ to index
states in the second period only. As in \citet{ChiapporiMazzocco2017},
the use of 2 time periods and 2 states is for illustration only; the
model goes through with any finite number of states or periods. Similarly,
inclusion of a risky asset would not change the features of the model
that we use.

Indirect utility, $V_{j}(p,x,z)$, is the maximized value of utility
given a budget constraint defined by prices $p$ and budget $x$,
given characteristics $z$. Let $V_{j}$ be strictly concave in the
budget $x$. Indirect utility depends on time only through its dependence
on the budget constraint and time-varying demographics $z$. Let $v_{ijt}\equiv V_{j}(p_{t},x_{it},z_{it})$
denote the utility level of a person of type $j$ in household $i$
in period $t$. 

Household decisions are made on the basis of individual shadow budget
constraints, reflecting the economic environment within the household.
These shadow constraints are characterized by a shadow price vector
faced by all members and shadow budgets which may differ across household
members. Let $A_{it}\equiv A(z_{it})$ be a diagonal matrix that gives
the shareability of each good, and let it depend on demographics $z_{it}$
(including the numbers of household members). More shareable goods
have lower shadow prices of within-household consumption.

For nonshareable goods, the corresponding element of $A_{it}$ equals
$1$; for shareable goods, it is less than $1$, possibly as small
as $1/N_{i}$ where $N_{i}$ is the number of household members. Goods
may be partly shareable, with an element of $A_{it}$ between $1/N_{i}$
and $1$. With market prices $p_{t}$, within-household shadow prices
are given by the linear transformation $A_{it}p_{t}$. %

\citet{BrowningChiapporiLewbel13} also allow for inequality in shadow
budgets. Let $\eta_{ijt}$ be the \emph{resource share} of type $j$
in household $i$ in time period $t$. It gives the fraction of the
household budget consumed by that type. Each person of the $N_{ij}$
people of type $j$ consumes $\eta_{ijt}/N_{ij}$ of the household
budget $x_{it}$, so they each have a shadow budget of $\eta_{ijt}x_{it}/N_{ij}$.%

Each household member faces a shadow budget of $\eta_{ijt}x_{it}/N_{ij}$
and shadow prices of $A_{it}p$$_{t}$, so that, within the household,
indirect utility is given by 
\begin{equation}
v_{ijt}=V_{j}\left(A_{it}p_{t},\eta_{ijt}x_{it}/N_{ij},z_{it}\right).\label{eq:within hh V}
\end{equation}
Let $V_{xj}(p,w,z)\equiv\partial V_{j}(p,w,z)/\partial w$ be the
monotonically decreasing marginal utility of person $j$ with respect
to their shadow budget. Then, $V_{xj}\left(A_{it}p_{t},\eta_{ijt}x_{it}/N_{ij},z_{it}\right)$
is the value of their marginal utility evaluated at their shadow budget
constraint.

Let $p_{2}^{s},z_{i2}^{s},\overline{x}_{i}^{s}$ for $s=1,2$ be the
possible realizations of state-dependent variables that occur with
household-specific probabilities $\pi_{i}^{1}$ and $\pi_{i}^{2}$
(which sum to $1$). Here, $\overline{x}_{i}^{s}$ is the state-specific
lifetime wealth of household $i$, revealed in period $2$. 

Pareto weights $\phi_{ij}$ vary arbitrarily across households, and
depend, for example, on the household-specific expectation of lifetime
wealth at the moment of household formation. Because ours is a full
commitment model, Pareto weights do not have a time subscript because
they are fixed at the moment of household formation, when the full-commitment
contract is set. This household-level time-invariant variable will
form the basis of our fixed-effects variation.%
{} %

As in \citet{ChiapporiMazzocco2017},%
the assumption of efficiency implies that we can represent the household's
decisions via the Bergson-Samuelson Welfare Function, $W_{i}$, for
the household is

\begin{equation}
W_{i}\equiv\sum_{j=1}^{J}N_{ij}\phi_{ij}\left[v_{ij1}+\sum_{s=1}^{2}\rho_{i}\pi_{i}^{s}v_{ij2}^{s}\right].\label{eq: welfare}
\end{equation}
The term in square brackets is the expected lifetime utility, discounted
by the discount factor $\rho_{i}$, of each member of type $j$ in
household $i$. Each member of type $j$ gets the Pareto weight $\phi_{ij}$.
The assumption of efficiency implies that the household reaches the
Pareto Frontier; the Pareto weights pinpoint which point on the Frontier
is chosen by the household. %

Next, substitute indirect utility (\ref{eq:within hh V}) for utility
$v_{ij1}$ and $v_{ij2}^{s}$ into (\ref{eq: welfare}),\footnote{In contrast, \citet{ChiapporiMazzocco2017} substitute \emph{direct}
utility for utility $v_{ij1}$ and $v_{ij2}^{s}$ using a model of
pure private and pure public goods. In that model, each individual's
utility is given by their direct utility function, which is a function
of their (unobserved) consumption of a vector of private goods and
their (observed) consumption of a vector of public goods. } and form the Lagrangian using the intertemporal budget constraint
with interest rate $\tau$, $x_{i1}+x_{i2}^{s}/(1+\tau)=\overline{x}_{i}^{s}$,
and the adding-up constraints on resource shares, $\sum_{j}\eta_{ij1}=\sum_{j}\eta_{ij2}^{s}=1.$
Each household $i$ chooses $x_{i1},$ $\eta_{ij1}$ and $\eta_{ij2}^{s}$
to maximize $W_{i}$.%
{} Solving this optimization problem obtains for any two types, $j$
and $k$, that:
\begin{equation}
\frac{\phi_{ij}}{\phi_{ik}}=\frac{V_{xk}(A_{i1}p_{1},\eta_{ik1}x_{i1}/N_{ik},z_{i1})}{V_{xj}(A_{i1}p_{1},\eta_{ij1}x_{i1}/N_{ij},z_{i1})}=\frac{V_{xk}(A_{i2}p_{2}^{s},\eta_{ik2}^{s}x_{i2}^{s}/N_{ik},z_{i2}^{s})}{V_{xj}(A_{i2}p_{2}^{s},\eta_{ij2}^{s}x_{i2}^{s}/N_{ij},z_{i2}^{s})},\label{eq: FIC optimal eta}
\end{equation}
for $s=1,2$ for all $i$. That is, the household chooses resource
shares so as to equate ratios of marginal utilities with ratios of
Pareto weights.%

An \emph{assignable good} is one where we observe the consumption
of that good by a specific person (or type of person). Assuming the
existence of a scalar-value demand function $q_{j}(p,x,z)$ for an
assignable and non-shareable good (e.g., food or clothing) for a person
of type $j$, \citet{BrowningChiapporiLewbel13} show that the household's
quantity demand, $Q_{ijt}$, for the assignable good for each of the
$N_{ij}$ people of type $j$ is given by
\[
Q_{ijt}=q_{j}(A_{it}p_{t},\eta_{ijt}x_{it}/N_{ij},z_{it}).
\]
Assuming that the assignable good is a normal good implies that $q_{j}$
is strictly increasing in its second argument, and is therefore strictly
monotonic. 

Since $p_{t}$ is unobserved and varies over time,\footnote{The assumption that prices are unobserved but vary \emph{only} with
time is important for our empirical application below, where we observe
Bangladeshi households in 2 time periods. It implies moment conditions
for differenced demand functions that do not depend on prices. If
prices vary with any observed variables, then these could be worked
into the moment conditions, by conditioning on those variables. However,
if prices vary with unobserved variables, our estimation strategy
would not work.} we can express $Q_{ijt}$ as a time-varying function of observed
data. Defining $\widetilde{q}_{jt}(x,z)=q_{j}(p_{t},x,z)$, we have
\begin{equation}
Q_{ijt}=\widetilde{q}_{jt}(\eta_{ijt}x_{it}/N_{ij},z_{it}).\label{eq: BCL assignable demand}
\end{equation}
This is the structural demand equation that we ultimately bring to
the data.

The model above determines resource share via the first-order conditions
(\ref{eq: FIC optimal eta}), and these resource shares depend on
time-invariant Pareto weights $\phi_{ij}$. But these resource shares
are a vector of implicit functions, which may be hard to work with.
To make the model tractable, we impose sufficient structure on utility
functions to find closed forms for resource shares. 

In our empirical example below, we work with data that have time-invariant
demographic characteristics, so let $z_{it}=z_{i}$ be fixed over
time. This implies that the shareability of goods embodied in $A$
is time-invariant: $A_{it}=A_{i}=A(z_{i})$. We will estimate the
demand equation for women's food in nuclear households comprised of
$1$ man, $1$ woman and $1-4$ children, so $N_{if}=N_{im}=1$. 

Let indirect utilities be in the price-independent generalized logarithmic
(PIGL) class (\citet{Muellbauer75,Muellbauer76}) given by
\begin{equation}
V_{j}(p,x,z)=C_{j}(p,z)+\left(B(p,z)x\right)^{r(z)}/r(z).\label{eq:simplest PIGL}
\end{equation}
Here, $V_{j}$ is homogeneous of degree $1$ in $p,x$ if $C_{j}$
is homogeneous of degree $0$ in $p$ and $B$ is homogeneous of degree
$-1$ in $p$. $V$ is increasing in $x$ if $B(p,z)$ is positive
and $V$ is concave in $x$ if $r(z)<1$. In terms of preferences,
this class is reasonably wide. It gives quasihomothetic preferences
if $r(z)=1$, and PIGLOG preferences as $r(z)\rightarrow0$ (this
includes the Almost Ideal Demand System of \citet{Deatonm80}). 

The functions $C_{j}$ vary across types $j$, and so the model allows
for preference heterogeneity between types, e.g., between men and
women. The restrictions that $B(p,z)$ and $r(z)$ don't vary across
$j$ and that $r(z)$ does not depend on prices $p$ are important:
as we see below, they imply that resource shares are constant over
time.

Substituting into the BCL model, observed demographics and period
$t$ budgets, we have that %
marginal utilities are given by
\[
V_{xj}(A_{it}p_{t},\eta_{ijt}x_{it}/N_{ij},z_{it})=B(A_{i}p_{t},z_{i})^{r(z_{i})}\left(\eta_{ijt}x_{it}/N_{ij}\right)^{r(z_{i})-1}.
\]
For $r(z_{i})\neq1$, and for any pair of types $j,k\text{, }$we
substitute into (\ref{eq: FIC optimal eta}) and %
rearrange to get
\begin{equation}
\left(\frac{\eta_{ikt}}{\eta_{ijt}}\right)=\frac{N_{ik}}{N_{ij}}\left(\frac{\phi_{ij}}{\phi_{ik}}\right)^{1/\left(r(z_{i})-1\right)}.\label{eq: resource share invariance}
\end{equation}

The household chooses resource shares in each period and each state
to satisfy (\ref{eq: resource share invariance}). Since the right-hand
side has no variation over time or state, this implies that, given
PIGL utilities (\ref{eq:simplest PIGL}), the resource shares in a
given household $i$ are independent of period $t$ and state $s$.
However, resource shares do vary with both observed and unobserved
variables across households $i$. 

Let the fixed resource shares that solve the first-order conditions
with PIGL demands be denoted $\overline{\eta}_{ij}$, and define 
\begin{equation}
\alpha_{i}\equiv\ln\left(\overline{\eta}_{if}/N_{if}\right)=\ln\overline{\eta}_{if},\label{eq: alpha}
\end{equation}
to equal to the logged resource share of the woman in the household.

Let there be a multiplicative \citet{berkson50} measurement error
denoted $\exp\left(-U\right)$ which multiplies the budget, so that
if we observe $x$, the actual budget is $x/\exp\left(U\right)$.
The measurement error is i.i.d. across time and households, which
implies stationarity of $U$. Here, the measurement error does not
affect resource shares, but does affect the distribution of observed
quantity demands. %
{} Then household demand for women's food (the assignable good), $Q_{ift}$,
is given by
\[
Q_{ift}=\widetilde{q}_{ft}(\exp\left(\alpha_{i}\right)x_{it}/\exp\left(U_{it}\right),z_{i}).
\]
This is a FELT model, conditional on covariates:
\begin{equation}
Y_{it}=h_{t}(Y_{it}^{*},z_{i}),\label{eq: FELT w z}
\end{equation}
where $h_{t}(Y_{it}^{*},z_{i})=\widetilde{q}_{ft}\left(\exp\left(Y_{it}^{*}\right),z_{i}\right)$,
$Y_{it}^{*}=\alpha_{i}+X_{it}-U_{it}$, and $X_{it}=\ln x_{it}$ is
the logged household budget.

The assumption that the assignable good is normal means that the time-varying
functions $h_{t}$ are strictly monotonic in $Y_{it}^{*}$. One could
additionally impose that the demand functions $\widetilde{q}_{ft}$
come from the application of Roy's Identity to the indirect utility
function (\ref{eq:simplest PIGL}). These demand functions equal a
coefficient times the shadow budget plus a coefficient times the shadow
budget raised to a power, where the coefficients are time-varying
and depend on $z_{i}$.\footnote{Individual demands are derived by the application of Roy's Identity
to (\ref{eq:simplest PIGL}), and are:
\[
q_{jt}(x,z)=c_{jt}\left(z\right)x^{1-r(z)}+b_{t}\left(z\right)x
\]
where $c_{jt}\left(z\right)=-\frac{\nabla_{p}C_{j}(p_{t},z)}{B(p_{t},z)^{r}}$,
$b_{t}\left(z\right)=-\nabla_{p}\ln B(p_{t},z)$. This notation makes
clear that we have time-varying demand functions, due to the fact
that prices vary over time. In our application, prices in each period
are not observed, so we allow the ($z-$dependent) functions $c_{jt}$
and $b_{t}$ to vary over time. We require that the assignable good
be normal, meaning that its demand function is globally increasing
in $x$. This form for demand functions is globally increasing if
$c_{jt}(z)$, $b_{t}(z)$ and $1-r(z)$ are all positive. } We do not impose that additional structure here; instead, we show
in Section \ref{sec:ResourceShares} that the estimated demand curves
given by FELT are close to the PIGL shape restrictions.

Here, the time-dependence of $h_{t}$ is economically important; it
is driven by the price-dependence of preferences and by the fact that
prices are common to all households $i$ but vary over time $t$.
Further, the fixed effects $\alpha_{i}$ are economically meaningful
parameters: they are equal to the logged women's resource shares in
each household. The standard deviation of the logs is a common inequality
measure, and the standard deviation of $\alpha_{i}$ is identified
by FELT given strict monotonicity, as we show in Section \ref{sec:FE}.
Further, the covariation of $\alpha_{i}$ with observed regressors
is identified.%

\section{Data\label{sec:Data}}

We use data from the $2012$ and $2015$ Bangladesh Integrated Household
Surveys. This data set is a household survey panel conducted jointly
by the International Food Policy Research Institute and the World
Bank. In this survey, a detailed questionnaire was administered to
a sample of rural Bangladeshi households. This data set has two useful
features for our purposes: 1) it includes person-level data on food
intakes and household-level data on total household expenditures;
and 2) it is a panel, following roughly $6000$ households over two
(nonconsecutive) years. The former allows us to use food as the assignable
good to identify our collective household model parameters. The latter
allows us to model household-level unobserved heterogeneity in women's
resource shares.

The questionnaire was initially administered to $6503$ households
in $2012$, drawn from a representative sample frame of all rural
Bangladeshi households. Of these, $6436$ households remained in the
sample in $2015$. In these data, expenditures on food include imputed
expenditure from home production. We drop households with a discrepancy
between people reported present in the household and the personal
food consumption record, and households with no daily food diary data.
Of the remaining data, $6205$ households have total expenditures
reported for both $2012$ and $2015$.

In this paper, we focus on households that do not change members between
periods.\footnote{That is, we exclude households with births, deaths, new members by
marriage or adoption, etc. Although a full-commitment model can accommodate
such changes in household composition, it is easier to think through
the meaning of a person's resource share if the composition is held
constant. } There are $1920$ households whose composition is unchanged between
$2012$ and $2015$. Roughly half of these households have more than
one adult man or more than one adult women. To simplify the interpretation
of estimated resource shares we focus on nuclear households. This
leaves $871$ nuclear households comprised of one man, one woman and
$1$ to $4$ children, where children are defined to be $14$ years
old or younger.

Our household-level annual expenditure, $x_{it},$ is the sum of total
expenditure on, and imputed home-produced consumption of, the following
categories of consumption: rent, food, clothing, footwear, bedding,
non-rent housing expense, medical expenses, education, remittances,
religious food and other offerings (jakat/ fitra/ daan/ sodka/ kurbani/
milad/ other), entertainment, fines and legal expenses, utensils,
furniture, personal items, lights, fuel and lighting energy, personal
care, cleaning, transport and telecommunication, use-value from assets,
and other miscellaneous items. These spending levels derive from one-month
and three-month duration recall data in the questionnaire, and are
grossed up to the annual level. Estimation uses $X_{it}=\ln x_{it}$,
the natural logarithm of annual consumption.

The assignable good, $Y_{it}$, is annual expenditure on food for
the woman. The surveys contains a one-day (24-hour) food diary with
data on \emph{person-level} quantities (measured in kilograms) of
food consumption in 7 categories: Cereals, Pulses, Oils; Vegetables;
Fruits; Proteins; Drinks and Others. These consumption quantities
include home-produced food and purchased food and gifts. They include
both food consumed in the home (both cooked at home and prepared ready-to-eat
food), as well as food consumed outside the home (at food carts or
restaurants). These one-day food quantities are transformed into one-day
food expenditures by multiplying by estimated village-level unit-values
(following \citet{Deaton1997}), and from this, the woman's share
of one-day household food expenditure is calculated. Finally, the
woman's annual expenditure on food is calculated as her share of one-day
food expenditure multiplied by the household's total annual food expenditure
(from recall data).

Our model is also conditioned on a set of time-invariant demographic
variables $z_{i}$. We include several types of observed covariates
in $z_{i}$ that may affect both preferences and resource shares:
1) the age in $2012$ of the adult male; 2) the age in $2012$ of
the adult female; 3) the average age in $2012$ of the children; 4)
the average education in years of the adult male; 5) the average education
in years of the adult female; 6) an indicator that the household has
2 children; 7) an indicator that the household has 3 or 4 children;
and 8) the fraction of children that are girls.\footnote{Since household membership is fixed for all households in our sample,
age, number and gender composition are time-invariant by construction.
However, education level of men and women are time-varying in roughly
20\% of households. For our time-invariant education variables, we
use the average education across the two observed years.} 

For the first five of these demographic variables, in order to reduce
the support of the regressors, we top- and bottom-code each variable
so that values above (below) the $95^{th}$ ($5^{th}$) percentiles
equal the $95^{th}$ ($5^{th}$) percentile values. For all seven
of these variables, we standardize the location and scale so that
their support is $[0,1]$. This support restriction simplifies our
monotonicity restrictions when it comes to estimation, as explained
in Section \ref{sec:ResourceShares}.

We do not trim the data for outliers in the budget or food quantity
demands. Instead, we trim the support of the estimated nonparametric
regression functions to account for fact that these estimators are
high-variance near their boundaries.

Table $2$ in the Appendix, Section \ref{sec:Descriptive-Statistics}
gives summary statistics on these data.

\section{Estimation of Resource Shares \label{sec:ResourceShares}}

Following our identification results, estimation could be based on
composite versions of the maximum score estimator or the conditional
logit estimator (see \citet{BotosaruMuris}). Here, we instead follow
a sieve GMM approach that facilitates the inclusion of a large vector
of demographic conditioning variables $z$ and the imposition of strict
monotonicity on the demand functions (aka: normality of the assignable
good).

The women's food demand equation (\ref{eq: FELT w z}) is a FELT model,
conditional on observed covariates $z$. Denote the inverse demand
functions $g_{t}(Y_{it},z_{it})=h_{t}^{-1}(\cdot,z_{it})$. Given
(\ref{eq: FELT w z}) a two-period setting with $t=1,2$, and time-invariant
demographics $z_{it}=z_{i}$, we have 
\begin{equation}
\alpha_{i}+X_{it}-U_{it}=g_{t}(Y_{it},z_{i}),\label{eq:gt}
\end{equation}
 implying the conditional moment condition
\begin{equation}
E\left[\left.g_{2}(Y_{i2},z_{i})-g_{1}(Y_{i1},z_{i})-\triangle X_{it}\right|X_{i1},X_{i2},z_{i}\right]=0,
\end{equation}
using stationarity of the conditional distribution of the Berkson
errors $\exp\left(-U_{it}\right)$. 

We provide a detailed description of our GMM estimator in Appendix
\ref{app:GMM-Estimator-}. Briefly, we approximate the inverse demand
functions, $g_{t}$, $t=1,2,\text{ }$ using 8th order Bernstein polynomials
to impose monotonicity (estimates for other orders are reported in
Appendix \ref{sec:Additional-estimates}) and use the nonparametric
bootstrap for inference. Because the nonparametric bootstrap may not
be valid for this case, we take our estimated confidence intervals
with caution, see Appendix \ref{app:GMM-Estimator-}.

We characterize several interesting features of the distribution of
resource shares. Recall from Theorems \ref{thm:conditional-mean-a-ID}
and \ref{thm:conditional-var-a-ID} that identification of features
of this distribution does not impose a normalization on assignable
good demand functions, and only identifies the distribution of logged
resource shares (fixed effects) up to location. Consequently, we only
identify features of the resource share distribution up to a scale
normalization. This is related to identification results in \citet{cEkeland09}
which show identification up to location using assignable goods.

Let $\widehat{g}_{it}=\widehat{g}_{t}(Y_{it},z_{i})$ equal the predicted
values of the inverse demand functions at the observed data. Recall
that $g_{t}\left(Y_{it},z_{i}\right)=\alpha_{i}+X_{it}-U_{it},$ so
we can think of $\widehat{g}_{it}-X_{it}$ as a prediction of $\alpha_{i}-U_{it}$.
We then compute the following summary statistics of interest, leaving
the dependence of $\hat{g}_{it},\,t=1,2,$ on $z_{i}$ implicit:
\begin{enumerate}
\item an estimate of the standard deviation of $\alpha_{i}$ given by 
\[
\hat{std}\left(\alpha\right)=\sqrt{\hat{cov}\left(\left(\widehat{g}_{i2}-X_{i2}\right),\left(\widehat{g}_{i1}-X_{i1}\right)\right)},
\]
where $\hat{cov}$ denotes the sample covariance. The standard deviation
of logs is a standard (scale-free) inequality measure. So this gives
a direct measure of inter-household variation in women's resource
shares.
\item an estimate of the standard deviation of the projection error, $e_{i},$
of $\alpha_{i}$ on $\bar{X}_{i}=\frac{1}{2}\left(X_{i1}+X_{i2}\right)$
and $Z_{i}$. Consider the projection
\[
\alpha_{i}=\gamma_{1}\bar{X}_{i}+\gamma_{2}Z_{i}+e_{i},
\]
where $Z_{i}$ contains a constant. We are interested in the standard
deviation of $e_{i}$. To obtain this parameter, we compute estimators
for $\gamma_{1},\gamma_{2}$ from the pooled linear regression of
$\hat{g}_{it}-X_{it}$ on $\bar{X}_{i}\text{ and }Z_{i}.$ Call these
estimators $\hat{\gamma}_{1},\hat{\gamma}_{2}$. Then, as in $\hat{std}\left(\alpha\right)$
in (1), an estimate of the standard deviation of $e_{i}$ is given
by:
\[
\hat{std}\left(e_{i}\right)=\sqrt{\hat{cov}\left(\left(\widehat{g}_{i2}-X_{i2}-\hat{\gamma}_{1}\bar{X}_{i}-\hat{\gamma}_{2}Z_{i}\right),\left(\widehat{g}_{i1}-X_{i1}-\hat{\gamma}_{1}\bar{X}_{i}-\hat{\gamma}_{2}Z_{i}\right)\right)}.
\]
This object measures the amount of variation in $\alpha_{i}$ that
cannot be explained with observed regressors. If it is zero, then
we don't really need to account for household-level unobserved heterogeneity
in resource shares. It is much larger than zero, then accounting for
household-level unobserved heterogeneity is important.
\item an estimate of the standard deviation of $\alpha_{i}+X_{it}$ for
$t=1,2$, computed as
\begin{align*}
\hat{std}\left(\alpha_{i}+X_{it}\right) & =\sqrt{\hat{var}\left(\alpha_{i}\right)+\hat{var}\left(X_{it}\right)+2\hat{cov}\left(\alpha_{i},X_{it}\right)},
\end{align*}
where 
\begin{align*}
\hat{cov}\left(\alpha_{i},X_{i1}\right) & =cov\left(\hat{g}_{i1}-X_{i1},X_{i1}\right),\\
\hat{cov}\left(\alpha_{i},X_{i2}\right) & =cov\left(\hat{g}_{i2}-X_{i2},X_{i2}\right),
\end{align*}
and $\hat{var}\left(\alpha_{i}\right)=\left(\hat{std}\left(\alpha\right)\right)^{2}$
and $\hat{var}\left(X_{it}\right),\,t=1,2,$ is observed in the data.
Since $\alpha_{i}+X_{it}$ is a measure of the woman's shadow budget,
$\hat{std}\left(\alpha_{i}+X_{it}\right)$ is a measure of inter-household
inequality in women's shadow budgets. This inequality measure is directly
comparable to the standard deviation of $X_{i}$ (shown in Table 1),
which measures inequality in household budgets.
\item an estimate of the covariance of $\alpha_{i},X_{it}$ for $t=1,2$,
denoted $\hat{cov}\left(\alpha_{i},X_{it}\right)$. This object is
of direct interest to applied researchers using cross-sectional data
to identify resource shares. If this covariance is non-zero, then
the independence of resource shares and household budgets is cast
into doubt, and identification strategies based on this restriction
are threatened.
\end{enumerate}
Of these, the first 2 summary statistics are about the variance of
fixed effects, and are computed using data from both years. Their
validity requires serial independence of the measurement errors $U_{it}$.
In contrast, the second 2 summary statistics are about the correlation
of fixed effects with the household budget, and are computed at the
year level. They are valid with stationary $U_{it}$, even in the
presence of serial correlation.

We also consider the multivariate relationship between resource shares,
household budgets and demographics. Recall that the fixed effect $\alpha_{i}$
subject to a location normalization; this means that resource shares
are subject to a scale normalization. So, we construct an estimate
of the woman's resource share in each household as $\widehat{\eta}_{i}=\exp\left(\frac{1}{2}\left(\hat{g}_{i1}-X_{i1}\right)+\left(\hat{g}_{i2}-X_{i2}\right)\right)$,
normalized to have an average value of $0.33$. Then, we regress estimated
resource shares $\widehat{\eta}_{i}$ on $\bar{X}_{i}$ and $Z_{i}$,
and present the estimated regression coefficients, which may be directly
compared with similar estimates in the cross-sectional literature. 

The estimated coefficient on $X_{i}$ gives the conditional dependence
of resource shares on household budgets, and therefore speaks to the
reasonableness of the restriction that resource shares are independent
of those budgets (an identifying restriction used in the cross-sectional
literature). Finally, using the estimate of the variance of fixed
effects, we construct an estimate of $R^{2}$ in the regression of
resource shares on observed covariates. This provides an estimate
of how much unobserved heterogeneity matters in the overall variation
of resource shares.

\section{Results}

Figure \ref{fig:pigl_fit_6} shows our estimates of $h_{1}$ and $h_{2}$
(or, equivalently, of $g_{1}$ and $g_{2}$) for $K=8$, for a family
with two children with mean values, $\overline{z}$, of the other
demographics. The figures have food quantities $q_{t}$ on the vertical
axis and $\widehat{g}_{t}\left(q_{t}\right)$ on the horizontal axis,
so the horizontal axis is like a predicted logged household budget.
Solid lines give the nonparametric estimates, and 95\% pointwise confidence
bands for the nonparametric estimates are denoted by dotted lines.
Additionally, to provide reassurance that the PIGL utility model---which
implies the FELT demand curves---fits the data adequately, we display
the PIGL demand curve closest to the FELT estimates in each time period
with dashed lines.\footnote{We compute these PIGL demand curves by nonlinear least squares estimation
of a pooled $q_{it}$ on $\widehat{g}_{it}$, where the demand curves
have the form $q_{it}=c_{t}x^{1-r}+b_{t}x$. We estimate the model
on a grid of 198 points, one for each interior percentile of $q_{it}$
in each period $t=1,2$. } 

Note that since $g_{t}$ are identified only up to location (of $g_{1}$),
we normalize the average of $\widehat{g}_{t}$ to half the geometric
mean of household budgets at $t=$$1$, $\overline{x}_{1}$. Because
estimated nonparametric regression functions can be ill-behaved near
their boundaries, we truncate the estimated functions at the 5th and
95th percentiles of the distribution of $q_{t}$ in each $t$. The
key message from Figure \ref{fig:pigl_fit_6} is that these estimated
demand curves are somewhat nonlinear, estimated reasonably precisely,
and not too far from PIGL. The estimated PIGL curvature parameter
is $r(\overline{z})=0.06$, which means that food demands are close
to PIGLOG (as in \citet{Banks1997}). 
\begin{figure}
\begin{centering}
\includegraphics[scale=0.8]{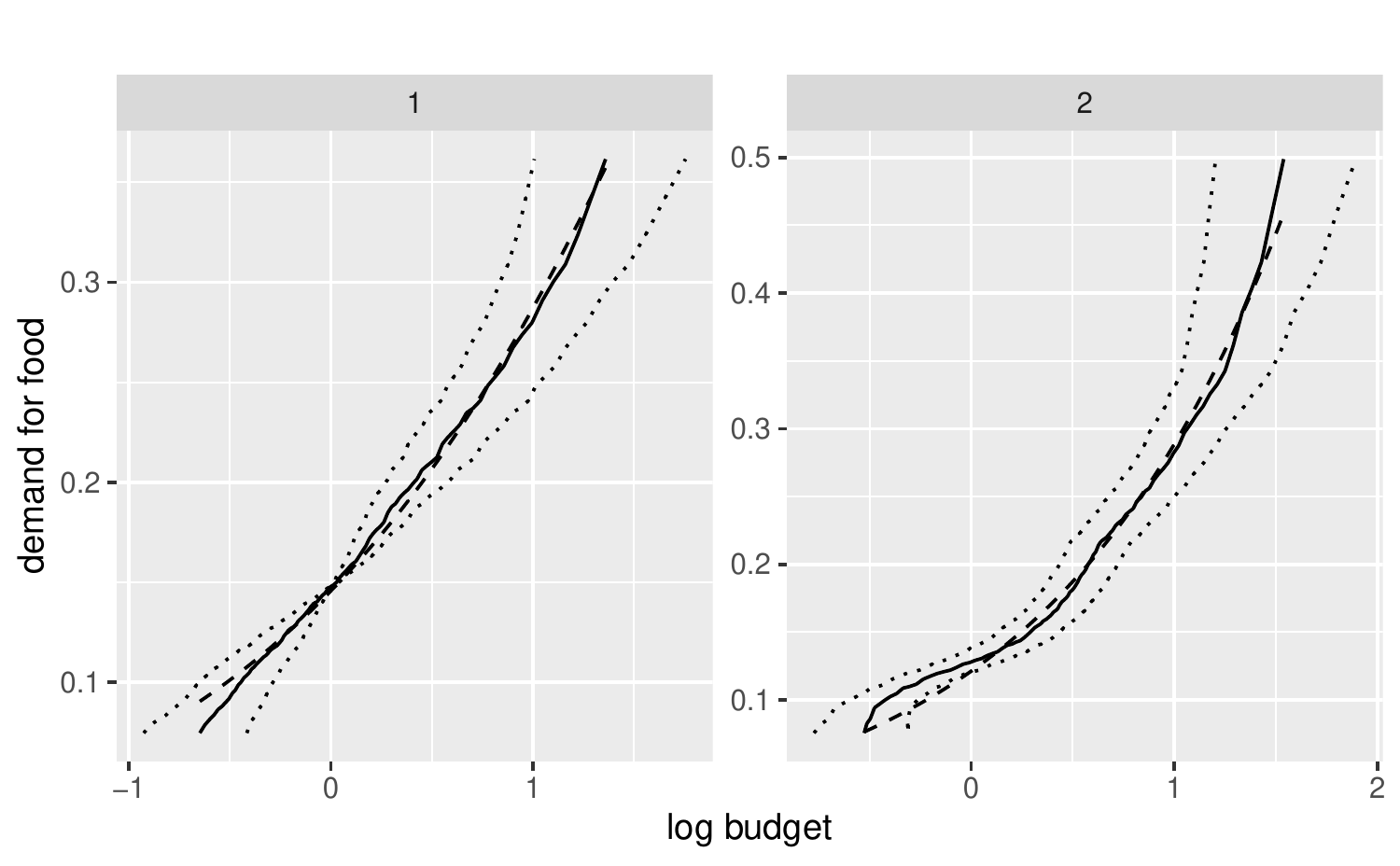}
\par\end{centering}
\label{fig:pigl_fit_6}

\caption{Estimated demand functions. Solid line is the nonparametric estimate,
evaluated at the mean value of the demographics. The dotted lines
indicate the 95\% confidence interval. The dashed line is the PIGL
closest to the nonparametric estimate. Left panel is for period 1,
right panel is for period 2.}
\end{figure}

Table \ref{tab:main_results} gives our summary statistics (items
1-4 above), with bootstrapped 95\% confidence intervals, for our estimates
with 8 Bernstein polynomials (see the Appendix for other lengths of
the Bernstein sieve). In the lower panel, we provide estimated regression
coefficients, also with bootstrapped 95\% confidence intervals, where
we regress estimated resource shares $\widehat{\eta}_{i}$ on log-budgets
$\overline{X}_{i}$ and demographics $z_{i}$.
\begin{table}
\begin{centering}
\begin{tabular}{lccc}
\hline 
 & Estimate & q(2.5) & q(97.5)\tabularnewline
$n=871$ &  &  & \tabularnewline
\hline 
\multicolumn{4}{l}{\textbf{Variability of fixed effects $\alpha_{i}$}}\tabularnewline
$\hat{std}\left(\alpha_{i}\right)$ & 0.2647 & 0.1518 & 0.3707\tabularnewline
$\hat{std}\left(e_{i}\right)$ & 0.1637 & 0.1262 & 0.1931\tabularnewline
$\hat{cov}\left(\alpha_{i},X_{i1}\right)$ & -0.0901 & -0.1292 & -0.0429\tabularnewline
$\hat{cov}\left(\alpha_{i},X_{i2}\right)$ & -0.1034 & -0.1418 & -0.0561\tabularnewline
$\hat{std}\left(\alpha_{i}+X_{i1}\right)$ & 0.3537 & 0.2720 & 0.4605\tabularnewline
$\hat{std}\left(\alpha_{i}+X_{i2}\right)$ & 0.3763 & 0.3010 & 0.4760\tabularnewline
\hline 
\multicolumn{4}{l}{\textbf{Regression estimates}}\tabularnewline
\hline 
$R^{2}$: $\overline{X}_{i},z_{i}$ on $\eta_{i}$ & 0.5205 & 0.3361 & 0.6508\tabularnewline
\hline 
$\overline{X}_{i}$ & -0.0452 & -0.0700 & -0.0196\tabularnewline
age--woman & -0.2513 & -0.4437 & -0.0545\tabularnewline
age--man & 0.2845 & 0.1389 & 0.4191\tabularnewline
2 children & -0.0502 & -0.1694 & 0.0413\tabularnewline
3 or 4 children & -0.1248 & -0.2407 & 0.0140\tabularnewline
avg age of children & -0.0095 & -0.1979 & 0.2151\tabularnewline
fraction girl children & 0.0136 & -0.0881 & 0.1246\tabularnewline
education--woman & -0.0325 & -0.1519 & 0.1198\tabularnewline
education--men & -0.1330 & -0.2589 & -0.0015\tabularnewline
\hline 
\end{tabular}
\par\end{centering}
\caption{Estimates. }
\label{tab:main_results}

\end{table}

Starting with the top panel of Table \ref{tab:main_results}, the
standard deviation of $\alpha_{i}$ is a measure of inter-household
dispersion in women's resource shares. If this dispersion is very
small, then variation in resource shares does not induce much inequality,
and we can reasonably use the household-level income distribution
as a proxy for person-level inequality. However, if the dispersion
is large, then household-level measures of inequality could be very
misleading. Note here that we focus on inequality among women, not
on gender inequality, precisely because the location of resource shares
is not identified.

The estimated value is roughly $0.26$, with a 95\% confidence interval
covering roughly $0.15$ to $0.37$. To get a sense of the magnitude
for the standard deviation of logged resource shares, suppose that
women's resource shares were lognormally distributed. Then our estimated
standard deviation of 0.26 is consistent with 95\% of the distribution
of the resource shares lying in the range $[0.25,0.75]$, which represents
quite a bit of heterogeneity across households.

The next row of Table \ref{tab:main_results} considers how much of
the variation in $\alpha_{i}$ we can explain with observed covariates.
The standard deviation of $e_{i}$ gives a measure of the unexplained
variation, and gives us an idea of whether household-level unobserved
heterogeneity is an important feature of the data. If the standard
deviation of $e_{i}$ is very small, then fixed effects are not needed---conditioning
on observed covariates would be sufficient. Our estimate of the standard
deviation of the unexplained variation in $\alpha_{i}$ is about $0.16$.
This is large relative to the overall estimated standard deviation
of $0.26$, and suggests that accounting for household-level unobserved
heterogeneity is quite important.

The next two rows give the covariance of $\alpha_{i}$ and $X_{it}$.
Here, we see that log resource shares $\alpha_{i}$ strongly and statistically
significantly negatively covary with observed household budgets (the
implied correlation coefficients are close to $-0.8$). This means
that women in poor households are somewhat less poor than they appear
(on the basis of their household budget), and women in richer households
are somewhat more poor than they appear. This is consistent with households
that are closer to subsistence having a more equal distribution of
resources.

The next two rows give the estimated standard deviation of women's
log shadow budgets. This is a scale-free parameter: it does not depend
on the location normalization of $\alpha_{i}$ (which corresponds
to a scale normalization of shadow budgets). The estimated standard
deviations are $0.35$ and $0.38$ in the two periods, respectively.
We can compare these with the standard deviation of log-budgets, reported
in Table 1, of $0.49$ and $0.53$. The point estimates suggest that
there is \emph{less} inequality in women's shadow budgets than in
household budgets. Although the confidence intervals are large, the
test of the hypothesis that the standard deviation of log-budgets
equals the standard deviation of log-shadow budgets rejects in both
years.\footnote{For $H_{0}:Var\left(X_{it}\right)-Var\left(\alpha_{i}+X_{it}\right)$,
we have the following estimated test statistics and (confidence intervals).
Period 1: $0.110\text{}(0.0236,0.165)$; Period 2: $0.137\text{}(0.0527,0.191)$.}

Thus, if we take these results at face value, there is \emph{less}
consumption inequality among women than household-level analysis would
suggest. However, another implication of this is that there is \emph{more}
gender inequality than household level data would suggest. The reason
is that household-level analysis of gender inequality pins gender
inequality on over-representation of one gender in poorer households.
In our data, all households have 1 man and 1 woman, so household-level
analysis of gender inequality would show zero gender inequality. But,
because women in richer households have smaller resource shares, this
induces gender inequality even in these data.

Finding correlation between $\alpha_{i}$ and household budgets is
not sufficient to invalidate previous identification strategies for
cross-sectional settings that rely on independence between resource
shares and household budgets. The reason is that the independence
required is conditional on other observed covariates. To get a handle
on this, the bottom panel of Table \ref{tab:main_results} presents
estimates of coefficients in a linear regression of normalized (to
average $0.33$) estimated resource shares $\widehat{\eta}_{i}$ on
log-budgets $\overline{X}_{i}$ and other covariates $z_{i}$.

The figure below shows the scatterplot of predicted resource shares
versus the log household budget. Here, we see a lot of variation in
resource shares, and it is clearly correlated with household budgets.
The overall variation here provides an estimate of the explained sum
of squares in an infeasible regression of true resource shares $\eta_{i}$
on $\overline{X}_{i}$ and $z_{i}$.\footnote{This artificial regression is infeasible because we observe (through
$g_{it}$) a prediction of $\alpha_{i}+U_{it}$, not of $\alpha_{i}$
itself. However, because we have an estimate of the variance of $\alpha_{i}$,
we can construct an estimate of the variance of $\eta_{i}$ (subject
to the scale normalization that it has a mean of $0.33$). In our
regression, the LHS variable is $\exp\hat{g}_{it}$, which is a prediction
of $\eta_{i}u_{it}$. Since $u_{it}$ are uncorrelated with $\overline{X}_{i}$
and $z$ by assumption, the explained sum of squares from this regression
applies to $\eta_{i}$, and we can use it to form an estimate of $R^{2}$,
which we report, along with a bootstrapped confidence interval. } We may construct an estimate of the total sum of squares of resource
shares from our estimate of the standard deviation of $\alpha_{i}$.
This yields an estimate of $R^{2}$ in the infeasible regression,
which we interpret as the fraction of variation in resource shares
explained by observables.

\begin{figure}
\begin{centering}
\includegraphics[scale=0.5]{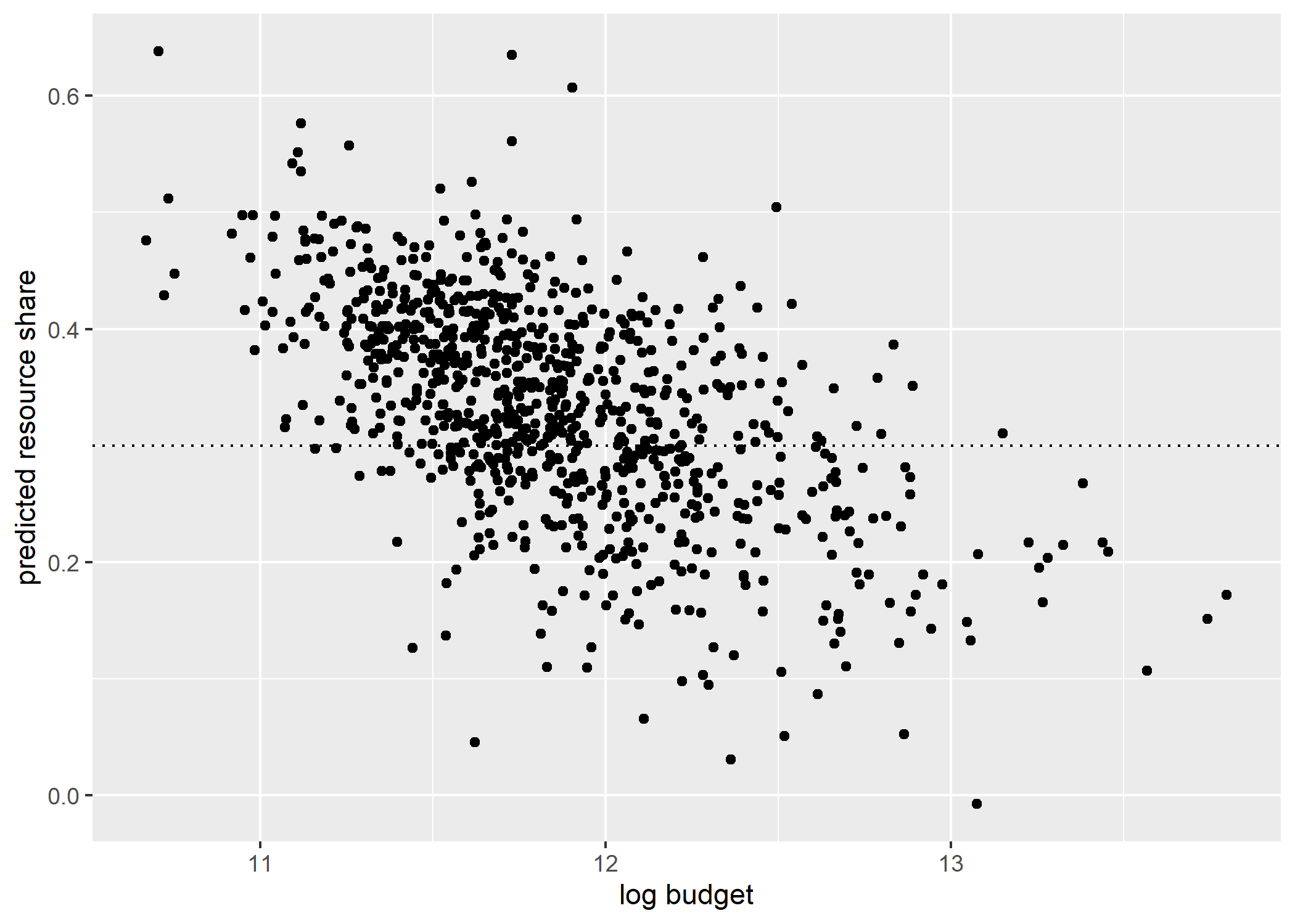}
\par\end{centering}
\caption{Scatterplot of predicted resource shares and log budget.}

\end{figure}

In the first row of the bottom panel, we see that observed variables
explain roughly half the variation in resource shares (the estimate
of $R^{2}$ is $0.52$). This magnitude of explained variation is
very close to that reported in \citet{DunbarLewbelPendakur2019} in
their cross-sectional estimate based on Malawian data. However, whereas
the estimate in \citet{DunbarLewbelPendakur2019} is conditional on
the assumption that unobserved heterogeneity in resources shares is
independent of the household budget, our estimate allows for correlation
of resource shares with the household budget. This large magnitude
of unexplained variation (roughly half) suggests that accounting for
unobserved heterogeneity in resource shares is quite important.

Consider first the coefficient on $\overline{X}_{i}$. The estimated
coefficient is $-0.045$ and is statistically significantly different
from $0$. This means that, even after conditioning on other covariates
(many of which are highly correlated with the budget), we still see
a significant relationship between resource shares and household budgets. 

However, the magnitude of this effect is small. Conditional on $z_{i}$,
the standard deviation of $X_{it}$ is $0.43$ in year 1 and $0.47$
in year 2. Thus, comparing two households with identical $z$ but
which are one standard deviation apart in terms the household budget,
we would expect the woman in the poorer household to have a resource
share $2$ percentage points higher than the woman in the richer household.
Thus, the bulk of the variation that makes the standard deviation
of women's shadow budgets smaller than that of household budgets is
not running through the dependence of resource shares on household
budgets, but rather through the dependence of resource shares on other
covariates that are correlated with household budgets.

We get a very precise estimate of the conditional dependence of resource
shares on household budgets. Overall, then, we see that women's resource
shares are statistically significantly correlated with household budgets,
even conditional on other observed characteristics. But, the estimated
difference in resource shares at different household budgets is quite
small. So, we take this as evidence that the identifying restrictions
used by \citet{dlp13} (and \citet{DunbarLewbelPendakur2019}) may
be false, though perhaps not \emph{very} false. It does suggest that
alternative identifying restrictions---such as those developed here
with a panel model---may be useful.

The rows of Table \ref{tab:main_results} give several other coefficients
that are comparable to other estimates in the literature. \citet{Calvi}
finds that women's resource shares in India decline with the age of
the woman. In these Bangladeshi data, we find evidence that women's
resource shares are strongly negatively correlated with the age of
women and positively correlated with the age of men. 

\citet{dlp13} find that women's resource shares in Malawi decline
with the number of children. Here, we also see that pattern: households
with 2 children have women's resource shares $5$ percentage points
less than households with 1 child; households with 3 or 4 children
have resource shares $12$ percentage points less. \citet{dlp13}
also find that Malawian women's resource shares are higher in households
with girls than households with boys. We do not see evidence of this
in rural Bangladesh: the estimated coefficient on the fraction of
children that are girls statistically insignificantly different from
$0$.

In the Appendix, we also provide estimates analogous to Table 1 using
a different assignable good: clothing. Under the model, using different
assignable goods should yield the same estimates of resource shares.\footnote{Food is a plausible \emph{assignable} good (because if one person
eats it, nobody else can), but it may not be non-shareable (because
there may be scale economies in cooking). In contrast, clothing may
be plausibly non-shareable, but it may not be assignable (because,
e.g., mothers and daughters might wear each others' clothes). See
the Appendix for details on clothing estimates.} This is roughly what we find in our estimates using women's clothing. 

Our estimates use 8th order Bernstein polynomials to approximate the
inverse demand functions\ref{sec:Additional-estimates}. In Appendix
B, we present estimates using Bernstein polynomials of order $K=1,4,8,10$
and show that our finite-dimensional parameter estimates have roughly
the same value for $K\geq8$. 

In summary, in these rural Bangladeshi households, we find evidence
that women's resource shares have substantial dependence on household-level
unobserved heterogeneity and are slightly negatively correlated with
household budgets. The former suggests that random-effects type approaches
to the estimation of resource shares may be inadequate. The latter
suggests that consumption inequality faced by women is actually smaller
than household-level consumption inequality. It also suggests that
cross-sectional identification strategies invoking independence of
resource shares from household budgets, such as \citet{dlp13}, could
be complemented by panel-based identification strategies such as ours.

 \singlespacing\bibliographystyle{apalike}
\bibliography{FELTbib}

\begin{thebibliography}{}

\bibitem[Abrevaya, 1999]{Abrevaya1999}
Abrevaya, J. (1999).
\newblock Leapfrog estimation of a fixed-effects model with unknown
  transformation of the dependent variable.
\newblock {\em Journal of Econometrics}, 93(2):203--228.

\bibitem[Abrevaya, 2000]{Abrevaya2000}
Abrevaya, J. (2000).
\newblock Rank estimation of a generalized fixed-effects regression model.
\newblock {\em Journal of Econometrics}, 95(1):1--23.

\bibitem[Abrevaya and Muris, 2020]{abrevaya_interval_2020}
Abrevaya, J. and Muris, C. (2020).
\newblock Interval censored regression with fixed effects.
\newblock {\em Journal of Applied Econometrics}, 35(2):198--216.

\bibitem[Ai and Gan, 2010]{AiGan2010}
Ai, C. and Gan, L. (2010).
\newblock An alternative root-n consistent estimator for panel data binary
  choice models.
\newblock {\em Journal of Econometrics}, 157(1):93--100.

\bibitem[Altonji and Matzkin, 2005]{AltonjiMatzkin2005}
Altonji, J.~G. and Matzkin, R.~L. (2005).
\newblock Cross section and panel data estimators for nonseparable models with
  endogenous regressors.
\newblock {\em Econometrica}, 73(4):1053--1102.

\bibitem[Andrews, 1999]{Andrews99}
Andrews, D.~W. (1999).
\newblock Estimation when a parameter is on a boundary.
\newblock {\em Econometrica}, 67:1341--1383.

\bibitem[Andrews, 2000]{Andrews00}
Andrews, D.~W. (2000).
\newblock Inconsistency of the bootstrap when a parameter is on the boundary of
  the parameter space.
\newblock {\em Econometrica}, 68:399--405.

\bibitem[Arellano and Bonhomme, 2009]{ArellanoBonhomme2009}
Arellano, M. and Bonhomme, S. (2009).
\newblock Robust priors in nonlinear panel data models.
\newblock {\em Econometrica}, 77(2):489--536.

\bibitem[Arellano and Bonhomme, 2011]{ArellanoBonhomme2012}
Arellano, M. and Bonhomme, S. (2011).
\newblock Nonlinear panel data analysis.
\newblock {\em Annual Review of Economics}, 3(1):395--424.

\bibitem[Arellano and Bonhomme, 2012]{AB_distributional_2012}
Arellano, M. and Bonhomme, S. (2012).
\newblock Identifying distributional characteristics in random coefficients
  panel data models.
\newblock {\em Review of Economic Studies}, 79(3):987--1020.

\bibitem[Arellano and Hahn, 2007]{ArellanoHahn2007}
Arellano, M. and Hahn, J. (2007).
\newblock Understanding bias in nonlinear panel models: Some recent
  developments.
\newblock In Blundell, R., Newey, W., and Persson, T., editors, {\em Advances
  in Economics and Econometrics}, pages 381--409.

\bibitem[Aristodemou, ming]{Aristodemou2020}
Aristodemou, E. (forthcoming).
\newblock Semiparametric identification in panel data discrete response models.
\newblock {\em Journal of Econometrics}.

\bibitem[Baetschmann et~al., 2015]{Baetschmann2015}
Baetschmann, G., Staub, K.~E., and Winkelmann, R. (2015).
\newblock Consistent estimation of the fixed effects ordered logit model.
\newblock {\em Journal of the Royal Statistical Society A}, 178(3):685--703.

\bibitem[Banks et~al., 1997]{Banks1997}
Banks, J., Blundell, R., and Lewbel, A. (1997).
\newblock Quadratic engel curves and consumer demand.
\newblock {\em Review of Economics and statistics}, 79(4):527--539.

\bibitem[Bargain et~al., 2014]{Bargain2014}
Bargain, O., Donni, O., and Kwenda, P. (2014).
\newblock Intrahousehold distribution and poverty: Evidence from cote d'ivoire.
\newblock {\em Journal of Development Economics}, 107:262--276.

\bibitem[Bargain et~al., 2018]{Bargain2018}
Bargain, O., Lacroix, G., and Tiberti, L. (2018).
\newblock Validating the collective model of household consumption using direct
  evidence on sharing.
\newblock Partnership for Economic Policy Working Paper 2018-06.

\bibitem[Becker, 1962]{becker62}
Becker, G.~S. (1962).
\newblock Investment in human capital: A theoretical analysis.
\newblock {\em Journal of Political Economy}, 70(5):9--49.

\bibitem[Berkson, 1950]{berkson50}
Berkson, J. (1950).
\newblock Are there two regressions?
\newblock {\em Journal of the American Statistical Association},
  45(250):164--180.

\bibitem[Bester and Hansen, 2009]{BesterHansen2009}
Bester, C.~A. and Hansen, C. (2009).
\newblock Identification of marginal effects in a nonparametric correlated
  random effects model.
\newblock {\em Journal of Business and Economic Statistics}, 27(2):235--250.

\bibitem[Bonhomme, 2012]{Bonhomme2012}
Bonhomme, S. (2012).
\newblock Functional differencing.
\newblock {\em Econometrica}, 800(4):1337--1385.

\bibitem[Botosaru and Muris, 2017]{BotosaruMuris}
Botosaru, I. and Muris, C. (2017).
\newblock Binarization for panel models with fixed effects.
\newblock cemmap working paper CWP31/17.

\bibitem[Browning et~al., 2013]{BrowningChiapporiLewbel13}
Browning, M., Chiappori, P.~A., and Lewbel, A. (2013).
\newblock Estimating consumption economies of scale, adult equivalence scales,
  and household bargaining power.
\newblock {\em Review of Economic Studies}, 80(4):1267--1303.

\bibitem[Calvi, 2019]{Calvi}
Calvi, R. (2019).
\newblock Why are older women missing in india? the age profile of bargaining
  power and poverty.
\newblock {\em Journal of Political Economy}.
\newblock Forthcoming.

\bibitem[Chamberlain, 1980]{Chamberlain1980}
Chamberlain, G. (1980).
\newblock Analysis of covariance with qualitative data.
\newblock {\em Review of Economic Studies}, 47(1):225--238.

\bibitem[Chamberlain, 2010]{Chamberlain2010}
Chamberlain, G. (2010).
\newblock Binary response models for panel data: Identification and
  information.
\newblock {\em Econometrica}, 78(1):159--168.

\bibitem[Charlier et~al., 2000]{Charlier2000}
Charlier, E., Melenberg, B., and van Soest, A. (2000).
\newblock Estimation of a censored regression panel data model using
  conditional moment restrictions efficiently.
\newblock {\em Journal of Econometrics}, 95:25--56.

\bibitem[Chen, 2002]{Chen2002}
Chen, S. (2002).
\newblock Rank estimation of transformation models.
\newblock {\em Econometrica}, 70(4):1683--1697.

\bibitem[Chen, 2010a]{Chen2010}
Chen, S. (2010a).
\newblock An integrated maximum score estimator for a generalized censored
  quantile regression model.
\newblock {\em Journal of Econometrics}, 155(1):90--98.

\bibitem[Chen, 2010b]{Chen2010a}
Chen, S. (2010b).
\newblock Non-parametric identification and estimation of truncated regression
  models.
\newblock {\em Review of Economic Studies}, 77(1):127--153.

\bibitem[Chen, 2010c]{SChen2010b}
Chen, S. (2010c).
\newblock Root-{N}-consistent estimation of fixed-effect panel data
  transformation models with censoring.
\newblock {\em Journal of Econometrics}, 159(1):222--234.

\bibitem[Chen, 2012]{Chen2012ET}
Chen, S. (2012).
\newblock Distribution-free estimation of the {B}ox-{C}ox regression model with
  censoring.
\newblock {\em Econometric Theory}, 28(3):680--695.

\bibitem[Chen et~al., 2019]{chen_exclusion_2019}
Chen, S., Khan, S., and Tang, X. (2019).
\newblock Exclusion {Restrictions} in {Dynamic} {Binary} {Choice} {Panel}
  {Data} {Models}: {Comment} on ``{Semiparametric} {Binary} {Choice} {Panel}
  {Data} {Models} {Without} {Strictly} {Exogenous} {Regressors}''.
\newblock {\em Econometrica}, 87(5):1781--1785.

\bibitem[Chen and Zhou, 2012]{Chen2012}
Chen, S. and Zhou, X. (2012).
\newblock Semiparametric estimation of a truncated regression model.
\newblock {\em Journal of Econometrics}, 167(2):297--304.

\bibitem[Chen, 2007]{Chen2007}
Chen, X. (2007).
\newblock Large sample sieve estimation of semi-nonparametric models.
\newblock In {\em Handbook of Econometrics}, volume~6B, chapter~76, pages
  5549--5632. Elsevier.

\bibitem[Chernozhukov et~al., 2013]{ChernValHahnNewey2013}
Chernozhukov, V., Fern\'{a}ndez-Val, I., Hahn, J., and Newey, W. (2013).
\newblock Average and quantile effects in nonseparable panel models.
\newblock {\em Econometrica}, 81(2):535--580.

\bibitem[Chernozhukov et~al., 2015]{ChernValHoderleinHolzNewey2015}
Chernozhukov, V., Fern\'{a}ndez-Val, I., Hoderlein, S., Holzmann, H., and
  Newey, W. (2015).
\newblock Nonparametric identification in panels using quantiles.
\newblock {\em Journal of Econometrics}, 188(2):378--392.

\bibitem[Chernozhukov et~al., 2020]{ChernozhukovWP2018}
Chernozhukov, V., Fernández-Val, I., and Weidner, M. (2020).
\newblock Network and panel quantile effects via distribution regression.
\newblock {\em Journal of Econometrics}.

\bibitem[Chiappori et~al., 2015]{ChiapporiKomunjerKristensen2015}
Chiappori, P., Komunjer, I., and Kristensen, D. (2015).
\newblock Nonparametric identification and estimation of transformation models.
\newblock {\em Journal of Econometrics}, 188:22--39.

\bibitem[Chiappori, 1988]{Chiappori88}
Chiappori, P.~A. (1988).
\newblock Rational household labor supply.
\newblock {\em Econometrica}, 56(1):63--90.

\bibitem[Chiappori, 1992]{Chiappori92}
Chiappori, P.~A. (1992).
\newblock Collective labor supply and welfare.
\newblock {\em Journal of Political Economy}, 100(3):437--467.

\bibitem[Chiappori and Ekeland, 2009]{cEkeland09}
Chiappori, P.~A. and Ekeland, I. (2009).
\newblock The microeconomics of efficient group behavior: Identification.
\newblock {\em Econometrica}, 77(3):763--799.

\bibitem[Chiappori and Kim, 2017]{cKim17}
Chiappori, P.~A. and Kim, J.~H. (2017).
\newblock A note on identifying heterogeneous sharing rules.
\newblock {\em Quantitative Economics}, 8(1):201--218.

\bibitem[Chiappori and Mazzoco, 2017]{ChiapporiMazzocco2017}
Chiappori, P.-A. and Mazzoco, M. (2017).
\newblock Static and intertemporal household decisions.
\newblock {\em Journal of Economic Literature}, 55(3):985--1045.

\bibitem[Das and van Soest, 1999]{das_panel_1999}
Das, M. and van Soest, A. (1999).
\newblock A panel data model for subjective information on household income
  growth.
\newblock {\em Journal of Economic Behavior \& Organization}, 40(4):409--426.

\bibitem[Deaton and Muellbauer, 1980]{Deatonm80}
Deaton, A. and Muellbauer, J. (1980).
\newblock {\em Economics and consumer behavior}.
\newblock University Press, Cambridge.

\bibitem[Deaton, 1997]{Deaton1997}
Deaton, A.~S. (1997).
\newblock {\em The Analysis of Household Surveys : A Microeconometric Approach
  to Development Policy}.
\newblock Washington, D.C. : The World Bank.

\bibitem[Doksum and Gasko, 1990]{DoksumGasko1990}
Doksum, K.~A. and Gasko, M. (1990).
\newblock On a correspondence between models in binary regression analysis and
  in survival analysis.
\newblock {\em International Statistical Review}, 58:243--252.

\bibitem[Dunbar et~al., 2019a]{Dunbarlp19}
Dunbar, G., Lewbel, A., and Pendakur, K. (2019a).
\newblock Identification of random resource shares in collective households
  without preference similarity restrictions.
\newblock {\em Journal of Business and Economic Statistics}.
\newblock Forthcoming.

\bibitem[Dunbar et~al., 2013]{dlp13}
Dunbar, G.~R., Lewbel, A., and Pendakur, K. (2013).
\newblock Children's resources in collective households: identification,
  estimation, and an application to child poverty in {M}alawi.
\newblock {\em American Economic Review}, 103(1):438--471.

\bibitem[Dunbar et~al., 2019b]{DunbarLewbelPendakur2019}
Dunbar, G.~R., Lewbel, A., and Pendakur, K. (2019b).
\newblock Identification of random resource shares in collective households
  without preference similarity restrictions.
\newblock {\em The Journal of Business and Economic Statistics}.

\bibitem[Evdokimov, 2010]{Evdokimov2010}
Evdokimov, K. (2010).
\newblock Identification and estimation of a nonparametric panel data model
  with unobserved heterogeneity.
\newblock Working paper.

\bibitem[Evdokimov, 2011]{Evdokimov2011}
Evdokimov, K. (2011).
\newblock Nonparametric identification of a nonlinear panel model with
  application to duration analysis with multiple spells.
\newblock Working paper.

\bibitem[Fern\'{a}ndez-Val, 2009]{FernandezVal2009}
Fern\'{a}ndez-Val, I. (2009).
\newblock Fixed effects estimation of structural parameters and marginal
  effects in panel probit models.
\newblock {\em Journal of Econometrics}, 150(1):71--85.

\bibitem[Fern\'{a}ndez-Val and Weidner, 2016]{FernandezValWeidner2016}
Fern\'{a}ndez-Val, I. and Weidner, M. (2016).
\newblock Individual and time effects in nonlinear panel data models with large
  {N}, {T}.
\newblock {\em Journal of Econometrics}, 192(1):291--312.

\bibitem[Freyberger, 2018]{Freyberger18}
Freyberger, J. (2018).
\newblock Nonparametric panel data models with interactive fixed effects.
\newblock {\em The Review of Economic Studies}, 85(3):1824--1851.

\bibitem[Hahn and Newey, 2004]{HahnNewey}
Hahn, J. and Newey, W. (2004).
\newblock Jackknife and analytical bias reduction for nonlinear panel models.
\newblock {\em Econometrica}, 72(4):1295--1319.

\bibitem[Hoderlein and White, 2012]{HoderleinWhite2012}
Hoderlein, S. and White, H. (2012).
\newblock Nonparametric identification in nonseparable panel data models with
  generalized fixed effects.
\newblock {\em Journal of Econometrics}, 168(2):300--314.

\bibitem[Honor\'{e}, 1992]{Honore1992}
Honor\'{e}, B. (1992).
\newblock Trimmed lad and least squares estimation of truncated and censored
  regression models with fixed effects.
\newblock {\em Econometrica}, 60(3):533--565.

\bibitem[Honor\'{e}, 1993]{Honore1993}
Honor\'{e}, B. (1993).
\newblock Identification results for duration models with multiple spells.
\newblock {\em Review of Economic Studies}, 60(1):241--246.

\bibitem[Honor\'{e} and Hu, 2004]{HonoreHu2004}
Honor\'{e}, B. and Hu, L. (2004).
\newblock Estimation of cross sectional and panel data censored regression
  models with endogeneity.
\newblock {\em Journal of Econometrics}, 122(2):293--316.

\bibitem[Honor\'{e} and Kyriazidou, 2000]{HonoreKyriazidou2000Ecta}
Honor\'{e}, B. and Kyriazidou, E. (2000).
\newblock Panel discrete choice models with lagged dependent variables.
\newblock {\em Econometrica}, 78(4):839--874.

\bibitem[Honor\'{e} and Lewbel, 2002]{HonoreLewbel2002}
Honor\'{e}, B. and Lewbel, A. (2002).
\newblock Semiparametric binary choice panel data models without strictly
  exogeneous regressors.
\newblock {\em Econometrica}, 70(5):2053 -- 2063.

\bibitem[Horowitz, 1996]{Horowitz1996}
Horowitz, J.~L. (1996).
\newblock Semiparametric estimation of a regression model with an unknown
  transformation of the dependent variable.
\newblock {\em Econometrica}, 64(1):103--137.

\bibitem[Horowitz and Lee, 2004]{HorowitzLee2004}
Horowitz, J.~L. and Lee, S. (2004).
\newblock Semiparametric estimation of a panel data proportional hazards model
  with fixed effects.
\newblock {\em Journal of Econometrics}, 119(1):155--198.

\bibitem[Khan et~al., 2016]{KhanPonomarevaTamer2016}
Khan, S., Ponomareva, M., and Tamer, E. (2016).
\newblock Identification of panel data models with endogenous censoring.
\newblock {\em Journal of Econometrics}, 194(1):57--75.

\bibitem[Khan and Tamer, 2007]{Khan2007}
Khan, S. and Tamer, E. (2007).
\newblock Partial rank estimation of duration models with general forms of
  censoring.
\newblock {\em Journal of Econometrics}, 136(1):251--280.

\bibitem[Lee, 2008]{Lee2008}
Lee, S. (2008).
\newblock Estimating panel data duration models with censored data.
\newblock {\em Econometric Theory}, 24(5):1254--1276.

\bibitem[Lewbel, 2014]{Lewbel2014}
Lewbel, A. (2014).
\newblock An overview of the special regressor method.
\newblock In {\em Oxford Handbook of Applied Nonparametric and Semiparametric
  Econometrics and Statistics}, pages 38--62. Oxford University Press.

\bibitem[Lewbel and Yang, 2016]{lewbelyang2016}
Lewbel, A. and Yang, T. (2016).
\newblock Identifying the average treatment effect in ordered treatment models
  without unconfoundedness.
\newblock {\em Journal of Econometrics}, 195:1--22.

\bibitem[Ligon, 1998]{Ligon98}
Ligon, E. (1998).
\newblock Risk sharing and information in village economies.
\newblock {\em The Review of Economic Studies}, 65(4):847--864.

\bibitem[Lise and Yamada, 2019]{LiseYamada19}
Lise, J. and Yamada, K. (2019).
\newblock Household sharing and commitment: Evidence from panel data on
  individual expenditures and time use.
\newblock {\em The Review of Economic Studies}, 86(5):2184--2219.

\bibitem[Magnac, 2004]{Magnac2004}
Magnac, T. (2004).
\newblock Panel binary variables and suficiency: Generaling conditional logit.
\newblock {\em Econometrica}, 72(6):1859--1876.

\bibitem[Manski, 1975]{manski_maximum_1975}
Manski, C. (1975).
\newblock Maximum score estimation of the stochastic utility model of choice.
\newblock {\em Journal of Econometrics}, 3(3):205--228.

\bibitem[Manski, 1985]{Manski1985}
Manski, C. (1985).
\newblock Semiparametric analysis of discrete response: Asymptotic properties
  of the maximum score estimator.
\newblock {\em Journal of Econometrics}, 27(3):313--333.

\bibitem[Manski, 1987]{Manski1987}
Manski, C.~F. (1987).
\newblock Semiparametric analysis of random effects linear models from binary
  panel data.
\newblock {\em Econometrica}, 55(2):357--362.

\bibitem[Mazzocco, 2007]{Mazzocco07}
Mazzocco, M. (2007).
\newblock Household intertemporal behaviour: A collective characterization and
  a test of commitment.
\newblock {\em The Review of Economic Studies}, 74(3):857--895.

\bibitem[Mazzocco et~al., 2014]{MazzocoRuizYamaguchi14}
Mazzocco, M., Ruiz, C., and Yamaguchi, S. (2014).
\newblock Labor supply and household dynamics.
\newblock {\em American Economic Review}, 104(5):354--59.

\bibitem[Muellbauer, 1975]{Muellbauer75}
Muellbauer, J. (1975).
\newblock Aggregation, income distribution and consumer demand.
\newblock {\em The Review of Economic Studies}, 42(4):525--543.

\bibitem[Muellbauer, 1976]{Muellbauer76}
Muellbauer, J. (1976).
\newblock Community preferences and the representative consumer.
\newblock {\em Econometrica}, (5):979--999.

\bibitem[Muris, 2017]{Muris}
Muris, C. (2017).
\newblock Estimation in the fixed effects ordered logit model.
\newblock {\em The Review of Economics and Statistics}, 99(3):465--477.

\bibitem[Neyman and Scott, 1948]{NeymanScott1948}
Neyman, J. and Scott, E. (1948).
\newblock Consistent estimates based on partially consistent observations.
\newblock {\em Econometrica}, 16(1):1--32.

\bibitem[Rasch, 1960]{Rasch1960}
Rasch, G. (1960).
\newblock {\em Probabilistic Models for Some Intelligence and Attainment
  Tests}.
\newblock Copenhagen:Denmark Paedagogiske Institut.

\bibitem[Ridder, 1990]{Ridder1990}
Ridder, G. (1990).
\newblock The non-parametric identification of generalized accelerated
  failure-time models.
\newblock {\em Review of Economic Studies}, 57(2):167--181.

\bibitem[van~den Berg, 2001]{vandenBerg2001}
van~den Berg, G.~J. (2001).
\newblock Duration models: specification, identification and multiple
  durations.
\newblock In Heckman, J. and Leamer, E., editors, {\em Handbook of
  Econometrics}, volume~5, chapter~55, pages 3381--3460. Elsevier.

\bibitem[Voena, 2015]{Voena15}
Voena, A. (2015).
\newblock Yours, mine and ours: Do divorce laws affect the intertemporal
  behavior of married couples?
\newblock {\em American Economic Review}, 105(8):2295--2332.

\bibitem[Vreyer and Lambert, 2016]{DeVreyer2016}
Vreyer, P.~D. and Lambert, S. (2016).
\newblock Intra-household inequalities and poverty in senegal.
\newblock mimeo, Paris School of Economics.

\bibitem[Wang and Ghosh, 2012]{WangGhosh}
Wang, J. and Ghosh, S. (2012).
\newblock Shape restricted nonparametric regression with bernstein polynomials.
\newblock {\em Computational Statistics and Data Analysis}, 56:2729--2741.

\bibitem[Wilhelm, 2015]{Wilhelm2015}
Wilhelm, D. (2015).
\newblock Identification and estimation of nonparametric panel data regressions
  with measurement error.
\newblock Cemmap working paper CWP34/15.

\end{thebibliography}
\pagebreak{}

\appendix
\textbf{\Large{}ONLINE APPENDICES}{\Large\par}

\section{Proofs\label{sec:Appendix_Proofs}}

\subsection{Proof of Lemma \eqref{lem:Manskimedsign}\label{subsec:Proof-of-Lemma_mediansign}}
\begin{proof}
Define $\overline{D}=1\left\{ D_{1}\left(y_{1}\right)+D_{2}\left(y_{2}\right)=1\right\} $.
The proof consists in showing the following:
\begin{eqnarray}
 &  & \text{med}\left(D_{2}\left(y_{2}\right)-D_{1}\left(y_{1}\right)|X,\overline{\,D}=1\right)\label{eq:M1}\\
 & = & \text{sgn}\left(P\left(D\left(y_{1},y_{2}\right)=\left(0,1\right)|X,\overline{\,D}=1\right)-P\left(D\left(y_{1},y_{2}\right)=\left(1,0\right)|X,\overline{\,D}=1\right)\right)\label{eq:M2}\\
 & = & \text{sgn }\left(\frac{P\left(D\left(y_{1},y_{2}\right)=\left(0,1\right),\overline{\,D}=1|X\right)}{P(\overline{D}=1|X)}-\frac{P\left(D\left(y_{1},y_{2}\right)=\left(1,0\right),\,\overline{D}=1|X\right)}{P(\overline{D}=1|X)}\right)\label{eq:inbetween23}\\
 & = & \text{sgn }\left(P\left(D\left(y_{1},y_{2}\right)=\left(0,1\right),\overline{\,D}=1|X\right)-P\left(D\left(y_{1},y_{2}\right)=\left(1,0\right),\overline{\,D}=1|X\right)\right)\label{eq:alsoinbetween23}\\
 & = & \text{sgn}\left(P\left(D\left(y_{1},y_{2}\right)=\left(0,1\right)|X\right)-P\left(D\left(y_{1},y_{2}\right)=\left(1,0\right)|X\right)\right)\label{eq:M3}\\
 & = & \text{sgn}\left(P\left(D_{2}\left(y_{2}\right)=1|X\right)-P\left(D_{1}\left(y_{1}\right)=1|X\right)\right)\label{eq:M4}\\
 & = & \text{sgn}\left(\Delta X\beta-\gamma\left(y_{1},y_{2}\right)\right)\label{eq:M5}
\end{eqnarray}
where $\left(\ref{eq:M2}\right)$ follows since the random variable
$D_{2}\left(y_{2}\right)-D_{1}\left(y_{1}\right)\in\left\{ -1,1\right\} $,
which implies that 
\begin{eqnarray*}
 &  & \text{med}\left(D_{2}\left(y_{2}\right)-D_{1}\left(y_{1}\right)|X,\overline{\,D}=1\right)\\
 & = & \left\{ \begin{array}{c}
1\text{ if }P\left(D\left(y_{1},y_{2}\right)=\left(0,1\right)|X,\,\overline{D}=1\right)>P\left(D\left(y_{1},y_{2}\right)=\left(1,0\right)|X,\,\overline{D}=1\right)\\
-1\text{ if }P\left(D\left(y_{1},y_{2}\right)=\left(0,1\right)|X,\,\overline{D}=1\right)<P\left(D\left(y_{1},y_{2}\right)=\left(1,0\right)|X,\,\overline{D}=1\right)
\end{array},\right.
\end{eqnarray*}
$\left(\ref{eq:inbetween23}\right)$ follows from the definition of
conditional probability, $\left(\ref{eq:alsoinbetween23}\right)$
follows since the sign function is not affected by scaling both quantities
by the same positive factor (the denominator), $\left(\ref{eq:M3}\right)$
follows by the definition of $\overline{\,D}$, and $\left(\ref{eq:M4}\right)$
follows since: 
\begin{eqnarray*}
P\left(D_{2}\left(y_{2}\right)=1|X\right) & = & P\left(D\left(y_{1},y_{2}\right)=\left(0,1\right)|X\right)+P\left(D\left(y_{1},y_{2}\right)=\left(1,1\right)|X\right)\\
P\left(D_{1}\left(y_{1}\right)=1|X\right) & = & P\left(D\left(y_{1},y_{2}\right)=\left(1,0\right)|X\right)+P\left(D\left(y_{1},y_{2}\right)=\left(1,1\right)|X\right)
\end{eqnarray*}
Finally, $\left(\ref{eq:M5}\right)$ follows from Assumption \ref{A2}(ii),
which implies that, e.g., 
\begin{align*}
P\left(D_{2}\left(y_{2}\right)=1|\alpha,X\right) & >P\left(D_{1}\left(y_{1}\right)=1|\alpha,X\right)\Leftrightarrow\alpha+X_{2}\beta-h_{2}^{-}\left(y_{2}\right)>\alpha+X_{1}\beta-h_{1}^{-}\left(y_{1}\right).
\end{align*}
Integrating both sides over the conditional distribution of $\alpha$
given $X$ obtains: 
\begin{eqnarray*}
P\left(D_{2}\left(y_{2}\right)=1|X\right)>P\left(D_{1}\left(y_{1}\right)=1|X\right) & \Leftrightarrow & X_{2}\beta-h_{2}^{-}\left(y_{2}\right)>X_{1}\beta-h_{1}^{-}\left(y_{1}\right)\\
 & \Leftrightarrow & \Delta X\beta-\gamma\left(y_{1},y_{2}\right)>0.
\end{eqnarray*}
Result $\left(\ref{eq:med_sgn}\right)$ now follows.
\end{proof}

\subsection{Proof of Theorem \ref{thm:ID1}\label{subsec:Proof-of-Theorem_manski_ID}}
\begin{proof}
Following \citet{Manski1985}, it suffices to show that for an arbitrary
$\theta\in\Theta$, $\theta\neq\theta_{0}\equiv\theta_{0}\left(y_{1},y_{2}\right)$,
\begin{equation}
P\left(W\theta<0\leq W\theta_{0}\right)+P\left(W\theta_{0}<0\leq W\theta\right)>0.\label{eq:IDineq}
\end{equation}
Our proof follows very closely that in \citet{Manski1985}, with $W\theta$
taking the role of $xb$ and $W\theta_{0}$ taking the role of $x\beta$.
However, our scale normalization is different.

Without loss of generality, let $X_{K}$ be the continuous regressor
in Assumption \ref{A3}(i). Separate $\Delta X=\left(\Delta X_{-K},\Delta X_{K}\right)$
where the first component $\Delta X_{-K}$ represents all covariates
except the $K$-th one. Similarly, for any $\theta=\left(\beta,\gamma\right)\in\Theta$,
separate $\beta=\left(\beta_{-K},\beta_{K}\right)$. Furthermore denote
$W_{-K}=\left(\Delta X_{-K},-1\right)$ and $\theta_{-K}=\left(\beta_{-K},\gamma\right)$.

Assume that the associated regression coefficient $\beta_{0,K}>0$.
The case $\beta_{0,K}<0$ follows similarly. Let $\theta=\left(\beta,\gamma\right)\in\Theta$,
$\theta\neq\theta_{0}$. As in Manski (1985, p. 318), consider three
cases: (i) $\beta_{K}<0$; (ii) $\beta_{K}=0$; (iii) $\beta_{K}>0$. 

\textbf{Cases (i) and (ii). }$\beta_{K}\leq0$. The proof is identical
to that in Manski (1985), with $X\beta$ replaced by $W\theta$. The
fact that we use a different scale normalization does not come into
play.

\textbf{Case (iii).} $\beta_{K}>0$. note that 
\begin{align*}
P\left(W\theta<0\leq W\theta_{0}\right) & =P\left(-\frac{W_{-K}\theta_{0,-K}}{\beta_{0,K}}<\Delta X_{K}<-\frac{W_{-K}\theta_{-K}}{\beta_{K}}\right).\\
P\left(W\theta_{0}<0\leq W\theta\right) & =P\left(-\frac{W_{-K}\theta_{-K}}{\beta_{K}}<\Delta X_{K}<-\frac{W_{-K}\theta_{0,-K}}{\beta_{0,K}}\right).
\end{align*}
By assumption \ref{ass:normalization}, $\frac{\beta_{-K}}{\beta_{K}}\neq\frac{\beta_{0,-K}}{\beta_{0,K}}$,
which shows that the first $K$ components of the vector $\theta$
are not a scalar multiple of the first $K$ components of the vector
$\theta_{0}.$ Therefore, $\theta$ is not a scalar multiple of $\theta_{0}$.
In particular, $\frac{\theta_{0,-K}}{\beta_{0,K}}\neq\frac{\theta_{-K}}{\beta_{K}}$.
Additionally, assumption \ref{A3}(ii) implies that $P\left(\frac{W_{-K}\theta_{0,-K}}{\beta_{0,K}}\neq\frac{W_{-K}\theta_{-K}}{\beta_{K}}\right)>0$.
Hence at least one of the two probabilities above is positive so that
\eqref{eq:IDineq} holds.
\end{proof}

\subsection{Proof of Theorem \ref{thm:identify-h}\label{subsec:Proof-of-Theorem_identificationfunctions}}
\begin{proof}
Under Assumption \ref{ass:normalizeh}, $h_{1}^{-}(y_{0})=0$. Using
the pair $(y_{0},y_{2})$ for binarization thus obtains identification
of 
\begin{align*}
\gamma\left(y_{0},y_{2}\right) & =h_{2}^{-}(y_{2})-h_{1}^{-}(y_{0})\\
 & =h_{2}^{-}(y_{2}).
\end{align*}
By varying $y_{2}\in\mathcal{\underline{Y}}$, we identify the function
$h_{2}^{-}$ from the binary choice models associated with $\left\{ D\left(y_{0},y_{2}\right)=\left(D_{1}\left(y_{0}\right),D_{2}\left(y_{2}\right)\right),y_{2}\in\underline{\mathcal{Y}}\right\} $.

The pairs $(y_{0},y_{2})$ and $(y_{1},y_{2})$ identify the difference
\begin{align*}
\gamma\left(y_{0},y_{2}\right)-\gamma\left(y_{1},y_{2}\right) & =(h_{2}^{-}(y_{2})-h_{1}^{-}(y_{0}))-(h_{2}^{-}(y_{2})-h_{1}^{-}(y_{1}))\\
 & =h_{1}^{-}(y_{1}).
\end{align*}
By varying $y_{1}\in\mathcal{\underline{Y}}$ we therefore identify
$h_{1}^{-}$.

Thus, the functions $h_{1}^{-}$ and $h_{2}^{-}$ are identified.
Because of monotonicity of $h_{t}$ (Assumption \ref{A1}), and because
$\mathcal{Y}$ is known, $h_{t}^{-}$ contains all the information
about the pre-image of $h_{t}$. Knowledge of the pre-image of a function
is equivalent to knowledge of the function itself. Therefore, $h_{t}$
can be identified from $h_{t}^{-}$.
\end{proof}

\subsection{Proof of Theorem \eqref{thm:logitID}\label{subsec:Proof-of-logitID}\label{subsec:Proof-of-Theorem_identificationlogit}}
\begin{proof}
For the panel data binary choice model with logit errors, we obtain
\begin{align}
 & P(D_{2}\left(y_{2}\right)=1|\overline{D}\left(y_{1},y_{2}\right)=1,X,\alpha)\label{eq:logit_noalpha}\\
 & =\frac{P(D_{2}\left(y_{2}\right)=1,\overline{D}\left(y_{1},y_{2}\right)=1|X,\alpha)}{P\left(\overline{D}\left(y_{1},y_{2}\right)=1|X,\alpha\right)}\label{eq:logit_condo}\\
 & =\frac{P(D_{1}\left(y_{1}\right)=0,D_{2}\left(y_{2}\right)=1|X,\alpha)}{P\left(\overline{D}\left(y_{1},y_{2}\right)=1|X,\alpha\right)}\label{eq:logit_defDbar}\\
 & =\frac{P(D_{1}\left(y_{1}\right)=0,D_{2}\left(y_{2}\right)=1|X,\alpha)}{P(D_{1}\left(y_{1}\right)=0,D_{2}\left(y_{2}\right)=1|X,\alpha)+P(D_{1}\left(y_{1}\right)=1,D_{2}\left(y_{2}\right)=0|X,\alpha)}\label{eq:logit_totalprobs}\\
 & =\frac{1}{1+\frac{P(D_{1}\left(y_{1}\right)=1,D_{2}\left(y_{2}\right)=0|X,\alpha)}{P(D_{1}\left(y_{1}\right)=0,D_{2}\left(y_{2}\right)=1|X,\alpha)}}\label{eq:logit_ratio}\\
 & =\Lambda(\Delta X\beta-\gamma(y_{1},y_{2}))
\end{align}
where \ref{eq:logit_condo} follows from the definition of a conditional
probability; \ref{eq:logit_defDbar} follows because $D_{2}=1$ and
$\bar{D}=1$ are equivalent to $D_{1}=0$ and $D_{2}=1;$ \ref{eq:logit_totalprobs}
follows because $D_{1}+D_{2}=1$ happens precisely when either $\left(D_{1},D_{2}\right)=\left(1,0\right)$
or $\left(D_{1},D_{2}\right)=\left(0,1\right)$; \ref{eq:logit_ratio}
follows by dividing by the numerator; and the final expression follows
by the argument below. 

Note that $\frac{P(D_{1}\left(y_{1}\right)=1,D_{2}\left(y_{2}\right)=0|X,\alpha)}{P(D_{1}\left(y_{1}\right)=0,D_{2}\left(y_{2}\right)=1|X,\alpha)}$
equals

\begin{align}
 & \frac{P(D_{1}\left(y_{1}\right)=1|X,\alpha)P(D_{2}\left(y_{2}\right)=0|X,\alpha)}{P(D_{1}\left(y_{1}\right)=0|X,\alpha)P(D_{2}\left(y_{2}\right)=1|X,\alpha)}\label{eq:logit_serialindependence}\\
 & =\frac{\Lambda\left(\alpha+X_{1}\beta-h_{1}^{-}(y_{1})\right)\left[1-\Lambda\left(\alpha+X_{2}\beta-h_{2}^{-}(y_{2})\right)\right]}{\left[1-\Lambda\left(\alpha+X_{1}\beta-h_{1}^{-}(y_{1})\right)\right]\Lambda\left(\alpha+X_{2}\beta-h_{2}^{-}(y_{2})\right)}\label{eq:logit_probas}\\
 & =\frac{\exp\left(\alpha+X_{1}\beta-h_{1}^{-}(y_{1})\right)}{\exp\left(\alpha+X_{2}\beta-h_{2}^{-}(y_{2})\right)}\label{eq:logit_simplify}\\
 & =\exp\left(\left(X_{1}-X_{2}\right)\beta-\left(h_{1}^{-}(y_{1})-h_{2}^{-}(y_{2})\right)\right),
\end{align}
where \ref{eq:logit_serialindependence} follows from serial independence
of $\left(U_{1},U_{2}\right)$ conditional on $\left(X,\alpha\right)$;
\ref{eq:logit_probas} from the logit model specification; and \ref{eq:logit_simplify}
follows from 
\[
\Lambda\left(u\right)/\left(1-\Lambda\left(u\right)\right)=\exp\left(u\right).
\]

The discussion above implies that \ref{eq:logit_noalpha} does not
depend on $\alpha$. Hence, 
\begin{align*}
p\left(X,y_{1},y_{2}\right) & \equiv P(D_{2}\left(y_{2}\right)=1|\overline{D}\left(y_{1},y_{2}\right)=1,X)\\
 & =\Lambda(\Delta X\beta-\gamma(y_{1},y_{2}))\\
 & =\Lambda\left(W\theta\left(y_{1},y_{2}\right)\right).
\end{align*}
and note that $p\left(X,y_{1},y_{2}\right)$ is identified from the
distribution of $\left(Y,X\right)$, which is assumed to be observed.
Then 
\[
\theta(y_{1},y_{2})=[E(W'W)]^{-1}E(W'\Lambda^{-1}(p(X,y_{1},y_{2})))
\]
by invertibility of $\Lambda$ and the full rank assumption on $E\left[W'W\right]$.
This establishes identification of $\beta$ and $\gamma\left(y_{1},y_{2}\right)$.
The proof in Section \ref{subsec:Proof-of-Theorem_identificationfunctions}
applies, which shows the identification of $h_{1}$ and $h_{2}$.
\end{proof}

\subsection{Proof of Theorems \ref{thm:conditional-mean-a-ID} and \ref{thm:conditional-var-a-ID}\label{subsec:Proof-of-Theorem_conditionalmean}}
\begin{proof}
(a) Note that, without Assumption \ref{ass:normalizeh}, it follows
immediately from the proof of Theorem \ref{thm:identify-h} that we
can only identify $\left\{ h_{t}^{-1}\left(y\right)-c_{1},y\in\mathcal{Y},t=1,2\right\} $,
i.e. we identify the functions $g_{t}\left(y\right)$$,$ $t=1,2.$

Because the functions $g_{1},g_{2}$ are identified, and because the
distribution of $\left(Y,X\right)$ is observable, we can identify
the distribution of the left hand side of the relation below:
\begin{align*}
g_{t}\left(Y_{t}\right)-X_{t}\beta & =h_{t}^{-1}\left(Y_{t}\right)-X_{t}\beta-c_{1}\\
 & =\alpha-U_{t}-c_{1},
\end{align*}
for $t=1,2$.

It follows that for all $x\in\mathcal{X},$
\begin{align*}
\mu\left(x\right) & =E\left[\left.\alpha\right|X=x\right]\\
 & =E\left[\left.\alpha-U_{t}\right|X=x\right]+E\left[\left.U_{t}\right|X=x\right]\\
 & =E\left[\left.h_{t}^{-1}\left(Y_{t}\right)-X_{t}\beta\right|X=x\right]+m\\
 & =E\left[\left.g_{t}\left(Y_{t}\right)-X_{t}\beta\right|X=x\right]+c_{1}+m.
\end{align*}
is identified up to the constants $c_{1}$ and $m$.

The difference in conditional means at any two values $x,x'\in\mathcal{X}$
is therefore identified and given by:
\begin{align*}
 & \mu\left(x\right)-\mu\left(x^{'}\right)=\\
 & \left(E\left[\left.g_{t}\left(Y_{t}\right)-X_{t}\beta\right|X=x\right]+c_{1}+m\right)-\left(E\left[\left.g_{t}\left(Y_{t}\right)-X_{t}\beta\right|X=x^{'}\right]+c_{1}+m\right)\\
 & =E\left[\left.g_{t}\left(Y_{t}\right)-X_{t}\beta\right|X=x\right]-E\left[\left.g_{t}\left(Y_{t}\right)-X_{t}\beta\right|X=x^{'}\right].
\end{align*}

(b) To see that the conditional variance is identified, note that
for all $x\in\mathcal{X},$
\begin{align*}
 & Cov\left(\left.g_{2}\left(Y_{2}\right)-X_{2}\beta,g_{1}\left(Y_{1}\right)-X_{1}\beta\right|X=x\right)=\\
 & =Cov\left(\left.h_{2}^{-1}\left(Y_{2}\right)-X_{2}\beta-c_{1},h_{1}^{-1}\left(Y_{1}\right)-X_{1}\beta-c_{1}\right|X=x\right)\\
 & =Cov\left(\left.\alpha-U_{1}-c_{1},\alpha-U_{2}-c_{1}\right|X=x\right)\\
 & =Var\left(\left.\alpha\right|X=x\right)-Cov\left(\left.\alpha,U_{1}\right|X=x\right)-Cov\left(\left.\alpha,U_{2}\right|X=x\right)+Cov\left(\left.U_{1},U_{2}\right|X=x\right)\\
 & =Var\left(\left.\alpha\right|X=x\right)=\sigma_{\alpha}^{2}\left(x\right),
\end{align*}
where the first equality follows from the definition of $g_{t}$;
the second from the model; the third equality follows from the linearity
of the covariance; and the fourth equality uses assumption (4c) and
(4d).
\end{proof}

\section{GMM Estimator\label{app:GMM-Estimator-}}

\citet{BotosaruMuris} develop $\sqrt{n}$-asymptotically normal estimator
for $\beta$ and (invertible) $h_{t}$ for the FELT model. For the
empirical application in this paper, it is desirable to condition
on household characteristics $z$. This extension is not covered by
\citet{BotosaruMuris}. We describe here a GMM estimator that covers
this case.

The women's food demand equation (\ref{eq: FELT w z}) is a FELT model,
conditional on observed covariates $z$. Denote the inverse demand
functions $g_{t}(Y_{it},z_{it})=h_{t}^{-1}(\cdot,z_{it})$. Given
(\ref{eq: FELT w z}) and a two-period setting with $t=1,2$, and
time-invariant demographics $z_{it}=z_{i}$, we have 
\begin{equation}
\alpha_{i}+X_{it}-U_{it}=g_{t}(Y_{it},z_{i}),\label{eq:gt-1}
\end{equation}
 implying the conditional moment condition
\begin{equation}
E\left[\left.g_{2}(Y_{i2},z_{i})-g_{1}(Y_{i1},z_{i})-\triangle X_{it}\right|X_{i1},X_{i2},z_{i}\right]=0.
\end{equation}

\textit{Sieve estimators}. We approximate the inverse demand functions,
$g_{t}$, $t=1,2,\text{ }$ using Bernstein polynomials. This allows
us to impose monotonicity in a straightforward way, see, e.g., \citet{WangGhosh}.
Let $k=0,...,K$ index univariate Bernstein functions denoted as $\mathcal{B}_{k}\left(\cdot,K\right)$,
where $K$ is the degree of the Bernstein polynomial and where the
Bernstein functions are given by:
\[
\mathcal{B}_{k}\left(u,K\right)=\binom{K}{k}u^{k}\left(1-u\right)^{K-k},\,u\in\left[0,1\right].
\]
Let $l=0,...,L$ index the elements of $z_{i}$, and let the first
(index $0$) element of $z_{i}$ be a constant equal to 1, that is
$z_{i}=\left[1,z_{i1},\dots,z_{iL}\right]$.

Our approximation to $g_{t}\left(Y_{it},z_{i}\right)$ is given by:
\begin{align*}
g_{t}\left(Y_{it},z_{i}\right) & \approx\sum_{k=0}^{K}\beta_{kt}\left(z_{i}\right)\mathcal{B}_{k}\left(Y_{it},K\right)\approx\sum_{l=0}^{L}\sum_{k=0}^{K}z_{i}^{\left(l\right)}\beta_{kt}^{\left(l\right)}\mathcal{B}_{k}\left(Y_{it},K\right).
\end{align*}
For example, when there are no covariates $L=0$, the expression above
reduces to the standard Bernstein polynomial approximation $g_{t}\left(Y_{it}\right)\approx\sum_{k=0}^{K}\beta_{kt}^{\left(0\right)}\mathcal{B}_{k}\left(Y_{it},K\right).$
The Bernstein coefficients are linear functions of the demographics,
and the dependence of the Bernstein coefficients on the demographics
is allowed to vary with time. In this way, the relationship between
the (nonlinear) demand $Y_{it}$ and the latent budget $Y_{it}^{*}$
depends on demographic characteristics and on prices, through $t$.
Since Bernstein polynomials are defined on the unit interval, we normalize
$Y_{it}$ to be uniform on $\left[0,1\right]$ by applying its empirical
distribution function.\footnote{We use this normalization for the estimation of the Bernstein coefficients,
but we present our results in terms of untransformed $Y_{it}$. These
results are obtained by applying the inverse transformation to the
function estimated with transformed data.}

\textit{Unconditional moments}. To form GMM estimators, we construct
the following unconditional moments:
\[
E\left[\left(\sum_{l=0}^{L}\sum_{k=0}^{K}z_{i}^{\left(l\right)}\left(\beta_{k2}^{\left(l\right)}\mathcal{B}_{k}\left(Y_{i2},K\right)-\beta_{k1}^{\left(l\right)}\mathcal{B}_{k}\left(Y_{i1},K\right)\right)-\triangle X_{it}\right)\mathcal{B}_{k'}(x_{it},K)z_{i}^{\left(l'\right)}\right]=0
\]
for $k'=0,\dots,K$, $l'=0,\dots,L$, and $t=1,2$, and

\[
E\left[\left(\sum_{l=0}^{L}\sum_{k=0}^{K}z_{i}^{\left(l\right)}\left(\beta_{k2}^{\left(l\right)}\mathcal{B}_{k}\left(Y_{i2},K\right)-\beta_{k1}^{\left(l\right)}\mathcal{B}_{k}\left(Y_{i1},K\right)\right)-\triangle X_{it}\right)X_{it}z_{i}^{\left(l'\right)}\right]=0,
\]
for $l'=0,\dots,L$ and $t=1,2.$ We include the second condition
(where the logged household budget $X_{it}$ is exogenous) because
we ultimately wish to consider the correlation of $\alpha_{i}$ and
$X_{it}$. For a given order of the sieve approximation, $K$, the
equations above amount to a parametric linear GMM problem.

We impose increasingness on the estimates of the functions $g_{1}$
and $g_{2}$ by imposing $\beta_{kt}\left(z_{i}\right)\geq\beta_{k-1,t}\left(z_{i}\right),$
for all $z_{i}$, for all $t$, and for $k\geq2$. This results in
a quadratic programming problem with linear inequality restrictions,
which we implement in \texttt{R} using the \texttt{quadprog} package.

\textit{Degree of Bernstein polynomial}. Implementing this method
requires the selection of the degree of the Bernstein polynomial,
$K$. While developing a formal selection rule for this parameter
would be desirable, it is beyond the scope of the present paper. Nonetheless,
we adopt an informal selection rule for the number of Bernstein basis
functions -- the smoothing parameter -- based on the following observation.
In our semiparametric setting, the estimators are known to have the
same asymptotic distribution for a range of smoothing parameters (see,
e.g., \citet{Chen2007}). When the number of Bernstein basis functions
is small, the bias dominates and the estimates exhibit a decreasing
bias as the number of terms increases. On the other hand, when the
number of basis functions is large, the statistical noise dominates.
We implement our estimation method over a range of smoothing parameter
values, that is, $K\in\left\{ 1,2,\dots,12\right\} $, in search of
a region where the estimates are not very sensitive to small variations
in the smoothing parameter. We select the mid-point of that region,
so our main results use $K=8$. We present results for $K\in\left\{ 1,4,8,10\right\} $
in the Appendix. In the Appendix table, ``X'' denotes a case where
an estimated variance is negative. 

\textit{Confidence bands}. The confidence bands presented here are
computed via the nonparametric bootstrap. We acknowledge that these
may not be valid as the shape constraints may be binding, see for
example \citet{Andrews99,Andrews00}. Furthermore, they would have
to reflect the choice of the order of the Bernstein polynomial. The
theory to construct valid uniform confidence bands for our estimator
is not yet available. We leave this as a topic for future research.
Two possible solutions are to not impose monotonicity on the estimator
for $h_{t}$ or to use a subset of households with fixed$-$z and
implement the estimator proposed by \citet{BotosaruMuris}. Both monotonicity
and accounting for household characteristics are essential components
of our empirical application, so we do not pursue these approaches.

We use 1,000 bootstrap replications for all our results. We report
pointwise 95\% confidence bands. All reported estimates in the main
text are bias-corrected using the bootstrap.

\textit{Estimates and parameters of interest}. We present estimates
for the demand functions, $\widehat{h}_{t}(x,\overline{z})=\widehat{g}_{t}^{-1}(\cdot,\overline{z})$
for $t=1,2$, where $\bar{z}$ represents a household with 1 child
and has other observed characteristics that are the average of those
of 1-child households. The functions $h_{t}$ (demand functions) are
not of direct interest, but identification of the $h_{t}$ supports
identification of moments of the distribution of fixed effects $\alpha_{i}$.
In our context, $\alpha_{i}$ equals the log of the resource share
of the woman in household $i$. 

We characterize the following interesting features of the distribution
of resource shares. Recall from Theorems \ref{thm:conditional-mean-a-ID}
and \ref{thm:conditional-var-a-ID} that identification of features
of this distribution does not impose a normalization on assignable
good demand functions, and only identifies the distribution of logged
resource shares (fixed effects) up to location. Consequently, we only
identify features of the resource share distribution up to a scale
normalization.

Let $\widehat{g}_{it}=\widehat{g}_{t}(Y_{it},z_{i})$ equal the predicted
values of the inverse demand functions at the observed data. Recall
that $g_{t}\left(Y_{it},z_{i}\right)=\alpha_{i}+X_{it}-U_{it},$ so
we can think of $\widehat{g}_{it}-X_{it}$ as a prediction of $\alpha_{i}-U_{it}$.
We then compute the following summary statistics of interest, leaving
the dependence of $\hat{g}_{it},\,t=1,2,$ on $z_{i}$ implicit:
\begin{enumerate}
\item an estimate of the standard deviation of $\alpha_{i}$ given by 
\[
\hat{std}\left(\alpha\right)=\sqrt{\hat{cov}\left(\left(\widehat{g}_{i2}-X_{i2}\right),\left(\widehat{g}_{i1}-X_{i1}\right)\right)},
\]
where $\hat{cov}$ denotes the sample covariance. The standard deviation
of logs is a standard (scale-free) inequality measure. So this gives
a direct measure of inter-household variation in women's resource
shares.
\item an estimate of the standard deviation of the projection error, $e_{i},$
of $\alpha_{i}$ on $\bar{X}_{i}=\frac{1}{2}\left(X_{i1}+X_{i2}\right)$
and $Z_{i}$. Consider the projection
\[
\alpha_{i}=\gamma_{1}\bar{X}_{i}+\gamma_{2}Z_{i}+e_{i},
\]
where $Z_{i}$ contains a constant. We are interested in the standard
deviation of $e_{i}$. To obtain this parameter, we compute estimators
for $\gamma_{1},\gamma_{2}$ from the pooled linear regression of
$\hat{g}_{it}-X_{it}$ on $\bar{X}_{i}\text{ and }Z_{i}.$ Call these
estimators $\hat{\gamma}_{1},\hat{\gamma}_{2}$. Then, as in $\hat{std}\left(\alpha\right)$
in (1), an estimate of the standard deviation of $e_{i}$ is given
by:
\[
\hat{std}\left(e_{i}\right)=\sqrt{\hat{cov}\left(\left(\widehat{g}_{i2}-X_{i2}-\hat{\gamma}_{1}\bar{X}_{i}-\hat{\gamma}_{2}Z_{i}\right),\left(\widehat{g}_{i1}-X_{i1}-\hat{\gamma}_{1}\bar{X}_{i}-\hat{\gamma}_{2}Z_{i}\right)\right)}.
\]
This object measures the amount of variation in $\alpha_{i}$ that
cannot be explained with observed regressors. If it is zero, then
we don't really need to account for household-level unobserved heterogeneity
in resource shares. It is much larger than zero, then accounting for
household-level unobserved heterogeneity is important.
\item an estimate of the standard deviation of $\alpha_{i}+X_{it}$ for
$t=1,2$, computed as
\begin{align*}
\hat{std}\left(\alpha_{i}+X_{it}\right) & =\sqrt{\hat{var}\left(\alpha_{i}\right)+\hat{var}\left(X_{it}\right)+2\hat{cov}\left(\alpha_{i},X_{it}\right)},
\end{align*}
where 
\begin{align*}
\hat{cov}\left(\alpha_{i},X_{i1}\right) & =cov\left(\hat{g}_{i1}-X_{i1},X_{i1}\right),\\
\hat{cov}\left(\alpha_{i},X_{i2}\right) & =cov\left(\hat{g}_{i2}-X_{i2},X_{i2}\right),
\end{align*}
and $\hat{var}\left(\alpha_{i}\right)=\left(\hat{std}\left(\alpha\right)\right)^{2}$
and $\hat{var}\left(X_{it}\right),\,t=1,2,$ is observed in the data.
Since $\alpha_{i}+X_{it}$ is a measure of the woman's shadow budget,
$\hat{std}\left(\alpha_{i}+X_{it}\right)$ is a measure of inter-household
inequality in women's shadow budgets. This inequality measure is directly
comparable to the standard deviation of $X_{i}$ (shown in Table 1),
which measures inequality in household budgets.
\item an estimate of the covariance of $\alpha_{i},X_{it}$ for $t=1,2$,
denoted $\hat{cov}\left(\alpha_{i},X_{it}\right)$. This object is
of direct interest to applied researchers using cross-sectional data
to identify resource shares. If this covariance is non-zero, then
the independence of resource shares and household budgets is cast
into doubt, and identification strategies based on this restriction
are threatened.
\end{enumerate}
Of these, the first 2 summary statistics are about the variance of
fixed effects, and are computed using data from both years. Their
validity requires serial independence of the measurement errors $U_{it}$.
In contrast, the second 2 summary statistics are about the correlation
of fixed effects with the household budget, and are computed at the
year level. They are valid with stationary $U_{it}$, even in the
presence of serial correlation.

We also consider the multivariate relationship between resource shares,
household budgets and demographics. Recall that the fixed effect $\alpha_{i}$
subject to a location normalization; this means that resource shares
are subject to a scale normalization. So, we construct an estimate
of the woman's resource share in each household as $\widehat{\eta}_{i}=\exp\left(\frac{1}{2}\left(\hat{g}_{i1}-X_{i1}\right)+\left(\hat{g}_{i2}-X_{i2}\right)\right)$,
normalized to have an average value of $0.33$. Then, we regress estimated
resource shares $\widehat{\eta}_{i}$ on $\bar{X}_{i}$ and $Z_{i}$,
and present the estimated regression coefficients, which may be directly
compared with similar estimates in the cross-sectional literature. 

The estimated coefficient on $X_{i}$ gives the conditional dependence
of resource shares on household budgets, and therefore speaks to the
reasonableness of the restriction that resource shares are independent
of those budgets (an identifying restriction used in the cross-sectional
literature). Finally, using the estimate of the variance of fixed
effects, we construct an estimate of $R^{2}$ in the regression of
resource shares on observed covariates. This provides an estimate
of how much unobserved heterogeneity matters in the overall variation
of resource shares.

We also provide estimates that use women's clothing as the assignable
non-shareable good. Here, clothing expenditure is equal to four times
the reported three-month recall expenditure on the following female-specific
clothing items: Saree; Blouse/ petticoat; Salwar kameez; and Orna.
We note that although clothing is a semi-durable good, there are 3
years between the waves of the panel. Consequently, we do not think
that the demands across periods will be strongly correlated due to
the durability of clothing purchased.

Appendix table ``Additional Estimates'' gives results for $K\in\left\{ 1,4,8,10\right\} $
for food (top panel) and clothing (bottom panel). In the table, ``X''
denotes a case where an estimated variance is negative. 

\section{Descriptive Statistics\label{sec:Descriptive-Statistics}}

\begin{table}[H]
\begin{centering}
\begin{tabular}{lllll|ll}
\hline 
\multicolumn{7}{l}{Table 2: Descriptive Statistics}\tabularnewline
\hline 
 & \multicolumn{4}{l|}{raw data} & \multicolumn{2}{l}{top- and bottom-coded}\tabularnewline
$n=871$ &  &  &  &  & \multicolumn{2}{l}{and normalized}\tabularnewline
Variable & Mean & Std Dev & Min & Max & Mean & Std Dev\tabularnewline
\hline 
$X_{i1}$ & 11.68 & 0.49 & 10.24 & 13.68 &  & \tabularnewline
$X_{i2}$ & 12.02 & 0.53 & 10.70 & 14.08 &  & \tabularnewline
$\triangle X_{i}$ & 0.34 & 0.43 & -0.94 & 2.31 &  & \tabularnewline
$Y_{i1}$ & 1561 & 1277 & 0 & 13500 &  & \tabularnewline
$Y_{i2}$ & 1912 & 1727 & 0 & 15000 &  & \tabularnewline
age of woman & 32.76 & 7.64 & 19.00 & 90.00 & 0.35 & 0.25\tabularnewline
age of man & 39.77 & 10.76 & 15.00 & 105.00 & 0.37 & 0.25\tabularnewline
$2$ children & 0.47 & 0.50 & 0.00 & 1.00 & 0.47 & 0.50\tabularnewline
$3$ or $4$ children & 0.19 & 0.39 & 0.00 & 1.00 & 0.19 & 0.39\tabularnewline
age of children & 8.37 & 2.75 & 1.00 & 14.00 & 0.50 & 0.25\tabularnewline
fraction girl children & 0.46 & 0.39 & 0.00 & 1.00 & 0.46 & 0.39\tabularnewline
education of woman & 4.10 & 3.43 & 0.00 & 10.00 & 0.44 & 0.37\tabularnewline
education of man & 3.41 & 3.74 & 0.00 & 10.00 & 0.34 & 0.37\tabularnewline
\hline 
\end{tabular}
\par\end{centering}
\caption{Summary statistics.}
\end{table}

\section{Additional estimates\label{sec:Additional-estimates}}

\begin{sidewaystable}[H]
\begin{centering}
\begin{tabular}{ccccccccccccc}
\hline 
 & {\scriptsize{}1} &  &  & {\scriptsize{}4} &  &  & {\scriptsize{}8} &  &  & {\scriptsize{}10} &  & \tabularnewline
 & {\scriptsize{}Estimate} & {\scriptsize{}q(2.5)} & {\scriptsize{}q(97.5)} & {\scriptsize{}Estimate} & {\scriptsize{}q(2.5)} & {\scriptsize{}q(97.5)} & {\scriptsize{}Estimate} & {\scriptsize{}q(2.5)} & {\scriptsize{}q(97.5)} & {\scriptsize{}Estimate} & {\scriptsize{}q(2.5)} & {\scriptsize{}q(97.5)}\tabularnewline
\hline 
\multicolumn{3}{l}{\textbf{\scriptsize{}Variability in unobservables: Food}} &  &  &  &  &  &  &  &  &  & \tabularnewline
{\scriptsize{}sd\_a} & {\scriptsize{}0.3574} & {\scriptsize{}0.2258} & {\scriptsize{}0.5025} & {\scriptsize{}0.3197} & {\scriptsize{}0.2210} & {\scriptsize{}0.4376} & {\scriptsize{}0.2647} & {\scriptsize{}0.1518} & {\scriptsize{}0.3707} & {\scriptsize{}0.2286} & {\scriptsize{}0.0462} & {\scriptsize{}0.3592}\tabularnewline
{\scriptsize{}sd\_e} & {\scriptsize{}0.1947} & {\scriptsize{}0.1401} & {\scriptsize{}0.2377} & {\scriptsize{}0.1845} & {\scriptsize{}0.1413} & {\scriptsize{}0.2202} & {\scriptsize{}0.1637} & {\scriptsize{}0.1262} & {\scriptsize{}0.1931} & {\scriptsize{}0.1631} & {\scriptsize{}0.1308} & {\scriptsize{}0.1908}\tabularnewline
{\scriptsize{}sd\_ax1} & {\scriptsize{}-0.0506} & {\scriptsize{}-0.0934} & {\scriptsize{}-0.0085} & {\scriptsize{}-0.0805} & {\scriptsize{}-0.1300} & {\scriptsize{}-0.0246} & {\scriptsize{}-0.0901} & {\scriptsize{}-0.1292} & {\scriptsize{}-0.0429} & {\scriptsize{}-0.0781} & {\scriptsize{}-0.1166} & {\scriptsize{}-0.0339}\tabularnewline
{\scriptsize{}sd\_ax2} & {\scriptsize{}-0.0549} & {\scriptsize{}-0.0979} & {\scriptsize{}-0.0129} & {\scriptsize{}-0.0832} & {\scriptsize{}-0.1313} & {\scriptsize{}-0.0277} & {\scriptsize{}-0.1034} & {\scriptsize{}-0.1418} & {\scriptsize{}-0.0561} & {\scriptsize{}-0.0930} & {\scriptsize{}-0.1310} & {\scriptsize{}-0.0497}\tabularnewline
{\scriptsize{}cov\_ax1} & {\scriptsize{}0.5159} & {\scriptsize{}0.4067} & {\scriptsize{}0.6465} & {\scriptsize{}0.4272} & {\scriptsize{}0.3279} & {\scriptsize{}0.5763} & {\scriptsize{}0.3537} & {\scriptsize{}0.2720} & {\scriptsize{}0.4605} & {\scriptsize{}0.3635} & {\scriptsize{}0.2846} & {\scriptsize{}0.4583}\tabularnewline
{\scriptsize{}cov\_ax2} & {\scriptsize{}0.5444} & {\scriptsize{}0.4489} & {\scriptsize{}0.6637} & {\scriptsize{}0.4636} & {\scriptsize{}0.3757} & {\scriptsize{}0.6107} & {\scriptsize{}0.3763} & {\scriptsize{}0.3010} & {\scriptsize{}0.4760} & {\scriptsize{}0.3822} & {\scriptsize{}0.3104} & {\scriptsize{}0.4781}\tabularnewline
\hline 
\multicolumn{3}{l}{\textbf{\scriptsize{}Regression coefficients: Food}} &  &  &  &  &  &  &  &  &  & \tabularnewline
{\scriptsize{}$R^{2}:\overline{X}_{i},z_{i}$ on $\eta_{i}$} & {\scriptsize{}0.6749} & {\scriptsize{}0.5292} & {\scriptsize{}0.8088} & {\scriptsize{}0.6341} & {\scriptsize{}0.4179} & {\scriptsize{}0.8026} & {\scriptsize{}0.5205} & {\scriptsize{}0.3361} & {\scriptsize{}0.6508} & {\scriptsize{}0.5369} & {\scriptsize{}0.3797} & {\scriptsize{}0.6447}\tabularnewline
{\scriptsize{}$\overline{X}_{i}$} & {\scriptsize{}-0.0353} & {\scriptsize{}-0.0643} & {\scriptsize{}-0.0083} & {\scriptsize{}-0.0237} & {\scriptsize{}-0.0534} & {\scriptsize{}0.0039} & {\scriptsize{}-0.0452} & {\scriptsize{}-0.0700} & {\scriptsize{}-0.0196} & {\scriptsize{}-0.0481} & {\scriptsize{}-0.0731} & {\scriptsize{}-0.0234}\tabularnewline
{\scriptsize{}age\_women} & {\scriptsize{}-0.1095} & {\scriptsize{}-0.4028} & {\scriptsize{}0.2182} & {\scriptsize{}-0.2807} & {\scriptsize{}-0.5228} & {\scriptsize{}-0.0498} & {\scriptsize{}-0.2513} & {\scriptsize{}-0.4437} & {\scriptsize{}-0.0545} & {\scriptsize{}-0.3438} & {\scriptsize{}-0.5891} & {\scriptsize{}-0.1424}\tabularnewline
{\scriptsize{}age\_men} & {\scriptsize{}-0.0284} & {\scriptsize{}-0.3205} & {\scriptsize{}0.2123} & {\scriptsize{}0.1146} & {\scriptsize{}-0.0846} & {\scriptsize{}0.2811} & {\scriptsize{}0.2845} & {\scriptsize{}0.1389} & {\scriptsize{}0.4191} & {\scriptsize{}0.2904} & {\scriptsize{}0.1306} & {\scriptsize{}0.4392}\tabularnewline
{\scriptsize{}children2} & {\scriptsize{}-0.1948} & {\scriptsize{}-0.3296} & {\scriptsize{}-0.0860} & {\scriptsize{}-0.1955} & {\scriptsize{}-0.3180} & {\scriptsize{}-0.0884} & {\scriptsize{}-0.0502} & {\scriptsize{}-0.1694} & {\scriptsize{}0.0413} & {\scriptsize{}-0.0821} & {\scriptsize{}-0.2229} & {\scriptsize{}0.0167}\tabularnewline
{\scriptsize{}children3p} & {\scriptsize{}-0.2134} & {\scriptsize{}-0.3372} & {\scriptsize{}-0.1132} & {\scriptsize{}-0.2231} & {\scriptsize{}-0.3440} & {\scriptsize{}-0.0991} & {\scriptsize{}-0.1248} & {\scriptsize{}-0.2407} & {\scriptsize{}0.0140} & {\scriptsize{}-0.1159} & {\scriptsize{}-0.2466} & {\scriptsize{}0.0197}\tabularnewline
{\scriptsize{}age\_children} & {\scriptsize{}-0.3004} & {\scriptsize{}-0.5448} & {\scriptsize{}-0.0620} & {\scriptsize{}-0.0819} & {\scriptsize{}-0.2875} & {\scriptsize{}0.1477} & {\scriptsize{}-0.0095} & {\scriptsize{}-0.1979} & {\scriptsize{}0.2151} & {\scriptsize{}0.0949} & {\scriptsize{}-0.0910} & {\scriptsize{}0.3209}\tabularnewline
{\scriptsize{}frac\_girl} & {\scriptsize{}-0.0338} & {\scriptsize{}-0.1724} & {\scriptsize{}0.1209} & {\scriptsize{}0.0674} & {\scriptsize{}-0.0487} & {\scriptsize{}0.2049} & {\scriptsize{}0.0136} & {\scriptsize{}-0.0881} & {\scriptsize{}0.1246} & {\scriptsize{}-0.0322} & {\scriptsize{}-0.1466} & {\scriptsize{}0.0854}\tabularnewline
{\scriptsize{}edu\_women} & {\scriptsize{}0.0250} & {\scriptsize{}-0.1319} & {\scriptsize{}0.1803} & {\scriptsize{}0.0012} & {\scriptsize{}-0.1546} & {\scriptsize{}0.2036} & {\scriptsize{}-0.0325} & {\scriptsize{}-0.1519} & {\scriptsize{}0.1198} & {\scriptsize{}-0.1096} & {\scriptsize{}-0.2278} & {\scriptsize{}0.0132}\tabularnewline
{\scriptsize{}edu\_men} & {\scriptsize{}0.0079} & {\scriptsize{}-0.1213} & {\scriptsize{}0.1533} & {\scriptsize{}-0.1385} & {\scriptsize{}-0.2764} & {\scriptsize{}0.0091} & {\scriptsize{}-0.1330} & {\scriptsize{}-0.2589} & {\scriptsize{}-0.0015} & {\scriptsize{}-0.0202} & {\scriptsize{}-0.1501} & {\scriptsize{}0.1298}\tabularnewline
\hline 
\multicolumn{3}{c}{\textbf{\scriptsize{}Variability in unobservables: Cloth}} &  &  &  &  &  &  &  &  &  & \tabularnewline
{\scriptsize{}sd\_a} & {\scriptsize{}0.4729} & {\scriptsize{}X} & {\scriptsize{}0.7989} & {\scriptsize{}0.4224} & {\scriptsize{}0.2622} & {\scriptsize{}0.6944} & {\scriptsize{}0.2690} & {\scriptsize{}0.1472} & {\scriptsize{}0.4265} & {\scriptsize{}0.2443} & {\scriptsize{}0.1142} & {\scriptsize{}0.3749}\tabularnewline
{\scriptsize{}sd\_e} & {\scriptsize{}0.2350} & {\scriptsize{}X} & {\scriptsize{}0.3936} & {\scriptsize{}0.2131} & {\scriptsize{}0.1139} & {\scriptsize{}0.2875} & {\scriptsize{}X} & {\scriptsize{}X} & {\scriptsize{}0.1210} & {\scriptsize{}X} & {\scriptsize{}X} & {\scriptsize{}X}\tabularnewline
{\scriptsize{}sd\_ax1} & {\scriptsize{}-0.1254} & {\scriptsize{}-0.2157} & {\scriptsize{}-0.0341} & {\scriptsize{}-0.0503} & {\scriptsize{}-0.1049} & {\scriptsize{}0.0304} & {\scriptsize{}-0.1394} & {\scriptsize{}-0.1745} & {\scriptsize{}-0.0994} & {\scriptsize{}-0.1659} & {\scriptsize{}-0.1964} & {\scriptsize{}-0.1340}\tabularnewline
{\scriptsize{}sd\_ax2} & {\scriptsize{}-0.1185} & {\scriptsize{}-0.2186} & {\scriptsize{}-0.0207} & {\scriptsize{}-0.0903} & {\scriptsize{}-0.1583} & {\scriptsize{}-0.0043} & {\scriptsize{}-0.1832} & {\scriptsize{}-0.2211} & {\scriptsize{}-0.1406} & {\scriptsize{}-0.1903} & {\scriptsize{}-0.2234} & {\scriptsize{}-0.1537}\tabularnewline
{\scriptsize{}cov\_ax1} & {\scriptsize{}0.4633} & {\scriptsize{}X} & {\scriptsize{}0.7828} & {\scriptsize{}0.5667} & {\scriptsize{}0.4144} & {\scriptsize{}0.8386} & {\scriptsize{}0.1927} & {\scriptsize{}X} & {\scriptsize{}0.4568} & {\scriptsize{}X} & {\scriptsize{}X} & {\scriptsize{}0.3213}\tabularnewline
{\scriptsize{}cov\_ax2} & {\scriptsize{}0.5203} & {\scriptsize{}X} & {\scriptsize{}0.8248} & {\scriptsize{}0.5319} & {\scriptsize{}0.3654} & {\scriptsize{}0.8218} & {\scriptsize{}X} & {\scriptsize{}X} & {\scriptsize{}0.4190} & {\scriptsize{}X} & {\scriptsize{}X} & {\scriptsize{}0.3217}\tabularnewline
\multicolumn{3}{c}{\textbf{\scriptsize{}Regression coefficients: Cloth}} &  &  &  &  &  &  &  &  &  & \tabularnewline
{\scriptsize{}$R^{2}:\overline{X}_{i},z_{i}$ on $\eta_{i}$} & {\scriptsize{}11.2428} & {\scriptsize{}10.9458} & {\scriptsize{}14.5224} & {\scriptsize{}0.3753} & {\scriptsize{}-0.0344} & {\scriptsize{}1.0855} & {\scriptsize{}0.8791} & {\scriptsize{}0.6307} & {\scriptsize{}1.2056} & {\scriptsize{}1.1260} & {\scriptsize{}0.9142} & {\scriptsize{}1.3417}\tabularnewline
{\scriptsize{}$\overline{X}_{i}$} & {\scriptsize{}0.1716} & {\scriptsize{}0.0975} & {\scriptsize{}0.2610} & {\scriptsize{}0.0015} & {\scriptsize{}-0.0509} & {\scriptsize{}0.0574} & {\scriptsize{}-0.0925} & {\scriptsize{}-0.1245} & {\scriptsize{}-0.0591} & {\scriptsize{}-0.1488} & {\scriptsize{}-0.1738} & {\scriptsize{}-0.1200}\tabularnewline
{\scriptsize{}age\_women} & {\scriptsize{}0.2658} & {\scriptsize{}-0.4693} & {\scriptsize{}1.1146} & {\scriptsize{}-0.3608} & {\scriptsize{}-0.6580} & {\scriptsize{}-0.0759} & {\scriptsize{}-0.2666} & {\scriptsize{}-0.4136} & {\scriptsize{}-0.0894} & {\scriptsize{}-0.1123} & {\scriptsize{}-0.2224} & {\scriptsize{}0.0056}\tabularnewline
{\scriptsize{}age\_men} & {\scriptsize{}0.1706} & {\scriptsize{}-0.3577} & {\scriptsize{}0.8011} & {\scriptsize{}0.5223} & {\scriptsize{}0.3033} & {\scriptsize{}0.8700} & {\scriptsize{}0.2860} & {\scriptsize{}0.1753} & {\scriptsize{}0.4254} & {\scriptsize{}0.1840} & {\scriptsize{}0.0961} & {\scriptsize{}0.2934}\tabularnewline
{\scriptsize{}children2} & {\scriptsize{}-0.5101} & {\scriptsize{}-0.8971} & {\scriptsize{}-0.1832} & {\scriptsize{}0.0950} & {\scriptsize{}-0.0863} & {\scriptsize{}0.3863} & {\scriptsize{}0.0495} & {\scriptsize{}-0.0380} & {\scriptsize{}0.1572} & {\scriptsize{}0.0623} & {\scriptsize{}0.0008} & {\scriptsize{}0.1369}\tabularnewline
{\scriptsize{}children3p} & {\scriptsize{}-0.3017} & {\scriptsize{}-0.7165} & {\scriptsize{}0.2456} & {\scriptsize{}-0.1110} & {\scriptsize{}-0.2746} & {\scriptsize{}0.1671} & {\scriptsize{}-0.0377} & {\scriptsize{}-0.1448} & {\scriptsize{}0.2903} & {\scriptsize{}-0.0549} & {\scriptsize{}-0.1511} & {\scriptsize{}0.2868}\tabularnewline
{\scriptsize{}age\_children} & {\scriptsize{}-1.1070} & {\scriptsize{}-1.7016} & {\scriptsize{}-0.5342} & {\scriptsize{}-0.0087} & {\scriptsize{}-0.2421} & {\scriptsize{}0.2433} & {\scriptsize{}0.0449} & {\scriptsize{}-0.0820} & {\scriptsize{}0.1722} & {\scriptsize{}-0.0925} & {\scriptsize{}-0.2044} & {\scriptsize{}-0.0050}\tabularnewline
{\scriptsize{}frac\_girl} & {\scriptsize{}-0.0222} & {\scriptsize{}-0.3256} & {\scriptsize{}0.3410} & {\scriptsize{}-0.3711} & {\scriptsize{}-0.5782} & {\scriptsize{}-0.2411} & {\scriptsize{}0.0926} & {\scriptsize{}0.0196} & {\scriptsize{}0.1633} & {\scriptsize{}0.0620} & {\scriptsize{}0.0028} & {\scriptsize{}0.1103}\tabularnewline
{\scriptsize{}edu\_women} & {\scriptsize{}-0.0821} & {\scriptsize{}-0.5592} & {\scriptsize{}0.3867} & {\scriptsize{}-0.0727} & {\scriptsize{}-0.2605} & {\scriptsize{}0.1185} & {\scriptsize{}-0.2170} & {\scriptsize{}-0.3154} & {\scriptsize{}-0.1348} & {\scriptsize{}-0.1029} & {\scriptsize{}-0.1815} & {\scriptsize{}-0.0414}\tabularnewline
{\scriptsize{}edu\_men} & {\scriptsize{}-0.5244} & {\scriptsize{}-0.9658} & {\scriptsize{}-0.1072} & {\scriptsize{}0.0746} & {\scriptsize{}-0.1260} & {\scriptsize{}0.4180} & {\scriptsize{}0.0420} & {\scriptsize{}-0.0497} & {\scriptsize{}0.1742} & {\scriptsize{}0.0167} & {\scriptsize{}-0.0537} & {\scriptsize{}0.1036}\tabularnewline
\end{tabular}
\par\end{centering}
\caption{Summary statistics for log resource shares and their projection on
demographics and budget.}
\end{sidewaystable}

\end{document}